\newtheorem{corollary}{Corollary}
\newtheorem{assumption}{Assumption}
\newtheorem{thm}{\bf Theorem}
\begin{document}

\title{PPFPL:  Cross-silo Privacy-preserving   Federated Prototype Learning Against Data Poisoning Attacks}

\author{Hongliang Zhang, Jiguo Yu,~\IEEEmembership{Fellow,~IEEE},  Fenghua Xu, Chunqiang Hu, Yongzhao Zhang, Xiaofen Wang, Zhongyuan Yu, Xiaosong Zhang

\thanks{This work was supported by the National Natural Science Foundation of China under Grants 62272256 and 62202250, the Major Program of Shandong Provincial Natural Science Foundation for the Fundamental Research under Grant ZR2022ZD03, the National Science Foundation of Shandong Province under Grant ZR2021QF079, the Colleges and Universities 20 Terms Foundation of Jinan City under Grant 202228093, and the Shandong Province Youth Innovation Team Project underGrant 2024KJH032. ($\textit{Corresponding author}$: $\textit{Jiguo Yu}$.)}
\thanks{H. Zhang is with the School of Computer Science and Technology, Qilu University of Technology, Jinan, 250353, China, Email: b1043123004@stu.qlu.edu.cn.}
\thanks{J. Yu is with  School of Computer Science and Engineering, University of Electronic Science and Technology of China, Chengdu, 611731, China, and also with the Big Data Institute, Qilu University of Technology, Jinan, 250353, China, Email: jiguoyu@sina.com; jiguoyu17@uestc.edu.cn.}
\thanks{F. Xu is with the Cyber Security Institute, University of Science and Technology of China,  Hefei, 230026, China, Email: nstlxfh@gmail.com.}
\thanks{C. Hu is with the School of Big Data and Software Engineering, Chongqing University,  Chongqing, 400044, China, Email:  chu@cqu.edu.cn.}
\thanks{Y. Zhang, X. Wang and X. Zhang are with the School of Computer Science and Engineering, University of Electronic Science and Technology of China,  Chengdu, 611731, China, Email:  zhangyongzhao@uestc.edu.cn, xfwang@uestc.edu.cn, johnsonzxs@uestc.edu.cn.}
\thanks{Z. Yu is with the College of computer science and technology, China University of Petroleum,  Qingdao, 266580, China, Email:  yuzhy24601@gmail.com.}
}
\markboth{}%
{}


\maketitle

\thispagestyle{fancy}

\begin{abstract}
Privacy-Preserving Federated Learning (PPFL)  enables  multiple clients to collaboratively train models  by submitting secreted model updates.
 Nonetheless, PPFL is vulnerable to data poisoning attacks due to its  distributed training paradigm   in cross-silo scenarios.
Existing solutions have struggled to improve the performance of  PPFL under   poisoned Non-Independent and Identically Distributed (Non-IID) data.
To address the issues, this paper proposes a privacy-preserving federated prototype learning framework, named  PPFPL,  which   enhances the cross-silo FL performance  against  poisoned Non-IID data  while protecting client privacy. 
Specifically, we adopt prototypes as client-submitted model updates to eliminate the impact  of poisoned data distributions.
In addition, we design a secure aggregation protocol  utilizing homomorphic encryption to achieve Byzantine-robust aggregation on two servers, significantly  reducing  the impact of malicious clients.
Theoretical analyses confirm the convergence and privacy  of PPFPL.
Experimental  results on public  datasets show that PPFPL effectively resists data poisoning attacks under  Non-IID settings.
\end{abstract}

\begin{IEEEkeywords}
Privacy-Preserving,    Federated Learning, Cross-Silo, Data Poisoning Attacks,  Poisoned Non-IID Data.
\end{IEEEkeywords}

\section{Introduction}
\label{sec:introduction}
Federated Learning (FL)  is a distributed learning paradigm  where each client shares its model updates instead of  raw training data.
In industrial applications, large volumes of data are distributed across independent organizations governed by  strict privacy regulations  \cite{10518181}.
To  break data silos among organizations without compromising privacy,  cross-silo FL provides a viable solution for industrial scenarios \cite{kalloori2024towards}\cite{jin2023federated}.
Specifically, in cross-silo FL,   clients   are usually large organizations, resulting in a relatively small number of clients with significant  computational capabilities.
However, client-submitted model  updates are vulnerable to  privacy  attacks, threatening the  security of cross-silo FL \cite{rao2024privacy}\cite{10767289}.
To mitigate  the privacy  risks,  Differential Privacy (DP)-based \cite{10400897}\cite{ling2024efficient}  and Homomorphic Encryption (HE)-based \cite{10989521}\cite{10733641}  PPFL approaches are  proposed.
In particular,  DP-based approach is commonly  applied to cross-device FL  due to its low computation overhead, but degrades  the FL performance by injecting noise.
Conversely,  HE-based  approach  provides   higher privacy security without sacrificing  the global model accuracy,   making it more suitable for cross-silo FL.

While  HE-based  approaches  have  demonstrated   their effectiveness   in terms of  privacy preservation, they are  susceptible  to data poisoning attacks \cite{10274102}\cite{9900151}.
Specifically, malicious clients  launch data poisoning attacks by tampering with their raw training data and submitting  model updates derived from the poisoned data, thereby   degrading   the performance of  PPFL \cite{10399805}\cite{zhu2023adfl}.
Moreover, privacy-preserving techniques obscure the model updates from malicious clients, making data poisoning attacks more concealed and  difficult to defend.
To audit secreted  model updates, existing methods \cite{hao2021efficient,9524709,9849010,10531276,10777580} utilize HE  and Secure Multi-party Computing (SMC) to  identify   malicious model updates within  ciphertext.
 However, these works overlook  the data heterogeneity in FL, i.e., the data among clients is typically non-independent and identically distributed (Non-IID). 
The Non-IID data leads to the inconsistency among benign model updates, making it difficult  for these defense methods \cite{hao2021efficient,9524709,9849010,10531276,10777580} to distinguish whether the deviations are caused by data heterogeneity or by malicious  manipulation.
To  distinguish between benign   and malicious  updates under Non-IID data, existing methods \cite{9762272,10250861,liu2025antidotefl} employ    clustering or adaptive aggregation weighting operation to mitigate the inconsistency of model updates during the auditing phase.
However, in  data poisoning attacks, malicious clients manipulate  the features and labels of  their  training data, thereby forming poisoned Non-IID data.
These manipulations distort the optimization direction of local  training,  causing these methods \cite{9762272,10250861,liu2025antidotefl} to optimize toward the  tampered   Non-IID distribution, severely   degrading the performance of FL.
Therefore, a critical challenge remains: how to enhance  the performance  of  PPFL under  poisoned Non-IID data while effectively   resisting data poisoning attacks.

Inspired by  prototype learning \cite{yang2018robust},  several  works \cite{tan2022fedproto,guo2024dynamic,huang2023rethinking,mu2023fedproc} have introduced  prototype aggregation  to address  the Non-IID problem in FL by exchanging prototypes between servers and clients.
Each  prototype represents a class-level feature,   computed as the mean of the feature representations of samples within  the same class.
For instance, the work  in \cite{huang2023rethinking} suggests that in recognizing the  class ``cat,'' different clients have their own unique ``imaginary picture'' or ``prototype'' to represent the concept of ``cat''.
By prototype exchange, clients  gain more knowledge about the concept of ``cat''.
Consequently,   prototypes, as feature representations independent of local data distributions, motivate our work.

In cases of  poisoned Non-IID data,  both tampered data features and    distributions cannot be repaired  since the server cannot control malicious client behavior.
This prompts us to pose a question: \textit{Is it possible to design a PPFL that leverages prototype learning, so that client-submitted model updates are affected only by  tampered data features rather than by  tampered data distribution, while  incorporating   a secure aggregation protocol to  eliminate the impact of tampered data features,  thereby achieving Byzantine-robust results  (i.e., guaranteeing reliability under arbitrary and potentially malicious behaviors in distributed computing systems \cite{yin2018byzantine})?}
To answer this question, we propose a \textbf{P}rivacy-\textbf{P}reserving \textbf{F}ederated \textbf{P}rototype \textbf{L}earning framework, named PPFPL, which is  suitable for  cross-silo scenarios.
The PPFPL framework consists of two ``non-colluding'' servers (assumed to be honest but curious) and multiple clients.
The key novelty of PPFPL lies in: $(\textbf{\text{i}})$ 
We propose a novel local optimization function leveraging prototype learning, where  each client submit prototypes to two  servers,  which mitigates the impact  of tampered data distributions.
Unlike existing prototype-based works \cite{tan2022fedproto}\cite{guo2024dynamic}, our optimization function employs cosine similarity to constrain  the influence of malicious clients (see the proof of  Theorem \ref{14785414}).
$(\textbf{\text{ii}})$
 We design a secure aggregation protocol between  two  servers to aggregate client-submitted prototypes using  HE and SMC techniques.
  This protocol achieves Byzantine-robust results while preserving  the privacy of benign clients.
  
 Our main  contributions   are  summarized as follows.
\begin{itemize}
 \item  
To the best of our knowledge, this is the first work that introduces prototype learning into PPFL to defend against data poisoning attacks.
 By transferring prototypes between clients and  servers, our framework mitigates the impact of  tampered data distribution and enhances the FL performance in poisoned Non-IID data.

  \item We employ HE and SMC techniques to design a secure aggregation protocol across two servers that filters malicious prototypes, thereby ensuring Byzantine-robust results while protecting the privacy of benign clients.
Notably, existing studies have not provided solutions for privacy protection in prototype-based federated learning, and our work fills this important research gap.

    \item  We establish theoretical guarantees on convergence and privacy for PPFPL,  thereby ensuring the framework's feasibility.

  \item Compared to existing methods, the superiority of our framework has been empirically validated in poisoned Non-IID data.
\end{itemize}

The rest of the paper is organized as follows. Section \ref{sec:RELATED WORK} reviews  PPFL works  against poisoning attacks. Section \ref{sec:PRELIMINARIES} introduces prototype learning  and HE. Section \ref{sec:Problem Formulation} formalizes both  the system model and threat model. Our PPFPL is detailed  in Section \ref{sec:Detailed design of PPFPL}. Section \ref{sec:Aggregation and security analysis} provides   theoretical analysis. Section \ref{sec:Experiments} reviews  experimental results. Finally, Section \ref{sec:concludesq} concludes this paper.

\section{RELATED WORK}
\label{sec:RELATED WORK}
\textit{Privacy preservation in FL:}
Although  FL inherently provides a degree of privacy protection, it remains  vulnerable to  privacy attacks, which causes the privacy threat of benign clients.
To resist such  attacks, DP-based and HE-based  approaches have been proposed to preserve client-submitted model updates.
Specifically,  DP-based schemes  \cite{3394533,zhao2020local,9347706,WANG2022446} deploy local differential privacy into model updates, ensuring  privacy without compromising the utility of model updates.
Despite their low computation overheads, these schemes introduce Gaussian noise or Laplace noise into  local model training, which inevitably degrades  the FL performance   to some extent.
In contrast, HE is a commonly adopted cryptographic primitive in across-silo PPFL that provides strong privacy preservation without sacrificing  the global model's accuracy.
Fang and Qian are one of the first scholars to implement PPFL using HE \cite{fi13040094}. They proposed a multi-party machine learning scheme using Paillier \cite{paillier1999public} technique  without compromising clients' privacy.
Considering the heavy communication overhead of Paillier, the work  in \cite{22818}  proposed a privacy-preserving FL  using CKKS (i.e., Cheon-Kim-Kim-Song) that  reduces  computational overhead associated with ciphertexts.
This is because CKKS is more efficient and better suitable to handle large-scale vector and multi-parameter network models compared to Paillier \cite{li2021security}.
However,  the aforementioned schemes   overlook  the threat of data poisoning attacks caused by distributed training.

\textit{Resisting  data poisoning attacks in FL:}
Tolpegin et al. demonstrated that  data poisoning attacks can severely  reduce the classification accuracy of FL, even with a small percentage of malicious clients \cite{10007}.
Additionally, they proposed a defense strategy during aggregation  that    identifies malicious clients  to circumvent data poisoning attacks.
Similarly, the works in \cite{blanchard2017machine,mu2024feddmc,10549523} introduced  detection  mechanisms  to filter  malicious model updates. 
Differently, Doku et al. employed  Support Vector Machine (SVM)  to  audit  client's local training data for excluding malicious clients \cite{9369581}.
However, this work violates the privacy of  clients to some extent.
Furthermore, the threats of  privacy  and data poisoning attacks usually coexist in  practice across-silo FL.
While PPFL approaches employ cryptographic primitives to ensure client privacy,  they mask data poisoning attacks from malicious clients.

\textit{Resisting  data poisoning attacks in PPFL:}
Considering the threats of both privacy and data poisoning attacks,  the works such as \cite{9524709,hao2021efficient,9849010,10531276,10777580} integrate  HE and SMC to detect  anomalous model updates directly within  ciphertexts.
For instance,
the work in \cite{9524709} proposed  a privacy-enhanced FL  framework that adopts HE as the underlying technology and provides  two servers   to punish malicious clients  via   gradient extraction of  logarithmic function.
Similarly,  the work in \cite{hao2021efficient}  designed a validity checking protocol for ciphertexts under  two servers, which protects data privacy and \ adjust the weight of clients' gradients to weaken  data poisoning attacks.
Notably, these works \cite{9524709,hao2021efficient,9849010,10531276,10777580} adopt a two-server  architecture instead of a single server, which  facilitates secure aggregation in  adversarial scenarios.
However, they neglect  the deviations  caused by Non-IID data, making it difficult to distinguish whether the deviations come from malicious manipulation or Non-IID data.


To audit malicious  updates in  PPFL with Non-IID data,  the works in \cite{9762272}\cite{10250861} are proposed to eliminate the deviations caused by Non-IID data during the aggregation stage.
Specifically,  ShieldFL  designed a Byzantine-tolerant aggregation mechanism to  prevent misjudgments  on outliers caused by Non-IID data \cite{9762272}.
Furthermore,  Chen et al.  adopt  clustering  combined with  cosine similarity and median strategies to eliminate  deviations  among model updates during aggregation auditing \cite{10250861}.
These  schemes can only resist model poisoning attacks confronted by federated learning with Non-IID data, but they cannot essentially improve the performance of FL on Non-IID data.
They need to be combined with  specialized   training techniques  designed for  Non-IID data (e.g., FedProx\cite{li2020federated}, FedDyn\cite{acar2021federated}, or FedLC\cite{zhang2022federated}, etc.)   to radically improve performance of FL.
These specialized techniques  introduce   auxiliary terms into  local optimization function to constrain client model updates,  thereby improving  consistency of model updates during local model training.
However, in data poisoning attacks, malicious clients tamper with the features and labels of their  training data,  generating  poisoned Non-IID data. This manipulation misleads specialized   training techniques  to  adjust model updates based on these compromised inputs, which ultimately   cause the global model to converge to a malicious objective.
Therefore, the above works \cite{9762272}\cite{10250861} cannot effectively integrate these specialized techniques  to resist data poisoning attacks while enhancing the FL performance  under poisoned Non-IID data.

\textit{Federated prototype learning:}
Recently, prototype learning  has  been   applied in federated learning to address the Non-IID issue.
Specifically,   the works in \cite{tan2022fedproto,huang2023rethinking,guo2024dynamic,mu2023fedproc} are  one of the first to  propose federated prototype learning using the concept  of prototype learning.
Different from the specialized training techniques (e.g., FedProx, FedDyn, and FedLC),  the core idea enables clients to pull the same-class samples  towards the global prototypes of that class and away from the global prototypes of other classes.
In other words, each class holds  its corresponding   prototype that is independent of other classes.
Consequently, client-submitted prototype   is  affected by the samples but is independent of data distribution among clients.
This insight motivates our work.

Although previous works   \cite{tan2022fedproto,huang2023rethinking,guo2024dynamic,mu2023fedproc} have employed prototype learning to enhance the FL performance  on Non-IID  data,  they are vulnerable to data poisoning attacks in distributed scenarios.
Specifically, malicious clients can compromise the performance of federated training by uploading their poisoned prototypes.
Furthermore, existing work \cite{103612072} designed a dynamic memory  model inversion attack that can recover the private training data by utilizing  client's learned prototypes, which seriously damages the client's privacy.
To this end, our proposed PPFPL   mitigates  the performance degradation caused by data poisoning attacks on Non-IID  data while ensuring client privacy.


\section{PRELIMINARIES}
\label{sec:PRELIMINARIES}
This section introduces  prototypes in federated  learning and CKKS technology.
In addition,   TABLE \ref{notation} provides explanations of the acronyms used in this paper.
\begin{table}[]
 \caption{Summary of main acronyms.}
\centering
\tabcolsep=0.15cm
\scalebox{0.76}{
\begin{tabular}{c|c|c|c}
\hline Notation & Meaning &Notation & Meaning  \\
\hline  FL   & Federated Learning & PPFL   & Privacy-Preserving Federated Learning \\
 \hline
 HE   & Homomorphic Encryption & CKKS & Cheon–Kim–Kim–Song encryption  \\
 \hline
 SMC  & Secure Multi-Party Computation &  DP   & Differential Privacy \\
 \hline
 KGC  & Key Generation Center & SVM  & Support Vector Machine \\
\hline
 \multirow{2}{*}{Non-IID }  &  \multirow{2}{*}{\makecell{Non-Independent and \\ Identically Distributed}}    &  \multirow{2}{*}{RSA}   & \multirow{2}{*}{\makecell{Rivest–Shamir–Adleman cryptosystem}} \\
 &  &    &  \\
 \hline
\end{tabular}}
\label{notation}
\end{table}

\subsection{Prototypes Meet Federated  Learning} \label{sec: Federated Prototype Learning}
In the classification task of prototype learning, the prototype is defined as a feature vector representing a specific  class \cite{Zhu2021CVPR}.
This inherent property ensures that prototypes of the same class  are similar in  FL task.
Consequently, many FL schemes \cite{tan2022fedproto,huang2023rethinking,guo2024dynamic,mu2023fedproc}  employ prototype learning to address Non-IID challenges,   enabling clients to align their local prototypes with global class representations during local model training.

To  understand the prototype  calculation  in federated prototype learning, we  introduce some basic  notations below.
Let $\mathcal{S}$ be the set of clients, where  each client $m \in \mathcal{S}$  owns a   private dataset, denoted as \scalebox{0.8}{$\mathcal{D}_m =\{(\textit{\textbf{x}}_{(i)},y_{(i)})\}^{|\mathcal{D}_m|}$}. Here, $|\mathcal{D}_m|$ is the number of samples in client $m$, and $(\textit{\textbf{x}}_{(i)},y_{(i)})$ denotes sample $i$ in dataset, where $\textit{\textbf{x}}_{(i)}$ and $y_{(i)}$ correspond to the feature vector and class label of sample $i$, respectively.
Meanwhile, let $\mathcal{I}$ be   the set of classes in  classification task, where each class $k$ belongs to $\mathcal{I}$.
In  classification task, the local model  includes a feature extractor and a decision classifier.
Specifically, the  feature extractor transforms   raw sample features into compressed representations, while the decision classifier maps the compressed features to get  classification results.
Formally, let $f_m(\textit{\textbf{r}}_{m,t};\cdot)$ be feature extractor for client $m$, parameterized by $\textit{\textbf{r}}_{m,t}$, where $t$ denotes the $t$-th communication round.
Given the feature $\textit{\textbf{x}}_{(i)}$ of  sample $i$,  it is input into feature extractor to obtain  compressed feature  $\textit{\textbf{u}}_{(i)}= f_m(\textit{\textbf{r}}_{m,t};\textit{\textbf{x}}_{(i)})$.
Let $g_m(\textit{\textbf{z}}_{m,t}; \cdot)$ be  decision classifier for client $m$, parameterized by $\textit{\textbf{z}}_{m,t}$. The  classifier maps the compressed feature $\textit{\textbf{u}}_{(i)}$ to   predict the  class $y^\prime=g_m(\textit{\textbf{z}}_{m,t};\textit{\textbf{u}}_{(i)})$.
Thus, we denote the local model as $\mathcal{F}_m((\textit{\textbf{r}}_{m,t},\textit{\textbf{z}}_{m,t});\cdot)=g_m(\textit{\textbf{z}}_{m,t}; \cdot)\circ f_m(\textit{\textbf{r}}_{m,t}; \cdot)$, where $\circ$ denotes composition operator.
For simplicity, we use $\textit{\textbf{w}}_{m,t}$ to denote $(\textit{\textbf{r}}_{m,t},\textit{\textbf{z}}_{m,t})$, so we have $\mathcal{F}_m((\textit{\textbf{r}}_{m,t},\textit{\textbf{z}}_{m,t}); \cdot)= \mathcal{F}_m(\textit{\textbf{w}}_{m,t};\cdot)$, and $\textit{\textbf{w}}_{m,t}$ is considered as  model parameters  for client $m$.
Next, we  present   calculation process of  prototypes.

In federated prototype learning, prototypes can be  categorized into local prototypes (computed by   clients) and global prototypes (aggregated by the server).
Specifically,  each client's goal  is to  align its  local prototypes with global prototypes during local model training.
Each  client computes its local prototype via its training dataset during local model training.
Formally, let $\textit{\textbf{c}}^k_{m,t}$ be the local prototype of class $k \in \mathcal{I}$ at client $m$ in $t$-th communication round, calculated as:
\begin{equation}\scalebox{1}{$
\begin{aligned}
\textit{\textbf{c}}^{k}_{m,t}= \frac{1}{|\mathcal{D}^k_m|}\sum_{(\textit{\textbf{x}}_{(i)},y_{(i)}) \in \mathcal{D}^k_m} f_m(\textit{\textbf{r}}_{m,t};\textit{\textbf{x}}_{(i)}), \forall k \in \mathcal{I},\label{92611}
\end{aligned}$}
\end{equation}
where $\mathcal{D}^k_m$ denotes the dataset  with class $k$ at client  $m$.
$\textit{\textbf{c}}^k_{m,t}$ can be understood as the mean of  compressed features of  samples belonging to class $k$ at client $m$.
Further, the set of local prototypes for all classes at client $m$ is denoted as \scalebox{0.8}{$\{\textit{\textbf{c}}^k_{m,t}\}_{k\in \mathcal{I}}^{|\mathcal{I}|}$}.
After completing local training, each client submits its  local prototypes to the server for aggregation.

 To calculate  global prototype, the server adopts  an averaging operation on local prototypes submitted by clients.
 Thus,  the global prototype $\textit{\textbf{C}}^k_{t+1}$ for  each class   is calculated as follows:
\begin{equation}\scalebox{1}{$
\begin{aligned}
\textit{\textbf{C}}^k_{t+1}& = \frac{1}{|\mathcal{S}|}\sum_{m \in \mathcal{S}} \textit{\textbf{c}}_{m,t}^k, \forall k \in \mathcal{I},
\end{aligned}$}
\end{equation}
where  $|\mathcal{S}|$ denotes the number of clients.
The set of global prototypes for all classes is denoted as \scalebox{0.8}{$\{\textit{\textbf{C}}^k_{t+1}\}_{k\in \mathcal{I}}^{|\mathcal{I}|}$}.
Subsequently, the server distributes the latest global prototypes to each client  to  further train  their local model.



\subsection{CKKS Technique}
A ciphertext is the encrypted form of plaintext data, generated   through an encryption algorithm.
Homomorphic encryption  allows  operations to be performed directly on ciphertexts without decryption. However,  traditional  HE schemes such as RSA \cite{wiener1990cryptanalysis}, ElGamal \cite{tsiounis1998security}, and Paillier are limited in that they only support either additive or multiplicative operations but not both simultaneously.
In contrast, the CKKS technology provides both additive and multiplicative homomorphic encryption, which is known as full HE \cite{10265238}.
In addition, CKKS is known for its efficiency, especially in terms of encryption/decryption speed when applied to large-scale vectors with varying parameter lengths. Therefore, we employ CKKS to ensure client privacy while maintaining computational efficiency.
The CKKS scheme mainly consists of key generation, encoding, encryption,  addition,  multiplication, decryption, and decoding. Here is a brief description of each operation:
\begin{enumerate}
\item  \textit{Key generation.} $\textit{KeyGen}(\cdot)$: Given a  security parameter $\kappa$, the $\textit{KeyGen}(\cdot)$ generates secret key $Sk$, public key $Pk$ and evaluation key  for ciphertext calculations.
\item  \textit{Encoding}. $\textit{Ecd}(\cdot)$: Given a $(\frac{N}{2})$-dimensional vector \textit{\textbf{t}} and a scaling factor, the $\textit{Ecd}(\cdot)$ encodes the factor $\textit{\textbf{t}}$ into a polynomial $\mu$.
\item \textit{Encryption}. $\textit{Enc}(\cdot)$: Given a polynomial $\mu$, the $\textit{Enc}(\cdot)$ encrypts $\mu$ by  the public key $Pk$ and generates a ciphertext $\psi$.
\item \textit{Addition}. $\textit{Add}(\cdot)$:  Given   a pair of ciphertexts  $\psi_1$ and $\psi_2$, the $\textit{Add}(\cdot)$ outputs  a ciphertext $\hat{\psi} = \psi_1 \oplus \psi_2$, where $\hat{\psi}$ is the ciphertext of the sum of  plaintexts of $\psi_1$ and $\psi_2$.
\item   \textit{Multiplication}. \textit{Mult}$(\psi_1,\psi_2, evk)$: Given  a pair of ciphertexts $\psi_1$ and $\psi_2$,  the \textit{Mult}$(\cdot)$ outputs  a ciphertext  $\overline{\psi} = \psi_1 \otimes \psi_2$, where $\overline{\psi}$ is the ciphertext of the product  of  plaintexts of $\psi_1$ and $\psi_2$.
\item \textit{Decryption}. $\textit{Dec}(\cdot)$:  Given a  ciphertext $\psi$, the $\textit{Dec}(\cdot)$ generates a polynomial $\mu$ by the secret key $Sk$.
\item  \textit{Decoding}. $\textit{Dcd}(\cdot)$ : Given  an input polynomial $\mu$ and a scaling factor $s$, the  $\textit{Dcd}(\cdot)$  outputs the vector \textit{\textbf{t}}.
\end{enumerate}
For more details on the implementation principles and procedures of CKKS, please refer to the work in \cite{cheon2017homomorphic}.

\section{Problem Statement}
\label{sec:Problem Formulation}
In the section, we formalize the  PPFPL  framework, define potential threats, and   design goals.
\subsection{PPFPL framework}
The  framework of PPFPL consists of four entities, each of which has its specific function, as shown in Fig. \ref{fig2}.
The interaction among  these entities composes the  operation of  whole system.  
The roles of each entity are outlined  as follows.
\begin{figure}[!t]
\centerline{\includegraphics[width=0.7\columnwidth]{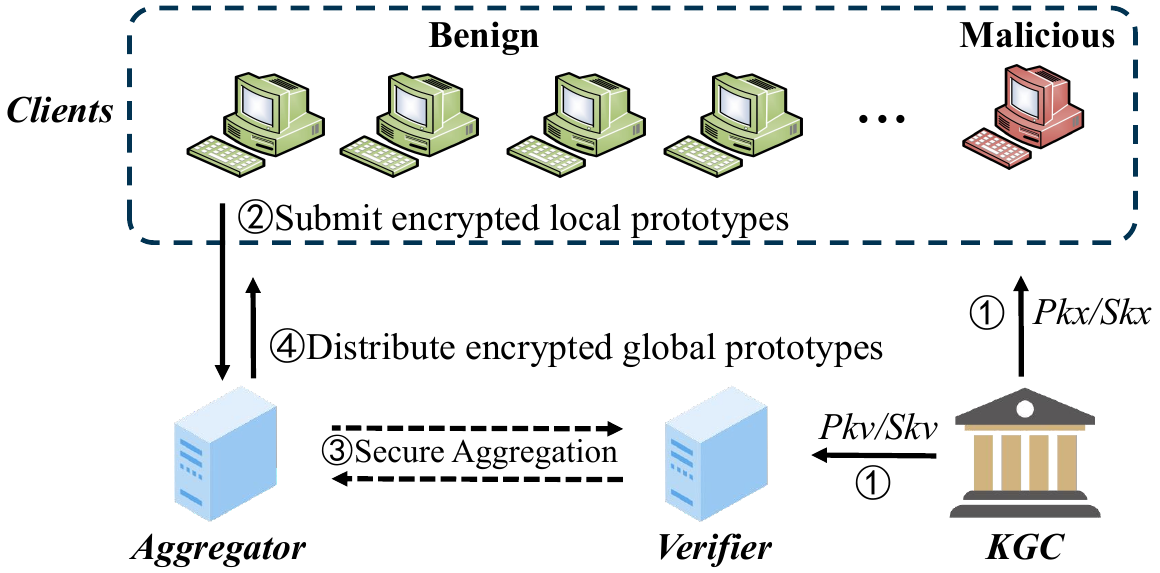}}
\caption{ The  PPFPL framework.
\ding{172}  \textit{KGC} generates $Pkv/Skv$ for  \textit{Verifier} and $Pkx/Skx$ for \textit{Clients}.
\ding{173} After local training is completed,  clients submit  encrypted local prototypes to the \textit{Aggregator}.
\ding{174}  \textit{Verifier} and \textit{Aggregator} perform  a secure  aggregation protocol to get  encrypted global prototypes.
\ding{175}   \textit{Aggregator} distributes the  encrypted global prototypes to \textit{Clients}.
}
\label{fig2}
\end{figure}
\begin{itemize}
\item \textit{Key Generation  Center (KGC)}.
The entity is responsible for generating and managing the keys of both \textit{Clients} and \textit{Verifier}, which are  essential elements to ensure  security of  encryption/decryption process.
\item  \textit{Clients}. The \textit{Clients} are large organizations participating in federated training. The aim of  benign  organizations  is to get a better  model by federated training. They have a pair of   public/secret keys generated by \textit{KGC},  denoted as $Pkx$/$Skx$.
\item \textit{Aggregator}. The \textit{Aggregator} is a central server responsible for aggregating local prototypes submitted by clients.
\item \textit{Verifier}. The \textit{Verifier} is a  non-colluding central server that cooperates with the \textit{Aggregator} to aggregate local prototypes. It has a pair of public/private keys  generated by \textit{KGC},  denoted as $Pkv$/$Skv$.
\end{itemize}

In our framework, we not only define the function of each entity, but also define  potential threats.

\subsection{Potential Threats  for PPFPL}

We discuss  potential threats of   PPFPL    in detail.

\textit{i}) The \textit{KGC} is a trusted institution  (e.g., government, union).

\textit{ii}) The \textit{Aggregator} and \textit{Verifier} are assumed  to be  non-colluding  and  curious but honest.
Specifically,  we assume that  \textit{Aggregator} and \textit{Verifier} do not collude to attack PPFPL.
The assumption  is reasonable in practice, since it is usually impossible for two well-known  service providers to collude with each other due to legal regulations and company reputation \cite{10160091}.
Furthermore, we assume that they  follow the system's protocol but  may attempt  to get sensitive information (i.e., raw training data)  by inferring client-submitted local prototypes.
Notably, although prototypes differ from gradients, traditional inference attack methods cannot recover  client's private training data from  prototype.
Although prototypes differ from gradients, and traditional inference attacks cannot directly reconstruct private training data from them, the work in \cite{103612072} designs a dynamic memory model inversion attack that can recover the private training data by utilizing  client's learned prototypes.
Thus, preserving the privacy of prototypes is  necessary.
Notably, we do not consider security attacks resulting from employee insider threats (e.g.,  compromised employees within the \textit{Aggregator} or \textit{Verifier}).

\textit{iii})
In practice, we cannot guarantee that all clients in the system are honest.
 Therefore, clients can be either benign or malicious.
Specifically, benign clients are ``honest but curious''.
For malicious clients, they can collude  to infer sensitive information about benign clients.
We assume that the proportion of malicious clients is less than 50\%, which is a more realistic threat  in cross-silo FL due to  the  high reputation of large organizations participating.
Furthermore, malicious clients can   launch data poisoning attacks  by submitting   malicious  prototypes derived from  poisoned data.
In federated prototype learning, we define the following two representative types of data poisoning attacks from the perspective of data features and labels.
\begin{itemize}
  \item  \textbf{Feature attacks}.
The purpose of feature attacks is to degrade the performance of federated learning. Specifically, malicious clients tamper with the features  of  their training data to  generate malicious prototypes for uploading, thereby affecting the training results of other clients.
  \item \textbf{Label attacks}.
The malicious clients tamper with the labels of their training samples into other random labels to  form the tampered data distribution, aiming to reduce the performance of the entire federated learning.
\end{itemize}


\subsection{Design Goals of PPFPL}
The goal of our study is to  improve model performance in cross-silo federated learning under poisoned Non-IID data while resisting data poisoning attacks.
Specifically,  we design the PPFPL framework to fulfill the following  goals:
\begin{itemize}
\item
\textbf{Security}. PPFPL should guarantee the robust model training in  the presence of data poisoning attacks with Non-IID data. In other words, the model performance of each benign client is not affected by data poisoning attacks under different data distributions.
\item
\textbf{Privacy}. PPFPL should ensure the privacy and security of  benign clients. For any third entity, they cannot access  the sensitive information about  benign clients.
\item
\textbf{Efficiency}. PPFPL should reduce the number of parameters submitted to two servers compared to other similar schemes, thereby reducing both privacy computation    and communication overheads.
\end{itemize}

\section{Design of  PPFPL}\label{sec:Detailed design of PPFPL}
In this section, we first provide the  overview of PPFPL, and  then describe each step in detail.
\subsection{Overview of  PPFPL}
The execution process of PPFPL  is summarized in  \textbf{Algorithm 1}.
The framework  initializes the number of communication rounds $T$,  the number of local iterations $E$, and  the local model's parameters  $\textit{\textbf{w}}^{\textit{init}}$ for each client. Then,  \textit{Aggregator}, \textit{Verifier}, and \textit{Clients} jointly perform FL training.  Specifically, PPFPL iteratively performs the following two steps:
\begin{itemize}
  \item \textit{Step I. Local Computation:}
  Each client trains its  local model with its  local dataset. Then, the client  normalizes and encrypts its local prototypes before submitting  them to \textit{Aggregator}. 
  \item \textit{Step II. Secure Aggregation Protocol:}
  The two servers  verify the normalization of  encrypted local prototypes submitted by clients, and perform the  secure two-party computation   to get encrypted global prototypes, which are then distributed to each client.
\end{itemize}
The above process  repeats until configured number of communication rounds $T$.
In the following, we describe the process of \textit{Steps I} and   \textit{II}  in detail.
\begin{algorithm}[t]
\begin{small}
    \caption{$\textbf {Overview of  PPFPL}$}
    \label{algorithm:test}
    \LinesNumbered
    \KwIn {$\mathcal{S}$, $\textit{\textbf{w}}^{\textit{init}}$, $E$, $T$.
}
    \KwOut {Model parameter of each client.}
Initialize $T,E,\textit{\textbf{w}}^{\textit{init}}$;\\
 \textit{Aggregator} distributes $T,E,\textit{\textbf{w}}^{\textit{init}}$ to each client;\\
\For{each {\rm  communication round} $t \in \{1,2,\cdots,T\}$}{
// \textbf{Step I}: $\mathtt{ Local \ Computation}$.\\
\For{each {\rm   client } $m \in \mathcal{S}$}{
{\rm Train local model};\\
{\rm Normalize  and encrypt  local prototypes};\\
{\rm Send encrypted  local prototypes to}  \textit{Aggregator};\\
}
// \textbf{Step II}:  $\mathtt{Secure\ Aggregation\ Protocol}$.\\
{\rm Two servers verify  normalization};\\
{\rm Two servers compute global prototypes};\\
\textit{Aggregator} distributes  global prototypes to  each client;\\
}
 \Return{{\rm Model parameter of each client}}
 \end{small}
\end{algorithm}

\subsection{Local Computation}

The local computation step includes two essential stages: \textit{local model training} and \textit{prototype handling} stages. 
The details are outlined in \textbf{Algorithm 2}.

\begin{algorithm}[t]
\begin{small}
    \caption{$\textbf {Local Computation}$}
    \label{algorithm:test}
    \LinesNumbered
    \KwIn {$ \mathcal{S}$, $ \mathcal{D}_m$, $\eta$ ,  $\{\llbracket\textit{\textbf{C}}^{k}_{t}\rrbracket_{Pkx}\}_{k\in \mathcal{I}}^{|\mathcal{I}|}$, $E$.}
    \KwOut {Encrypted local prototypes.}
    \For{each {\rm client} $m$ $\in$ $ \mathcal{S}$}{
	Get  global prototypes $\{\llbracket\textit{\textbf{C}}^{k}_{t}\rrbracket_{Pkx}\}_{k\in \mathcal{I}}^{|\mathcal{I}|}$ from \textit{Aggregator};\\
	Decrypt $\{\llbracket\textit{\textbf{C}}^{k}_{t}\rrbracket_{Pkx}\}_{k\in \mathcal{I}}^{|\mathcal{I}|}$ using its $Skx$;\\
	$ \textit{\textbf{w}}_{m,t}^{(E)} \leftarrow \textbf {Training}$($ \textit{\textbf{w}}_{m,t-1}^{(E)}$, $ \mathcal{D}_m$, $\eta$, $E$, $\{\textit{\textbf{C}}^{k}_{t}\}_{k\in \mathcal{I}}^{|\mathcal{I}|}$);\\
	$\{\llbracket\widetilde{\textit{\textbf{c}}}_{m,t}^{k}\rrbracket_{Pkv}\}_{k\in \mathcal{I}}^{|\mathcal{I}|}\leftarrow\textbf{Handling}(\textit{\textbf{w}}_{m,t}^{(E)}, \mathcal{D}_m)$;\\
	Send \scalebox{0.8}{$\{\llbracket\widetilde{\textit{\textbf{c}}}_{m,t}^{k}\rrbracket_{Pkv}\}_{k\in \mathcal{I}}^{|\mathcal{I}|}$} to $\textit{Aggregator}$;\\
    }
    \Return{$\{\llbracket\widetilde{\textit{\textbf{c}}}_{m,t}^{k}\rrbracket_{Pkv}\}_{k\in \mathcal{I}}^{|\mathcal{I}|}$}
    \end{small}
\end{algorithm}

\subsubsection{Local Model Training}
During local model training, each client aims to minimize its classification loss while aligning its  local prototype close to the global prototype.
To  achieve this, we design  an auxiliary term by  leveraging
 prototype learning  in the local optimization function.
Formally, the local optimization function of client $m$ is defined as:
\begin{equation}\scalebox{1}{$
\begin{aligned}
\mathcal{L}(\textit{\textbf{w}}_m;  \mathcal{D}&_m,\textit{\textbf{c}}^{k}_m,\textit{\textbf{C}}^{k}) =\mathcal{L_S}(\mathcal{F}_m(\textit{\textbf{w}}_m;\textit{\textbf{x}}_{(i)}),y_{(i)})\\&+\lambda \mathcal{L_R}(\textit{\textbf{c}}^{k}_m,\textit{\textbf{C}}^{k}),
\forall (\textit{\textbf{x}}_{(i)},y_{(i)})\in \mathcal{D}_m,\forall k\in \mathcal{I}, \label{925-4}
\end{aligned}$}
\end{equation}
where  $\mathcal{L_S}(\cdot,\cdot)$ is the classification loss function (e.g., cross-entropy loss function), $\lambda$ is the importance weight of the auxiliary term, and $\mathcal{L_R}(\cdot,\cdot)$ is the  auxiliary term, defined as:
\begin{equation}\scalebox{1}{$
\mathcal{L_R}(\textit{\textbf{c}}^{k}_m,\textit{\textbf{C}}^{k}) = \frac{1}{|\mathcal{I}|}\sum_{k\in\mathcal{I}} (1-\textit{sim}(\textit{\textbf{c}}_m^k,\textit{\textbf{C}}^k)),\label{10101}$}
\end{equation}
where $\textit{sim}(\cdot,\cdot)$ denotes cosine similarity between two vectors. The cosine similarity ranges from $-1$ to $1$, where the value closer to $``1"$ indicates similar vector directions, while value closer to $``-1"$ means opposing directions.
The optimization function ensures that each client reduces its classification loss while aligning its local prototype with the global prototype in direction.
Next, we present the process of local model training.

Specifically, in the $t$-th communication round, each   client  $m$ uses its local dataset $ \mathcal{D}_m$ to iteratively train  its local model, as detailed in \textbf{Algorithm 3}.
The superscript $(e)$ indicates the iteration state of variables, where $e \in\{1,\cdots,E\}$.
Each client $m$  uses its local model parameters  $\textit{\textbf{w}}_{m,t-1}^{(E)}$ from the $(t-1)$-th round   as the starting point   in the current $t$-th round. Subsequently, each client iteratively performs the following stages.
\begin{itemize}
  \item  In the $e$-th local iteration, the  client   randomly selects  training data $\mathcal{D}_{m}^{(e)}$ from its local dataset $ \mathcal{D}_m$.
  \item   The client inputs the training data \scalebox{0.8}{$\mathcal{D}_{m}^{(e)}$} into the local model's feature extractor to compute   the local prototype $\textit{\textbf{c}}^{k,(e)}_{m,t}$ via formula (\ref{92611}).
  \item The  client computes the unbiased stochastic gradient by
  		$$\textit{\textbf{g}}_{m,t}^{(e-1)} = \nabla\mathcal{L}(\textit{\textbf{w}}_{m,t}^{(e-1)};  \mathcal{D}_{m}^{(e)},\textit{\textbf{c}}^{k,(e)}_{m,t},\textit{\textbf{C}}^{k}_t),$$
  where $\nabla$ denotes derivation operation. 
  The local model parameters $\textit{\textbf{w}}_{m,t}^{(e-1)}$ are then updated as:
\begin{equation}
\textit{\textbf{w}}_{m,t}^{(e)}  = \textit{\textbf{w}}_{m,t}^{(e-1)} - \eta \textit{\textbf{g}}_{m,t}^{(e-1)},\label{925-3}\nonumber
\end{equation}
where $\eta$ is the local learning rate.
\end{itemize}

After completing $E$ local iterations, each client derives  the updated  local model parameters $\textit{\textbf{w}}_{m,t}^{(E)}$.

\begin{algorithm}[t]
\begin{small}
    \caption{$\textbf {Training}$}
    \label{algorithm:test}
    \LinesNumbered
    \KwIn {$\textit{\textbf{w}}_{m,t-1}^{(E)}$, $ \mathcal{D}_m$, $\eta$, $E$, $\{\textit{\textbf{C}}^{k}_{t}\}_{k\in \mathcal{I}}^{|\mathcal{I}|}$.}
    \KwOut {Local model parameters of each client   $\textit{\textbf{w}}_{m,t}^{(E)}$.}
$\textit{\textbf{w}}_{m,t}^{(0)} = \textit{\textbf{w}}_{m,t-1}^{(E)}$;\\
    \For{each {\rm local iteration} $e$ $\in$ $\{1,2,\cdots,E\}$}{
	    	Randomly sample $\mathcal{D}_{m}^{(e)} \subset \mathcal{D}_m$;\\
 \For{each {\rm class} $k$ $\in$ $\mathcal{I}$}{
Calculate $\textit{\textbf{c}}^{k,(e)}_{m,t}$ from $\mathcal{D}^{k,(e)}_m$ with formula (\ref{92611});\\}
		$\textit{\textbf{g}}_{m,t}^{(e-1)} = \nabla\mathcal{L}(\textit{\textbf{w}}_{m,t}^{(e-1)};  \mathcal{D}_{m}^{(e)},\textit{\textbf{c}}^{k,(e)}_{m,t},\textit{\textbf{C}}^{k}_t)$;\\
		$\textit{\textbf{w}}_{m,t}^{(e)} = \textit{\textbf{w}}_{m,t}^{(e-1)} - \eta \textit{\textbf{g}}_{m,t}^{(e-1)}$;\\
    }
    \Return{$\textit{\textbf{w}}_{m,t}^{(E)}$}
    \end{small}
\end{algorithm}

\subsubsection{Prototype  Handling}
The prototype handling stage consists of three key  phases: \textit{prototype generation}, \textit{normalization}, and \textit{encryption},  as illustrated in \textbf{Algorithm 4}.
\paragraph{Prototype Generation}
Since the parameters of   local model change with each  iteration, the  prototypes generated by each local iteration are different.
Consequently, the local prototypes evolve dynamically during local model training.
To submit more representative prototypes, each client regenerates them via the local model parameters \scalebox{0.8}{$\textit{\textbf{w}}_{m,t}^{(E)}$}.
Formally, let $\textit{\textbf{c}}^{k}_{m,t}$ be the submitted local prototype,  which is computed as:
\begin{equation}
\scalebox{1}{$
\begin{aligned} \textit{\textbf{c}}^{k}_{m,t}= \frac{1}{| \mathcal{D}^k_m|}\sum_{(\textit{\textbf{x}}_{(i)},y_{(i)}) \in  \mathcal{D}^k_m} f_m(\textit{\textbf{r}}_{m,t}^{(E)};\textit{\textbf{x}}_{(i)}), \forall k\in \mathcal{I},
\end{aligned}$}\nonumber
\end{equation}
where \scalebox{0.8}{$\textit{\textbf{r}}_{m,t}^{(E)}$} is  the parameters of  feature extractor.
\paragraph{Normalization}
Considering that malicious clients amplify their submitted local prototypes, the local prototype of each class at  each client  is normalized.
Formally, the normalized local prototype is given by:
\begin{equation}
\widetilde{\textit{\textbf{c}}}^{k}_{m,t}= \textit{\textbf{c}}^{k}_{m,t}/ \Vert\textit{\textbf{c}}^{k}_{m,t}\Vert,\forall m\in \mathcal{S}, k\in \mathcal{I},\label{9271}\nonumber
\end{equation}
where $\widetilde{\textit{\textbf{c}}}^{k}_{m,t}$ is a unit vector.
After normalization, each local prototype is encrypted to ensure privacy protection.

\begin{algorithm}[t]
\begin{small}
    \caption{$\textbf {Handling}$}
    \label{algorithm:test}
    \LinesNumbered
    \KwIn {$\textit{\textbf{r}}_{m,t}^{(E)}$, $\mathcal{D}_m$}
    \KwOut {$\{\llbracket\textit{\textbf{c}}_{m,t}^{k}\rrbracket\}_{k\in \mathcal{I}}^{|\mathcal{I}|}$}
    \For{each {\rm class} $k$ $\in$ $ \mathcal{I}$}{
		$\textit{\textbf{c}}^{k}_{m,t}= \frac{1}{| \mathcal{D}^k_m|}\sum_{(\textit{\textbf{x}}_{(i)},y_{(i)}) \in  \mathcal{D}^k_m}^{| \mathcal{D}^k_m|} f_m(\textit{\textbf{r}}_{m,t}^{(E)};\textit{\textbf{x}}_{(i)})$;\\
		$\widetilde{\textit{\textbf{c}}}^{k}_m= \textit{\textbf{c}}^{k}_{m,t}/ \Vert\textit{\textbf{c}}^{k}_{m,t}\Vert$;\\
Encrypt $\widetilde{\textit{\textbf{c}}}^{k}_{m,t}$ by   \textit{Verifier}'s $Pkv$ to get $\llbracket\widetilde{\textit{\textbf{c}}}^{k}_{m,t}\rrbracket_{Pkv}$;\\
    }
    \Return{$\{\llbracket\textit{\textbf{c}}_{m,t}^{k}\rrbracket\}_{k\in \mathcal{I}}^{|\mathcal{I}|}$ }
    \end{small}
\end{algorithm}

\paragraph{Encryption}
To protect privacy of clients,  they  encrypt their own normalized local prototypes \scalebox{0.8}{$\widetilde{\textit{\textbf{c}}}^{k}_{m,t}$} using the \textit{Verifier}'s public key $Pkv$ to get \scalebox{0.8}{$\llbracket\widetilde{\textit{\textbf{c}}}^{k}_{m,t}\rrbracket_{Pkv}$}, and  sends the \scalebox{0.8}{$\{\llbracket\widetilde{\textit{\textbf{c}}}^{k}_{m,t}\rrbracket_{Pkv}\}_{k\in \mathcal{I}}^{|\mathcal{I}|}$} to  \textit{Aggregator}, where  CKKS technique is used to encrypt.

\subsection{Secure Aggregation  Protocol}
To resist  local prototypes submitted by malicious clients without compromising privacy,  we  design a secure  aggregation protocol across two servers to filter  malicious local prototypes and obtain Byzantine-robust global prototypes.
The protocol consists of \textit{normalization  verification} and \textit{secure two-party computation}.

\subsubsection{Normalization  Verification}
Since malicious clients may amplify the impact of their local prototypes, the two servers need to verify that  encrypted local prototypes are normalized.  
Specifically, \textit{Aggregator}  calculates the inner product \scalebox{0.8}{$\llbracket\widetilde{\textit{\textbf{c}}}^k_{m,t}\rrbracket_{Pkv} \cdot \llbracket\widetilde{\textit{\textbf{c}}}^k_{m,t}\rrbracket_{Pkv}$} for each local prototype, and sends the  results to  \textit{Verifier}, where $\cdot$ denotes the  inner product.
Then, the \textit{Verifier} decrypts  \scalebox{0.8}{$\llbracket\widetilde{\textit{\textbf{c}}}^k_{m,t}\rrbracket_{Pkv}\cdot \llbracket\widetilde{\textit{\textbf{c}}}^k_{m,t}\rrbracket_{Pkv}$} using its secret key $Skv$, and  checks whether the  inner product \scalebox{0.8}{$\Vert\widetilde{\textit{\textbf{c}}}^k_{m,t}\Vert^2$} equals 1.
If the inner product of  prototype from client $m$ is not equal to 1, which indicates that its local prototype is not normalized, client $m$ is removed from the set of clients $\mathcal{S}$.
After validation,  \textit{Verifier} sends the remaining  client set   $\mathcal{S}$ to  \textit{Aggregator}.
\subsubsection{Secure Two-party Computation}
To ensure privacy preservation  and  Byzantine-robust aggregation,  two servers perform secure two-party computation to calculate global prototypes.
The computation of the global prototypes is shown in \textbf{Algorithm 5} and described as follows.

\begin{algorithm}[t]
\begin{small}
    \caption{$\textbf { SecComput}$}
    \label{algorithm:test}
    \LinesNumbered
    \KwIn {$\llbracket\widetilde{\textit{\textbf{c}}}^k_{m,t}\rrbracket_{Pkv}$, $ \mathcal{S}^{}$, $\chi$. }
    \KwOut {$\llbracket \textit{\textbf{C}}^k_{t+1} \rrbracket_{Pkx}$.}
    \A{}{Compute  trusted prototype $\llbracket\textit{\textbf{C}}^{\prime k}_{t+1}\rrbracket_{Pkv}$ by formula (\ref{1447});\\
Compute $\llbracket \textit{sim}^{k}_{m,t}\rrbracket_{Pkv}$ by formula (\ref{925-3});\\
Compute $\llbracket \textit{sim}^{\prime k}_{m,t}\rrbracket_{Pkv}$ and $\llbracket\chi^{\prime}\rrbracket_{Pkv}$;\\
$\llbracket h^{k}_{m,t}\rrbracket_{Pkv} \leftarrow \textbf {OutPut}(\llbracket \textit{sim}^{\prime k}_{m,t}\rrbracket_{Pkv}, \llbracket\chi^\prime\rrbracket_{Pkv})$;\\
Randomly select   a $n$-dimensional vector $ \textit{\textbf{V}}^n$ and a number $p$;\\
Send $p \times\llbracket h^{k}_{m,t}\rrbracket_{Pkv}$and $\textit{\textbf{V}}^n \odot \llbracket\widetilde{\textit{\textbf{c}}}^{k}_{m,t}\rrbracket_{Pkv}$ to \textit{Verifier};
}{}



\V{}{Decrypt $p \times\llbracket h^{k}_{m,t}\rrbracket_{Pkv}$and $\textit{\textbf{V}}^n \odot \llbracket\widetilde{\textit{\textbf{c}}}^{k}_{m,t}\rrbracket_{Pkv}$ with $Skv$;\\
Compute $j^{k}_{m,t}$ by formula (\ref{10051});\\
Compute $\textit{Sum}^k_t$;\\
Encrypt $j^{k}_{m,t}$, $\textit{\textbf{V}}^n \odot \widetilde{\textit{\textbf{c}}}^{k}_{m,t}$ using \textit{Clients}' $Pkx$; \\
Send $\textit{Sum}^k_t$, $\llbracket j^{k}_{m,t}\rrbracket_{Pkx}$, and $\llbracket\textit{\textbf{V}}^n \odot \widetilde{\textit{\textbf{c}}}^{k}_{m,t}\rrbracket_{Pkx}$ to \textit{Aggregator}; \\
}{}
\A{}{Compute $\llbracket \widetilde{\textit{\textbf{c}}}^{k}_{m,t}\rrbracket_{Pkx} \leftarrow\frac{1}{\textit{\textbf{V}}^n} \odot\llbracket \textit{\textbf{V}}^n \odot \widetilde{\textit{\textbf{c}}}^{k}_{m,t}\rrbracket_{Pkx}$;\\
Aggregate global prototype $\llbracket \textit{\textbf{C}}^k_{t+1} \rrbracket_{Pkx}$ by formula (\ref{10111});\\
Distribute global prototype $\llbracket \textit{\textbf{C}}^k_{t+1} \rrbracket_{Pkx}$ to clients;\\}{}
\end{small}
\end{algorithm}

Specifically, malicious clients exploit  poisoned local data to generate  local prototypes with low  credibility.
Conversely, local prototypes submitted by benign clients should have  high credibility.
However, for each class without a trusted prototype direction, it is difficult  to assess  the credibility  of  submitted prototypes.
To this end,  \textit{Aggregator} computes the trusted prototype  via the  formula:
\begin{equation}\scalebox{1}{$
\llbracket\textit{\textbf{C}}^{\prime k}_{t+1}\rrbracket_{Pkv} = \frac{1}{|\mathcal{S}|}\sum_{m \in \mathcal{S}} \llbracket \textit{\textbf{c}}_{m,t}^k \rrbracket_{Pkv}, \forall k \in \mathcal{I},  \label{1447}$}
\end{equation}
where \scalebox{0.8}{$\llbracket\textit{\textbf{C}}^{\prime k}_{t+1}\rrbracket_{Pkv}$} denotes  the trusted prototype for class $k$ in  $(t+1)$-th communication round.
\scalebox{0.8}{$\textit{\textbf{C}}^{\prime k}_{t+1}$} is considered a trusted prototype for two reasons: (i) In terms of magnitude, this is because the normalized malicious prototypes do not affect the magnitude of trusted prototype, only the direction of trusted prototype.
(ii) In terms of direction, the average of all prototypes remains a plausible direction due to the small proportion of malicious clients.
Subsequently,  \textit{Aggregator}  obtains the plaintext
\scalebox{0.8}{$\Vert\textit{\textbf{C}}^{\prime k}_{t+1}\Vert$} for subsequent  cosine similarity computation.
 To achieve the plaintext, it computes the inner product \scalebox{0.8}{$\llbracket\textit{\textbf{C}}^{\prime k}_{t+1}\rrbracket_{Pkv} \cdot \llbracket\textit{\textbf{C}}^{\prime k}_{t+1}\rrbracket_{Pkv}$} and sends it to the \textit{Verifier} for decryption, then receives the decrypted result.
Then, cosine similarity, a widely used  metric for measuring the angle between two vectors, is employed to evaluate the credibility of  local prototype.
If the direction of  local prototype is similar to that of  trusted prototype, it is assigned higher credibility.
Formally,  \textit{Aggregator} computes the credibility of  local prototype  by  the following formula:
\begin{equation}
\scalebox{0.8}{$
\begin{aligned}
\llbracket \textit{sim}^{k}_{m,t}\rrbracket_{Pkv} &=\llbracket\textit{sim}(\widetilde{\textit{\textbf{c}}}^{k}_{m,t},\textit{\textbf{C}}^{\prime k}_{t+1})\rrbracket_{Pkv}= \left\llbracket\frac{\widetilde{\textit{\textbf{c}}}^{k}_{m,t} \cdot \textit{\textbf{C}}^{\prime k}_{t+1}}{\Vert\widetilde{\textit{\textbf{c}}}^{k}_{m,t}\Vert \Vert\textit{\textbf{C}}^{\prime k}_{t+1}\Vert}\right\rrbracket_{Pkv} \\ &=\frac{1}{\Vert\textit{\textbf{C}}^{\prime k}_{t+1}\Vert}\left(\llbracket\widetilde{\textit{\textbf{c}}}^{k}_{m,t}\rrbracket_{Pkv}\cdot\llbracket\textit{\textbf{C}}^{\prime k}_{t+1}\rrbracket_{Pkv}\right),\label{925-3}
\end{aligned}$}
\end{equation}
where $\llbracket \textit{sim}^{k}_{m,t}\rrbracket_{Pkv}$ denotes the credibility of local prototype.
Additionally, we define a detection threshold  $\chi$ and classify  local prototypes with credibility less than $\chi$ as  anomalous local prototypes.
During   global prototype aggregation,  we set the aggregation weight of  anomalous local prototype to 0.
Formally, the aggregation  weight of each local prototype   is calculated as follows:
\begin{equation}
j^{k}_{m,t} =
\begin{cases}
0.& \textit{sim}^{k}_{m,t}<\chi\\
\textit{sim}^{k}_{m,t}.& \textit{sim}^{k}_{m,t}>\chi,
\end{cases}\label{10041}
\end{equation}
where $j^{k}_{m,t}$ denotes the  aggregation weight of local prototype for class $k$ at client $m$ in the  $t$-th communication round.
However,  \textit{Aggregator}  cannot directly compare \scalebox{0.8}{$\llbracket \textit{sim}^{k}_{m,t}\rrbracket_{Pkv}$} with  $\chi$ within the  encryption domain.

To implement  comparison  within  encryption domain, the  work in  \cite{cheon2019numerical} offers a ciphertext comparison method that  outputs  the maximum value of  homomorphic ciphertexts corresponding to two plaintexts in the range (0, 1) without decryption.
However, since   cosine similarity ranges from -1 to 1,   the method cannot be applied directly.
To resolve the range mismatch, \textit{Aggregator} sets \scalebox{0.8}{$\llbracket \textit{sim}^{\prime k}_{m,t}\rrbracket = \frac{1}{2}(\llbracket \textit{sim}^{k}_{m,t}\rrbracket+\llbracket1\rrbracket)$} and \scalebox{0.8}{$\chi^{\prime} = \frac{1}{2}(\chi+1)$} as  input  to the comparison method \cite{cheon2019numerical}. This is because the size relationship between $\llbracket \textit{sim}^{ k}_{m,t}\rrbracket$ and $\chi^{}$  remains unchanged.
Then,   \textit{Aggregator}  employs \textbf{Algorithm 6}  to   calculate  the maximum value $\llbracket h^{k}_{m,t}\rrbracket$ between $\llbracket \textit{sim}^{\prime k}_{m,t}\rrbracket$ and $\llbracket\chi^{\prime}\rrbracket$.
Notably, in the \textbf{Algorithm 6},
$a$ and $b$ are defined as $a = \llbracket \textit{sim}^{\prime k}_{m,t}\rrbracket$ and $b = \llbracket \chi^{\prime}\rrbracket$. From these definitions, $q_1 = \frac{a+b}{2}$ and $q_2 = \frac{a-b}{2}$ are directly computed.
 The parameter $d$ in \textbf{Algorithm 6} is a hyper-parameter. The larger the value of $d$, the more accurate the output. Based on previous work \cite{cheon2019numerical}, setting $d=29$ is sufficient to ensure that the output matches the top-16 bits coincide with those of the true maximum value.
In addition, the design principle process of \textbf{Algorithm 6} is detailed in works \cite{cheon2019numerical}\cite{wilkes1951preparation}.


\begin{algorithm}[t]
\begin{small}
    \caption{$\textbf {OutPut}$}
    \label{algorithm:test}
    \LinesNumbered
    \KwIn {$(\llbracket \textit{sim}^{\prime k}_{m,t}\rrbracket, \llbracket\chi^{\prime}\rrbracket ) \in (\llbracket0\rrbracket,\llbracket1\rrbracket), d \in \mathbb{N}$ }
    \KwOut {an max value of $\llbracket \textit{sim}^{\prime k}_{m,t}\rrbracket$ or $\llbracket\chi^{\prime}\rrbracket$}
Initialize $a = \llbracket \textit{sim}^{\prime k}_{m,t}\rrbracket, b =\llbracket\chi^{\prime}\rrbracket$;\\
 $q_1 = \frac{(a+b)}{2}, q_2 = \frac{(a-b)}{2}$;\\
$a_0 = q_2^2, b_0 = q_2^2-1$;\\
 \For{each $n$ $\in$ $(0,d-1)$}{
		$a_{n+1} = a_{n} (1-\frac{b_n}{2})$;\\
$b_{n+1} = b_{n}^2  (\frac{b_n-3}{4})$;\\
    }
$q_3 = a_d$;\\
  \Return{$ (q_1+q_3) $}
\end{small}
\end{algorithm}

After obtaining  the maximum value $\llbracket h^{k}_{m,t}\rrbracket_{Pkv}$, \textit{Aggregator} collaborates with  \textit{Verifier} to calculate  the aggregation results.
Specifically,  \textit{Aggregator} selects  a random value $p$,  and multiplies it by the maximum value   to get  \scalebox{0.8}{$p \times\llbracket h^{k}_{m,t}\rrbracket_{Pkv}$}.
Additionally,   \textit{Aggregator}   chooses  a random vector $\textit{\textbf{V}}$ with the same dimension as the prototype, and computes its hadamard product with  local prototype to obtain \scalebox{0.8}{$\textit{\textbf{V}} \odot \llbracket\widetilde{\textit{\textbf{c}}}^{k}_{m,t}\rrbracket_{Pkv}$}, where $\odot$ denotes the hadamard product.
Then,  \textit{Aggregator} sends \scalebox{0.8}{$p \times\llbracket h^{k}_{m,t}\rrbracket_{Pkv}$} and \scalebox{0.8}{$\textit{\textbf{V}} \odot \llbracket\widetilde{\textit{\textbf{c}}}^{k}_{m,t}\rrbracket_{Pkv}$} to  \textit{Verifier}.
The \textit{Verifier} then   decrypts these values  by its secret key $Skv$ to get \scalebox{0.8}{$p \times h^{k}_{m,t}$} and \scalebox{0.8}{$\textit{\textbf{V}} \odot \widetilde{\textit{\textbf{c}}}^{k}_{m,t}$}.
Since the random value $p$ and  random vector $\textit{\textbf{V}}$ obfuscate $h^{k}_{m,t}$ and \scalebox{0.8}{$\llbracket\widetilde{\textit{\textbf{c}}}^{k}_{m,t}\rrbracket_{Pkv}$}, respectively,  \textit{Verifier} cannot extract sensitive information about clients from $p \times h^{k}_{m,t}$ and \scalebox{0.8}{$\textit{\textbf{V}} \odot \widetilde{\textit{\textbf{c}}}^{k}_{m,t}$}.
Then, the aggregation weight is rewritten  as  the following formula:
\begin{equation}
 j^{k}_{m,t}=
\begin{cases}
0.&    \textit{Round}(p\times h^{k}_{m,t},6)=\textit{min}^k_t \\
p\times h^{k}_{m,t}&    \textit{Round}(p\times h^{k}_{m,t},6)>\textit{min}^k_t,\label{10051}
\end{cases}
\end{equation}
where $\textit{Round}(p\times h^{k}_{m,t},6)$ denotes that $p\times h^{k}_{m,t}$ is rounded to the 6-th  decimal place.
This is because CKKS   decrypts the ciphertext with an error in the range of $10^{-7}$ \cite{kim2022approximate}.
Moreover, $\textit{min}^k_t$ denotes the smallest value for class $k$ in set \scalebox{0.8}{$\{\textit{Round}(p \times h^{k}_{m,t},6)\}_{m \in \mathcal{S^{}}}^{|\mathcal{S^{}}|}$}.
The condition $\textit{Round}(p\times h^{k}_{m,t},6)=\textit{min}^k_t$ means that the local prototype \scalebox{0.8}{$\textit{\textbf{c}}^{k}_{m,t}$} satisfies $\textit{sim}^{k}_{m,t}<\chi$, thus its aggregation weight $j^{k}_{m,t}$ is 0.
Additionally, \textit{Verifier} calculates \scalebox{0.8}{$\textit{Sum}^k_t=\sum_{m \in \mathcal{S}} j^{k}_{m,t}$} and encrypts $j^{k}_{m,t}$ and \scalebox{0.8}{$\textit{\textbf{V}} \odot \widetilde{\textit{\textbf{c}}}^{k}_{m,t}$} using the client's public key $Pkx$ to get \scalebox{0.8}{$\llbracket j^{k}_{m,t}\rrbracket_{Pkx}$} and \scalebox{0.8}{$\llbracket\textit{\textbf{V}} \odot \widetilde{\textit{\textbf{c}}}^{k}_{m,t}\rrbracket_{Pkx}$}, and sends them to  \textit{Aggregator}.


The \textit{Aggregator} computes \scalebox{0.8}{$\frac{1}{\textit{\textbf{V}}} \odot\llbracket \textit{\textbf{V}} \odot \widetilde{\textit{\textbf{c}}}^{k}_{m,t}\rrbracket_{Pkx}$} to get \scalebox{0.8}{$\llbracket \widetilde{\textit{\textbf{c}}}^{k}_{m,t}\rrbracket_{Pkx}$}, and   aggregates the encrypted local prototypes to get the  encrypted global prototype $\llbracket \textit{\textbf{C}}^k_{t+1} \rrbracket_{Pkx}$ by the following formula:
\begin{equation}
\begin{aligned}
\llbracket \textit{\textbf{C}}^k_{t+1} \rrbracket_{Pkx} &= \frac{1}{\textit{Sum}^k_t}\sum_{m\in  \mathcal{S}} \llbracket j^{k}_{m,t}\rrbracket_{Pkx} \times \llbracket \widetilde{\textit{\textbf{c}}}^{k}_{m,t}\rrbracket_{Pkx}.\label{10111}
\end{aligned}
\end{equation}
Subsequently, the \textit{Aggregator}  distributes the encrypted global prototype to \textit{Clients}.
 After the above process is completed,  FL executes the next communication round until a predefined number of  rounds is reached.

\section{Analysis}
\label{sec:Aggregation and security analysis}
In the section, we provide both convergence analysis and privacy  analysis for PPFPL.
Specifically, we make the following assumptions similar to existing general frameworks  \cite{tan2022fedproto}\cite{wang2020tackling} for the local optimization function (\ref{925-4}).

\begin{assumption}
Each  loss function is $L_1$ Lipschitz smooth, which means that the gradient of  loss function is $L_1$ Lipschitz continuous, we can get
\begin{equation}
\scalebox{0.8}{$
\begin{gathered}
\left\|\nabla \mathcal{L}_{m,t}^{(e_1)}-\nabla \mathcal{L}_{m,t}^{(e_2)}\right\|_2 \leq L_1\left\|\textit{\textbf{w}}_{m,t}^{(e_1)}-\textit{\textbf{w}}_{m,t}^{(e_2)}\right\|_2,\nonumber
\end{gathered}$}
\end{equation}
where  $\mathcal{L}_{m,t}^{(e_1)}$ denotes  loss function at the $(tE+e_1)$-th local iteration in client $m$. This  implies the  quadratic bound:
\begin{equation}
\scalebox{0.8}{$
\begin{aligned}
\mathcal{L}_{m,t}^{(e_1)}&-\mathcal{L}_{m,t}^{(e_2)} \leq\left\langle\nabla \mathcal{L}_{m,t}^{(e_2)},\left(\textit{\textbf{w}}_{m,t}^{(e_1)}-\textit{\textbf{w}}_{m,t}^{(e_2)}\right)\right\rangle +\frac{L_1}{2}\left\|\textit{\textbf{w}}_{m,t}^{(e_1)}-\textit{\textbf{w}}_{m,t}^{(e_2)}\right\|_2^2. \nonumber
\end{aligned}$}
\end{equation}
\end{assumption}

\begin{assumption}
The stochastic gradient \scalebox{0.8}{$\textit{\textbf{g}}_{m, t}^{(e)}=\nabla \mathcal{L}\left(\textit{\textbf{w}}_{m,t}^{(e-1)}; \mathcal{D}_m^{(e)}\right)$} is an unbiased estimator of the local gradient for each client.
Suppose its expectation
\begin{equation}\scalebox{0.8}{$
\mathbb{E}_{\mathcal{D}_m^{(e)} \sim
 \mathcal{D}_m}\left[\textit{\textbf{g}}_{m, t}^{(e)}\right]=\nabla \mathcal{L}\left(\textit{\textbf{w}}_{m, t}^{(e)};  \mathcal{D}_m^{(e)}\right)=\nabla \mathcal{L}_{m,t}^{(e)}, \nonumber$}
\end{equation}
and its variance is bounded by $\sigma^2$:
\scalebox{0.8}{$
\mathbb{E}[\|\textit{\textbf{g}}_{m,t}^{(e)}-\nabla\mathcal{L}(\textit{\textbf{w}}_{m,t}^{(e)})\|_2^2]\leq\sigma^2$}. 
\end{assumption}

Based on the above assumptions, we formulate the following theorem  and corollaries.
Notably, we add ``\scalebox{0.8}{$\frac{1}{2}$}" into the  local iteration, denoted as \scalebox{0.8}{$\{\frac{1}{2},1,\cdots,E\}$} in our analysis.
For example,  $tE$ denotes the time step before local prototype aggregation, and \scalebox{0.8}{$tE + \frac{1}{2}$} denotes the time step between local prototype aggregation and the first local iteration in the $t$-th round.

\begin{thm} \label{14785414}
  In PPFPL,  under any proportion of malicious clients, for the $t$-th communication round, the variation of  loss function for each benign client  can be bounded as:
\begin{equation}\scalebox{1}{$
\begin{aligned}
\mathbb{E}&\left[\mathcal{L}_{m,t+1}^{\frac{1}{2}}\right] -\mathcal{L}_{m,t}^E \leq G(\lambda, \eta, E),\nonumber
\end{aligned}$}
\end{equation}
where 
\scalebox{0.7}{$
\begin{aligned}
 G(\lambda, \eta, E)= -\left(\eta-\frac{ \eta^2L_1}{2}\right)\sum_{e=\frac{1}{2}}^{E}\left\Vert\nabla \mathcal{L}_{m,t}^{(e)}\right\Vert_2^2+\frac{ E \eta^2L_1}{2} \sigma^2 +  2\lambda.\end{aligned}$}
\end{thm}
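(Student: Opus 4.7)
The plan is to decompose one full round of PPFPL training into two phases and bound each separately. Phase I covers the $E$ local SGD iterations carried out by benign client $m$ inside round $t$, producing the trajectory $\textit{\textbf{w}}_{m,t}^{(1/2)} \to \cdots \to \textit{\textbf{w}}_{m,t}^{(E)}$. Phase II is the boundary step at which the global prototype set is refreshed from $\{\textit{\textbf{C}}_t^k\}$ to $\{\textit{\textbf{C}}_{t+1}^k\}$ while the network parameters $\textit{\textbf{w}}_{m,t}^{(E)}$ stay frozen; here only the auxiliary term $\lambda\mathcal{L_R}$ in the local loss can change, since $\mathcal{L_S}$ and the local prototype $\textit{\textbf{c}}_m^k$ are both functions of the (unchanged) parameters and of the client's own data.

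For Phase I, I would start from the quadratic upper bound of Assumption~1 applied between consecutive iterates, substitute the SGD update $\textit{\textbf{w}}_{m,t}^{(e)} = \textit{\textbf{w}}_{m,t}^{(e-1)} - \eta\,\textit{\textbf{g}}_{m,t}^{(e-1)}$, and take conditional expectation over the mini-batch $\mathcal{D}_m^{(e)}$. Assumption~2 kills the cross term through unbiasedness and controls the second-order term via $\mathbb{E}\|\textit{\textbf{g}}_{m,t}^{(e-1)}\|_2^2 \leq \|\nabla\mathcal{L}_{m,t}^{(e-1)}\|_2^2 + \sigma^2$. This yields a per-iteration descent of roughly $-(\eta - \eta^2 L_1/2)\|\nabla\mathcal{L}_{m,t}^{(e-1)}\|_2^2 + \eta^2 L_1 \sigma^2/2$. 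Telescoping over $e = 1,\ldots,E$ and taking total expectation then recovers exactly the first two terms of $G(\lambda,\eta,E)$, namely $-(\eta - \eta^2 L_1/2)\sum_{e}\|\nabla\mathcal{L}_{m,t}^{(e)}\|_2^2 + E\eta^2 L_1 \sigma^2/2$.

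For Phase II, the whole variation collapses to $(\lambda/|\mathcal{I}|)\sum_k [\textit{sim}(\textit{\textbf{c}}_m^k,\textit{\textbf{C}}_t^k) - \textit{sim}(\textit{\textbf{c}}_m^k,\textit{\textbf{C}}_{t+1}^k)]$. Because every client-submitted prototype is forced to unit norm inside the \textbf{Handling} routine and the secure aggregation protocol only outputs a weighted average of such unit vectors (and rejects any violator through normalization verification), every cosine similarity lies in $[-1,1]$ regardless of how the adversary acts. Thus each summand $\textit{sim}(\cdot)-\textit{sim}(\cdot)$ is at most $2$ and the whole Phase II contribution is at most $2\lambda$. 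This is precisely what delivers the ``under any proportion of malicious clients'' clause of the statement: the cosine form of $\mathcal{L_R}$ caps malicious influence by a constant that is oblivious to the malicious fraction.

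The main obstacle will be Phase II, since it is the only part of the argument where adversarial behaviour enters. The uniform $2\lambda$ ceiling depends critically on two design choices of PPFPL working together: the $\ell_2$ normalization enforced both client-side (via $\widetilde{\textit{\textbf{c}}}_{m,t}^k = \textit{\textbf{c}}_{m,t}^k/\|\textit{\textbf{c}}_{m,t}^k\|$) and server-side (via the normalization verification sub-protocol), and the cosine form of $\mathcal{L_R}$ rather than, say, an $\ell_2$ penalty that would scale with prototype magnitude. Replacing either ingredient would let an attacker inflate the auxiliary term arbitrarily and break the constant cap. Once the Phase I and Phase II bounds are summed and constants are collected, the stated inequality with $G(\lambda,\eta,E)$ drops out.
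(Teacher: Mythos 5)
Your proposal is correct and follows essentially the same route as the paper: a telescoped smoothness-plus-variance descent bound over the $E$ local iterations, followed by bounding the prototype-refresh step by $2\lambda$ via $-1\leq \textit{sim}(\cdot,\cdot)\leq 1$. The only minor over-statement is attributing the $[-1,1]$ range of the cosine term to the normalization and verification machinery --- cosine similarity is bounded in $[-1,1]$ for arbitrary nonzero vectors by Cauchy--Schwarz, so the Phase~II cap holds without invoking those protocol components (which matter for the aggregation robustness, not for this bound).
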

We can observe that each variable in this upper bound is independent of the malicious clients, indicating  that the impact of malicious clients is  constrained. 
This constraint arises from incorporating  the cosine similarity  in the auxiliary term of our local optimization function, which restricts the update direction of the local model.
\begin{proof}
Assuming that \textbf{Assumption 1} holds, we can get
\begin{equation}
\scalebox{0.8}{$
\begin{aligned}
\mathcal{L}_{m,t}^{(1)} &\leq \mathcal{L}_{m,t}^{(\frac{1}{2})}+\left \langle\nabla \mathcal{L}_{m,t}^{(\frac{1}{2})},\left(\textit{\textbf{w}}_{m,t}^{(1)}-\textit{\textbf{w}}_{m,t}^{(\frac{1}{2})}\right)\right\rangle+\frac{L_1}{2}\left\|\textit{\textbf{w}}_{m,t}^{(1)}-\textit{\textbf{w}}_{m,t}^{(\frac{1}{2})}\right\|_2^2\\ & \overset{\text{(a)}}{=} \mathcal{L}_{m,t}^{(\frac{1}{2})} - \eta\left\langle\nabla \mathcal{L}_{m,t}^{(\frac{1}{2})},\textit{\textbf{g}}_{m,t}^{(\frac{1}{2})}\right\rangle+\frac{\eta^2 L_1}{2}\left\|\textit{\textbf{g}}_{m,t}^{(\frac{1}{2})}\right\|_2^2,\\\label{102218311}
\end{aligned}$}
\end{equation}
where (a) follows from \scalebox{0.7}{$\textit{\textbf{w}}_{m,t}^{(1)}=\textit{\textbf{w}}_{m,t}^{(\frac{1}{2})} -\eta\textit{\textbf{g}}_{m,t}^{(\frac{1}{2})}$}. Taking expectation on both sides of formula (\ref{102218311}), we can get
\begin{equation}
\scalebox{0.7}{$
\begin{aligned}
&\mathbb{E}[\mathcal{L}_{m,t}^{(1)}] \leq\mathcal{L}_{m,t}^{(\frac{1}{2})}- \eta\mathbb{E}\left[\left\langle\nabla \mathcal{L}_{m,t}^{(\frac{1}{2})},\textit{\textbf{g}}_{m,t}^{(\frac{1}{2})}\right\rangle\right]+\frac{\eta^2 L_1}{2}\mathbb{E}\left[\left\|\textit{\textbf{g}}_{m,t}^{(\frac{1}{2})}\right\|_2^2\right]\\
&\overset{\text{(b)}}{=}\mathcal{L}_{m,t}^{(\frac{1}{2})}-\eta \Vert\nabla \mathcal{L}_{m,t}^{(\frac{1}{2})}\Vert_2^2+\frac{\eta^2 L_1}{2}\mathbb{E}\left[\left\|\textit{\textbf{g}}_{m,t}^{(\frac{1}{2})}\right\|_2^2\right]\\
&\overset{\text{(c)}}{=}\mathcal{L}_{m,t}^{(\frac{1}{2})}-\eta \Vert\nabla \mathcal{L}_{m,t}^{(\frac{1}{2})}\Vert_2^2+\frac{\eta^2 L_1}{2}\left(\left\|\mathbb{E}[\textit{\textbf{g}}_{m,t}^{(\frac{1}{2})}]\right\|^2_2+Var(\textit{\textbf{g}}_{m,t}^{(\frac{1}{2})})\right)\\
&\overset{\text{(d)}}{\leq}\mathcal{L}_{m,t}^{(\frac{1}{2})}-\eta \Vert\nabla \mathcal{L}_{m,t}^{(\frac{1}{2})}\Vert_2^2+\frac{\eta^2 L_1}{2}\left(\left\|\nabla\mathcal{L}_{m,t}^{(\frac{1}{2})}\right\|_2^2+Var(\textit{\textbf{g}}_{m,t}^{(\frac{1}{2})})\right)\\
&=\mathcal{L}_{m,t}^{(\frac{1}{2})}-(\eta-\frac{\eta^2 L_1}{2}) \Vert\nabla \mathcal{L}_{m,t}^{(\frac{1}{2})}\Vert_2^2+\frac{\eta^2 L_1}{2}Var(\textit{\textbf{g}}_{m,t}^{(\frac{1}{2})})\\
&\overset{\text{(e)}}{\leq}\mathcal{L}_{m,t}^{(\frac{1}{2})}-(\eta-\frac{\eta^2 L_1}{2}) \Vert\nabla \mathcal{L}_{m,t}^{(\frac{1}{2})}\Vert_2^2+\frac{\eta^2 L_1}{2}\sigma^2,\\
\label{10221832}\nonumber
\end{aligned}$}
\end{equation}
where (b), (d) and (e)  are derived from \textbf{Assumption 2}, (c) follows from \scalebox{0.8}{$Var(x) = \mathbb{E}[x^2]-(\mathbb{E}[x])^2$}.
During the local computation step, the loss function is updated  $E$ times, and it can be  bounded as:
\begin{equation}
\scalebox{0.8}{$
\mathbb{E}[\mathcal{L}_{m,t}^{(1)}] \leq \mathcal{L}_{m,t}^{(\frac{1}{2})}-(\eta-\frac{\eta^2 L_1}{2}) \sum_{e=\frac{1}{2}}^{E}\left\Vert\nabla \mathcal{L}_{m,t}^{(e)}\right\Vert_2^2+\frac{E\eta^2 L_1}{2}\sigma^2.\\
\label{10221831}$}
\end{equation}
Additionally, since a single communication round involves both local computation and secure aggregation,  we  need to compute the impact of the aggregation result for   loss function of each benign client.
Specifically, the  loss function of each benign client at the $((t + 1)E+\frac{1}{2})$ time step is represented as follows:
\begin{equation}
\scalebox{0.8}{$
\begin{aligned}
&\mathcal{L}_{m,t+1}^{\frac{1}{2}}  =\mathcal{L}_{m,t}^{E}+\mathcal{L}_{m,t+1}^{\frac{1}{2}}-\mathcal{L}_{m,t}^{E} \\
& \stackrel{}{=} \mathcal{L}_{m,t}^{E}+ \lambda \mathcal{L_R}(\textit{\textbf{c}}^k_{m,t+1},\textit{\textbf{C}}^k_{t+2})-\lambda \mathcal{L_R}(\textit{\textbf{c}}^k_{m,t+1},\textit{\textbf{C}}^k_{t+1})\\
& \overset{}{=} \mathcal{L}_{m,t}^{E}- \frac{\lambda}{|\mathcal{I}|}\sum_{k\in\mathcal{I}} \textit{sim}(\textit{\textbf{c}}_{m,t+1}^k,\textit{\textbf{C}}^k_{t+2})+\frac{\lambda}{|\mathcal{I}|}\sum_{k\in\mathcal{I}} \textit{sim}(\textit{\textbf{c}}_{m,t+1}^k,\textit{\textbf{C}}^k_{t+1})\\
&\overset{\text{(f)}}{\leq} \mathcal{L}_{m,t}^{E}+ 2\lambda , \label{10222250}
\end{aligned}$}
\end{equation}
where (f) follows from  \scalebox{0.8}{$-1 \leq \textit{sim}(\cdot,\cdot)\leq 1$}.
Notably, although different proportions of malicious clients can influence the direction of $\textit{\textbf{C}}^k_{t+1}$, the contribution of malicious prototypes is inherently bounded by the cosine similarity  $\textit{sim}(\cdot,\cdot)$.
In other words, regardless of the proportion of the malicious clients, PPFPL suppresses the impact of malicious clients, thereby maintaining the robustness of the aggregated results.

 Taking expectation on both sides of formula (\ref{10222250}), we can get:
 \begin{equation}\scalebox{0.8}{$
\begin{aligned}
\mathbb{E}\left[\mathcal{L}_{m,t+1}^{\frac{1}{2}}\right]
&\leq \mathcal{L}_{m,t}^{E}+ 2\lambda. \label{10222300}
\end{aligned}$}
\end{equation}
Thus,  during the $t$-th communication  round, according to the formula (\ref{10221831}) and formula (\ref{10222300}),  the variation of  loss function for each benign client  can be bounded as:
\begin{equation}\scalebox{1}{$
\begin{aligned}
\mathbb{E}&\left[\mathcal{L}_{m,t+1}^{\frac{1}{2}}\right] -\mathcal{L}_{m,t}^E \leq G(\lambda, \eta, E), \nonumber
\end{aligned}$}
\end{equation}
where  
\scalebox{0.75}{$
G(\lambda, \eta, E)= -\left(\eta-\frac{ \eta^2L_1}{2}\right)\sum_{e=\frac{1}{2}}^{E}\left\Vert\nabla \mathcal{L}_{m,t}^{(e)}\right\Vert_2^2+\frac{ E \eta^2L_1}{2} \sigma^2 +  2\lambda$}.
Thus, \textbf{Theorem 1} is proved.
\end{proof}

\begin{corollary}
Given any fixed $\lambda$ and $E$, the $G(\eta)$ is  convex   with respect to $\eta$.
\end{corollary}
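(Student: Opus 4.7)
The plan is to treat $G$ as a function of the single variable $\eta$ with $\lambda$ and $E$ (and hence the fixed quantities $L_1$, $\sigma^2$, and $\sum_{e=1/2}^{E}\|\nabla\mathcal{L}_{m,t}^{(e)}\|_2^2$) held constant, and then show convexity by the standard second-derivative test. Since convexity is a purely analytic property of a one-variable function, this reduces to checking the sign of $G''(\eta)$.

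First I would rewrite $G(\eta)$ by expanding the factor $\left(\eta - \tfrac{\eta^2 L_1}{2}\right)$ and regrouping terms by powers of $\eta$. Setting $A := \sum_{e=1/2}^{E}\|\nabla \mathcal{L}_{m,t}^{(e)}\|_2^2 \geq 0$, the function becomes
\begin{equation}
G(\eta) = \frac{L_1}{2}\bigl(A + E\sigma^2\bigr)\,\eta^2 \;-\; A\,\eta \;+\; 2\lambda, \nonumber
\end{equation}
which makes explicit that $G$ is a quadratic polynomial in $\eta$.

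Next I would differentiate twice with respect to $\eta$, obtaining $G''(\eta) = L_1\bigl(A + E\sigma^2\bigr)$. Since $L_1 \geq 0$ (it is a Lipschitz constant from \textbf{Assumption 1}), $E \geq 1$, $\sigma^2 \geq 0$ (by \textbf{Assumption 2}), and $A \geq 0$ as a sum of squared norms, the second derivative is nonnegative for every $\eta$, which is the standard characterization of convexity for twice-differentiable functions on $\mathbb{R}$. Thus $G$ is convex in $\eta$ on its entire domain.

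There is no real obstacle here; the only subtlety is making sure all of the terms that depend on $\eta$ are correctly identified before differentiating, which is why the rewriting step above is worth doing explicitly. A brief closing remark could note that $G''(\eta) > 0$ strictly whenever either $\sigma^2 > 0$ or $A > 0$, so away from the degenerate case in which every sampled stochastic gradient is identically zero with zero variance, $G$ is in fact strictly convex, and the optimal learning rate $\eta^* = A / \bigl(L_1(A + E\sigma^2)\bigr)$ obtained by setting $G'(\eta)=0$ is the unique minimizer; this observation is useful for motivating the subsequent corollaries on the choice of $\eta$.
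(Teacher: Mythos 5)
Your proposal is correct and follows essentially the same route as the paper: both compute the second derivative of $G$ with respect to $\eta$ and observe that $G''(\eta) = L_1\sum_{e=\frac{1}{2}}^{E}\Vert\nabla \mathcal{L}_{m,t}^{(e)}\Vert_2^2 + L_1 E\sigma^2 \geq 0$. Your preliminary rewriting of $G$ as an explicit quadratic in $\eta$ and the closing remark on strict convexity are harmless additions, not a different argument.
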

\begin{proof}
To prove that $G(\eta)$ is   convex   for $\eta$, we need to prove that the second order  derivative of $G(\eta)$ with respect to $\eta$ is always nonnegative.
Thus, we have
\begin{equation}
\scalebox{0.8}{$
\begin{aligned}\frac{dG(\eta)}{d \eta} = -\left(1-L_1 \eta\right) \sum_{e=\frac{1}{2}}^{E}\left\Vert\nabla \mathcal{L}_{m,t}^{(e)}\right\Vert_2^2+L_1 E \eta \sigma^2,\end{aligned}$}\nonumber
\end{equation}
and
\begin{equation}
\scalebox{0.7}{$
\begin{aligned}
\frac{d^2G(\eta)}{d \eta^2} = L_1 \sum_{e=\frac{1}{2}}^{E}\left\Vert\nabla \mathcal{L}_{m,t}^{(e)}\right\Vert_2^2+L_1 E  \sigma^2.
\end{aligned}$}\nonumber
\end{equation}
Since $L_1$, $E$, and $\sigma^2$   are all greater than 0, we have \scalebox{0.8}{$\frac{d^2G(\eta)}{d \eta^2} \textgreater 0.$}
Thus, $G(\eta)$ is proved to be a convex function and there exists a minimum value of $G(\eta)$.
\end{proof}

\begin{corollary}
Given any fixed $\lambda$ and $E$, the variation of the  loss function  compared to the previous round exists a minimum bound when $\eta = \eta^ \star $, where \scalebox{0.8}{$\eta^ \star = \frac{ \sum_{e=\frac{1}{2}}^{E}\left\Vert\nabla \mathcal{L}_{m,t}^{(e)}\right\Vert_2^2}{L_1E\sigma^2+L_1 \sum_{e=\frac{1}{2}}^{E}\left\Vert\nabla \mathcal{L}_{m,t}^{(e)}\right\Vert_2^2}$.}
\end{corollary}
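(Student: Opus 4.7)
The plan is to leverage Corollary 1 directly: since $G(\eta)$ has been shown to be convex in $\eta$ for fixed $\lambda$ and $E$, its minimizer is uniquely characterized by the first-order optimality condition $\frac{dG(\eta)}{d\eta} = 0$. So the entire proof reduces to taking the closed-form expression for $\frac{dG(\eta)}{d\eta}$ already derived in the proof of Corollary 1, setting it equal to zero, and solving algebraically for $\eta$.

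Concretely, I would start from
\begin{equation}
\frac{dG(\eta)}{d\eta} = -(1 - L_1 \eta)\sum_{e=\frac{1}{2}}^{E}\bigl\Vert \nabla \mathcal{L}_{m,t}^{(e)} \bigr\Vert_2^2 + L_1 E \eta \sigma^2, \nonumber
\end{equation}
expand the bracket, and group the terms containing $\eta$. This yields a single linear equation in $\eta$, whose solution is obtained by isolating $\eta$ on one side and dividing; a quick simplification then produces exactly the claimed $\eta^\star$. Finally, I would invoke Corollary 1 once more to conclude that this critical point is indeed a global minimum (not merely a stationary point), so that the variation of the loss function admits the announced lower bound $G(\eta^\star)$.

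There is essentially no technical obstacle here: the calculation is a one-line solve of a linear equation, and convexity has already been established. The only thing to be careful about is ensuring that $\eta^\star > 0$, which is immediate since both the numerator $\sum_{e=\frac{1}{2}}^{E}\Vert \nabla \mathcal{L}_{m,t}^{(e)}\Vert_2^2$ and the denominator $L_1 E \sigma^2 + L_1 \sum_{e=\frac{1}{2}}^{E}\Vert \nabla \mathcal{L}_{m,t}^{(e)}\Vert_2^2$ are strictly positive (assuming non-zero gradients), so $\eta^\star$ is a legitimate learning rate. One could additionally remark that $\eta^\star < 1/L_1$, which is consistent with the standard step-size condition for smooth optimization and confirms that the bound in Theorem~\ref{14785414} remains meaningful at this choice.
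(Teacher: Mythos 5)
Your proposal is correct and follows essentially the same route as the paper's own proof: set the first derivative of $G(\eta)$ (computed in Corollary 1) to zero, solve the resulting linear equation for $\eta^\star$, and invoke the convexity established in Corollary 1 to conclude the stationary point is a minimum. Your additional remarks on $\eta^\star > 0$ and $\eta^\star < 1/L_1$ are sound but not present in the paper.
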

\begin{proof}
When  $G^\prime(\eta)$ equals 0, then $G(\eta)$ obtains a extremum value. Let \scalebox{0.8}{$(\frac{dG(\eta)}{d \eta}|\eta = \eta^ \star) = 0$},   we can get:
\begin{equation}\scalebox{0.85}{$
(\frac{dG(\eta)}{d \eta}|\eta = \eta^ \star)= -\left(1-L_1 \eta\right) \sum_{e=\frac{1}{2}}^{E}\left\Vert\nabla \mathcal{L}_{m,t}^{(e)}\right\Vert_2^2+L_1 E \eta \sigma^2  = 0.$} \nonumber
\end{equation}
Thus, $G(\eta)$ exists  a extremum value  when \scalebox{0.8}{$\eta = \eta^ \star = \frac{ \sum_{e=\frac{1}{2}}^{E}\left\Vert\nabla \mathcal{L}_{m,t}^{(e)}\right\Vert_2^2}{L_1E\sigma^2+L_1 \sum_{e=\frac{1}{2}}^{E}\left\Vert\nabla \mathcal{L}_{m,t}^{(e)}\right\Vert_2^2}$}.
Because corollary 1 proves  the second order  derivative of $G(\eta)$ with respect to $\eta$ is always nonnegative, the extremum value is the minimal value.
Therefore, we can understand that it  exists a minimum bound when $\eta = \eta^ \star $.
\end{proof}

\begin{corollary}
Given any fixed $\eta$ and $E$, the loss function of  arbitrary client monotonously decreases  in each communication round  when  \scalebox{0.8}{ $\lambda \textless \left(\frac{1}{2}\eta-\frac{L_1 \eta^2}{4}\right) \sum_{e=\frac{1}{2}}^{E}\left\Vert\nabla \mathcal{L}_{m,t}^{(e)}\right\Vert_2^2-\frac{E \eta^2L_1}{4} \sigma^2.$}
\end{corollary}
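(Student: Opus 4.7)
The plan is to invoke Theorem \ref{14785414} directly and then solve a simple inequality in $\lambda$. A strict per-round decrease of the loss in expectation is equivalent to requiring that the upper bound on $\mathbb{E}[\mathcal{L}_{m,t+1}^{1/2}] - \mathcal{L}_{m,t}^{E}$ established in Theorem \ref{14785414} be negative, so the whole argument reduces to enforcing $G(\lambda,\eta,E) < 0$ with $\eta$ and $E$ treated as fixed constants.

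Concretely, I would first restate the bound
\begin{equation}
G(\lambda,\eta,E) = -\left(\eta - \tfrac{\eta^{2}L_{1}}{2}\right)\sum_{e=\frac{1}{2}}^{E}\left\Vert\nabla \mathcal{L}_{m,t}^{(e)}\right\Vert_{2}^{2} + \tfrac{E\eta^{2}L_{1}}{2}\sigma^{2} + 2\lambda,\nonumber
\end{equation}
observe that only the last term depends on $\lambda$, and isolate $\lambda$ in the inequality $G(\lambda,\eta,E) < 0$. Dividing through by $2$ yields the stated threshold
\begin{equation}
\lambda < \left(\tfrac{1}{2}\eta - \tfrac{L_{1}\eta^{2}}{4}\right)\sum_{e=\frac{1}{2}}^{E}\left\Vert\nabla \mathcal{L}_{m,t}^{(e)}\right\Vert_{2}^{2} - \tfrac{E\eta^{2}L_{1}}{4}\sigma^{2}.\nonumber
\end{equation}
Whenever this holds, $G(\lambda,\eta,E)<0$, and by Theorem \ref{14785414} we obtain $\mathbb{E}[\mathcal{L}_{m,t+1}^{1/2}] < \mathcal{L}_{m,t}^{E}$, which is precisely the claimed monotone decrease across an arbitrary communication round $t$.

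There is essentially no technical obstacle here; the main subtlety, and the only point worth flagging, is ensuring the right-hand side of the threshold is actually positive so that the corollary is non-vacuous. This amounts to checking that $\eta < 2/L_{1}$ (so that the coefficient $\tfrac{1}{2}\eta - \tfrac{L_{1}\eta^{2}}{4}$ is positive) and that the accumulated gradient-norm term dominates the noise term $\tfrac{E\eta^{2}L_{1}}{4}\sigma^{2}$, which is the standard step-size regime used throughout the convergence analysis and is implicit in Corollary 2. Under these conditions the admissible range of $\lambda$ is a nonempty interval, and the monotone decrease follows immediately, completing the proof.
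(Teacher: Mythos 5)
Your proposal is correct and follows exactly the paper's own argument: impose $G(\lambda,\eta,E)<0$ on the bound from Theorem \ref{14785414}, isolate $\lambda$, and divide by $2$. Your added remark about ensuring the threshold is positive (so the admissible range of $\lambda$ is nonempty) is a sensible observation the paper omits, but it does not change the route of the proof.
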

\begin{proof}
To guarantee that the local loss function decreases after each communication, it is necessary to make sure that $G(\lambda) \textless 0$.  We have:
\begin{equation}
\scalebox{0.8}{$
\begin{aligned}
 -\left(\eta-\frac{L_1 \eta^2}{2}\right) \sum_{e=\frac{1}{2}}^{E}\left\Vert\nabla \mathcal{L}_{m,t}^{(e)}\right\Vert_2^2+\frac{E \eta^2L_1}{2} \sigma^2 +  2\lambda \textless 0.
\end{aligned}$}\nonumber
\end{equation}
After simplification, we can get:
\begin{equation}
\scalebox{0.8}{$
\begin{aligned}\lambda \textless \left(\frac{1}{2}\eta-\frac{L_1 \eta^2}{4}\right) \sum_{e=\frac{1}{2}}^{E}\left\Vert\nabla \mathcal{L}_{m,t}^{(e)}\right\Vert_2^2-\frac{E \eta^2L_1}{4} \sigma^2.\end{aligned}$} \nonumber
\end{equation}
\end{proof}

\begin{thm} \label{992139}
Let the proportions of benign and malicious clients be $(1-\kappa)$ and $\kappa$, respectively, with $\kappa \in [0,1]$.  
In the $t$-th round, let $\textit{\textbf{u}}_{ben}^k$ denote the average prototype of benign clients for class $k$, and $\textit{\textbf{u}}_{mal}^k$ denote the average prototype of malicious clients for class $k$. Then, the aggregated global prototype for class $k$ is given by:
$$\textit{\textbf{C}}^k_{t+1} = (1-\kappa) \textit{\textbf{u}}_{ben}^k + \kappa \textit{\textbf{u}}_{mal}^k
$$
and its deviation  in magnitude from the benign prototypes satisfies:
$$
\Vert \textit{\textbf{C}}^k_{t+1} -  \textit{\textbf{u}}_{ben}^k\Vert_2 = \kappa \Vert\textit{\textbf{u}}_{mal}^k- \textit{\textbf{u}}_{ben}^k \Vert_2 \leq 2 \kappa.
$$
\end{thm}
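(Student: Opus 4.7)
The strategy is to reduce Theorem \ref{992139} to two elementary calculations grounded in the client partition and the normalization step of Algorithm 4. First I would partition the client set $\mathcal{S}$ into disjoint subsets $\mathcal{S}_{ben}$ and $\mathcal{S}_{mal}$ with $|\mathcal{S}_{ben}| = (1-\kappa)|\mathcal{S}|$ and $|\mathcal{S}_{mal}| = \kappa|\mathcal{S}|$. Then I would expand the aggregation sum analogous to formula (\ref{1447}), split it along this partition, and identify the two group averages as $\textit{\textbf{u}}_{ben}^k = \frac{1}{|\mathcal{S}_{ben}|} \sum_{m \in \mathcal{S}_{ben}} \widetilde{\textit{\textbf{c}}}_{m,t}^k$ and $\textit{\textbf{u}}_{mal}^k = \frac{1}{|\mathcal{S}_{mal}|} \sum_{m \in \mathcal{S}_{mal}} \widetilde{\textit{\textbf{c}}}_{m,t}^k$. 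Reorganizing the factor $1/|\mathcal{S}|$ into the two group fractions then yields the decomposition $\textit{\textbf{C}}^k_{t+1} = (1-\kappa)\textit{\textbf{u}}_{ben}^k + \kappa \textit{\textbf{u}}_{mal}^k$ directly.

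For the deviation bound, I would subtract $\textit{\textbf{u}}_{ben}^k$ from both sides of this decomposition, use the cancellation $(1-\kappa)\textit{\textbf{u}}_{ben}^k - \textit{\textbf{u}}_{ben}^k = -\kappa\, \textit{\textbf{u}}_{ben}^k$, and factor out $\kappa$ to obtain $\textit{\textbf{C}}^k_{t+1} - \textit{\textbf{u}}_{ben}^k = \kappa(\textit{\textbf{u}}_{mal}^k - \textit{\textbf{u}}_{ben}^k)$. Taking the Euclidean norm immediately gives the stated equality. For the scalar bound $\leq 2\kappa$, I would invoke the normalization step of the local computation phase, which enforces $\Vert \widetilde{\textit{\textbf{c}}}^k_{m,t}\Vert_2 = 1$ for every client $m$. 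Applying the triangle inequality to the averages defining $\textit{\textbf{u}}_{ben}^k$ and $\textit{\textbf{u}}_{mal}^k$ shows that each group-average has Euclidean norm at most $1$. A final triangle inequality $\Vert \textit{\textbf{u}}_{mal}^k - \textit{\textbf{u}}_{ben}^k\Vert_2 \leq \Vert \textit{\textbf{u}}_{mal}^k\Vert_2 + \Vert \textit{\textbf{u}}_{ben}^k\Vert_2 \leq 2$ then closes the argument.

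No significant obstacle is anticipated, since both parts reduce to arithmetic partitioning and the triangle inequality. The only subtlety worth making explicit is that the decomposition treats $\textit{\textbf{C}}^k_{t+1}$ as an equally weighted mean of all submitted prototypes, matching the trusted-prototype aggregation of formula (\ref{1447}); for the credibility-weighted aggregation of formula (\ref{10111}) the conclusion would be at least as strong because anomalous local prototypes receive zero weight. Stating this clarification at the outset lets the proportional split into $\textit{\textbf{u}}_{ben}^k$ and $\textit{\textbf{u}}_{mal}^k$ be exact, and the remainder of the proof then follows without further technical overhead.
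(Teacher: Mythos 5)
Your proposal is correct and follows essentially the same route as the paper's proof: derive the convex decomposition $\textit{\textbf{C}}^k_{t+1} = (1-\kappa)\textit{\textbf{u}}_{ben}^k + \kappa\textit{\textbf{u}}_{mal}^k$, subtract $\textit{\textbf{u}}_{ben}^k$, take the $\ell_2$-norm, and bound $\Vert\textit{\textbf{u}}_{mal}^k - \textit{\textbf{u}}_{ben}^k\Vert_2 \leq 2$ by the triangle inequality. In fact you are slightly more careful than the paper on one point: the paper asserts that the group averages are unit vectors, whereas an average of normalized prototypes only has norm at most $1$, which is exactly what you prove and is all the triangle inequality needs.
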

From the above formula, we can observe that the deviation of the global prototype from the benign prototype increases   linearly with the malicious proportion $\kappa$, thus ensuring that the influence of malicious clients always remains within a controllable range.
\begin{proof}
By definition of the malicious client  proportion $\kappa$,  the global prototype can be expressed as:
\begin{equation}
\begin{aligned}\scalebox{0.9}{$
\textit{\textbf{C}}^k_{t+1} = (1-\kappa) \textit{\textbf{u}}_{ben}^k + \kappa \textit{\textbf{u}}_{mal}^k.$}
\end{aligned} \nonumber
\end{equation}
Subtracting $\textit{\textbf{u}}_{ben}^k$ from both sides yields:
\begin{equation}
\scalebox{1}{$
\begin{aligned}
\textit{\textbf{C}}^k_{t+1}  - \textit{\textbf{u}}_{ben}^k = \kappa(\textit{\textbf{u}}_{mal}^k - \textit{\textbf{u}}_{ben}^k).
\end{aligned}$}\nonumber
\end{equation}
Taking the $\ell_2$-norm on both sides, we have:
\begin{equation}
\scalebox{1}{$
\begin{aligned}
\Vert\textit{\textbf{C}}^k_{t+1}  - \textit{\textbf{u}}_{ben}^k\Vert_2 = \kappa\Vert(\textit{\textbf{u}}_{mal}^k - \textit{\textbf{u}}_{ben}^k)\Vert_2.
\end{aligned}$}\nonumber
\end{equation}
Since $\textit{\textbf{u}}_{mal}^k $ and $\textit{\textbf{u}}_{ben}^k$ are unit vectors, the triangle inequality gives:
\begin{equation}
\scalebox{1}{$
\begin{aligned}
\Vert\textit{\textbf{u}}_{mal}^k -  \textit{\textbf{u}}_{ben}^k\Vert_2 \leq \Vert \textit{\textbf{u}}_{mal}^k \Vert + \Vert \textit{\textbf{u}}_{ben}^k \Vert = 2.
\end{aligned}$}\nonumber
\end{equation}
Consequently, we obtain the relation
$ 
\Vert\textit{\textbf{C}}^k_{t+1}  -  \textit{\textbf{u}}_{ben}^k \Vert_2 \leq 2\kappa.\
$
This completes the proof of Theorem \ref{992139}.
\end{proof}

\begin{thm}
The \textit{Aggregator}, \textit{Verifier}, and malicious clients cannot access any sensitive information about   benign  clients.
\end{thm}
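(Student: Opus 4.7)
The plan is to prove the theorem by a case analysis over the three possible adversarial entities (\textit{Aggregator}, \textit{Verifier}, and colluding malicious clients), showing in each case that the adversary's view is either computationally indistinguishable from random under the semantic security of CKKS, or information-theoretically blinded by the random masks chosen during the secure two-party computation. Throughout, I will rely critically on the non-colluding assumption between the two servers stated in Section~\ref{sec:Problem Formulation}, since any such collusion would trivially grant the adversary access to $Skv$ and defeat the privacy guarantee.

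First I would handle the \textit{Aggregator}'s view. Its transcript consists of $\{\llbracket \widetilde{\textit{\textbf{c}}}^{k}_{m,t}\rrbracket_{Pkv}\}$ received from clients, the ciphertexts $\llbracket j^{k}_{m,t}\rrbracket_{Pkx}$ and $\llbracket \textit{\textbf{V}}^n \odot \widetilde{\textit{\textbf{c}}}^{k}_{m,t}\rrbracket_{Pkx}$ returned by the \textit{Verifier}, and the plaintext scalars $\textit{Sum}^k_t$ and $\Vert\textit{\textbf{C}}'^{k}_{t+1}\Vert$ released for the cosine-similarity computation. Since the \textit{Aggregator} possesses neither $Skv$ nor $Skx$, by the IND-CPA security of CKKS all ciphertext components are computationally indistinguishable from encryptions of zero, hence leak no information about any individual prototype. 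The only plaintext leakage is the scalar norm of the mean prototype and the total weight $\textit{Sum}^k_t$, both of which are aggregate quantities of the benign cohort and therefore cannot be attributed to any specific client.

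Next I would analyze the \textit{Verifier}'s view, which is the subtler case and the main obstacle. After decrypting with $Skv$, the \textit{Verifier} obtains the normalization witness $\Vert\widetilde{\textit{\textbf{c}}}^{k}_{m,t}\Vert^2 = 1$ (which is constant by construction and thus trivially non-informative), the scalar $p \times h^{k}_{m,t}$, and the vector $\textit{\textbf{V}}^n \odot \widetilde{\textit{\textbf{c}}}^{k}_{m,t}$. The argument here is that the random scalar $p$ and random vector $\textit{\textbf{V}}^n$ are generated freshly by the \textit{Aggregator} and never revealed to the \textit{Verifier}; consequently each coordinate of the masked vector is a random variable whose distribution is independent of the underlying prototype coordinate, in the same spirit as one-time-pad style multiplicative masking. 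I would formalize this by exhibiting a simulator that, given only the public parameters, produces a view statistically close to the real view of the \textit{Verifier} by sampling $p \times h^{k}_{m,t}$ and $\textit{\textbf{V}}^n \odot \widetilde{\textit{\textbf{c}}}^{k}_{m,t}$ from the masking distribution directly. The hard part will be being precise about the distribution of $\textit{\textbf{V}}^n$: the argument requires that each entry of $\textit{\textbf{V}}^n$ be drawn from a domain large enough (and the entries of $\widetilde{\textit{\textbf{c}}}^{k}_{m,t}$ be nonzero with overwhelming probability) that the product is indistinguishable from a uniform sample; I would state this as a mild assumption on the mask distribution and note that any residual leakage is bounded by the statistical distance between the two distributions.

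Finally I would address the coalition of malicious clients. Their view consists solely of the encrypted global prototypes $\llbracket \textit{\textbf{C}}^{k}_{t+1}\rrbracket_{Pkx}$ that they decrypt using their own share of $Skx$, together with their own local data and prototypes. The decrypted global prototype is a weighted average over all retained clients, so extracting the contribution of any specific benign client would require inverting an aggregation with at least two unknown benign summands, which is information-theoretically impossible; moreover, the recently-proposed model inversion threat of \cite{103612072} requires access to an \emph{individual} client's learned prototype, which is never exposed in PPFPL because clients only ever transmit ciphertexts under $Pkv$. Concatenating the three sub-arguments and invoking the non-colluding assumption yields the theorem.
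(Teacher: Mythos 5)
Your proposal is correct and uses the same decomposition as the paper's proof---a case analysis over the \textit{Aggregator}, the \textit{Verifier}, and the coalition of malicious clients---and it isolates the same load-bearing facts: the only plaintexts released to the servers are the aggregate quantities $\Vert\textit{\textbf{C}}^{\prime k}_{t+1}\Vert$ and $\textit{Sum}^k_t$; the \textit{Verifier}'s decrypted values are blinded by the random scalar $p$ and vector $\textit{\textbf{V}}^n$; and a malicious coalition could only recover an individual benign prototype from the aggregated global prototype if it comprised $(|\mathcal{S}|-1)$ clients, which the threat model excludes. Where you genuinely diverge is in rigor rather than route: the paper's proof is a few informal sentences (the masking argument for the \textit{Verifier} appears only in the protocol description, not in the proof itself, and the ciphertext components are never argued about at all), whereas you propose an IND-CPA reduction for every ciphertext in each party's transcript and a simulation-based argument for the \textit{Verifier}'s view. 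This buys a properly stated cryptographic guarantee, and your caveat about the mask distribution is well placed and is a real obligation the paper silently skips: multiplicative blinding over the CKKS plaintext space is not a perfect one-time pad, since with $\textit{\textbf{V}}^n$ drawn from any fixed bounded distribution the sign and magnitude of each masked coordinate remain correlated with those of the underlying prototype coordinate, so the statistical-distance bound you flag must actually be proved (or the mask replaced by an additive share). One residual leakage that neither your sketch nor the paper closes: because the \textit{Verifier} must compare the values $\textit{Round}(p\times h^{k}_{m,t},6)$ across clients to find $\textit{min}^k_t$, the scalar $p$ is necessarily common to all clients for a given class, so the \textit{Verifier} learns the ratios $h^{k}_{m,t}/h^{k}_{m^{\prime},t}$, i.e., per-client relative credibility scores rather than purely aggregate quantities; a complete simulator for the \textit{Verifier} would have to either declare this as permitted leakage or eliminate it.
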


\begin{proof}
During the secure aggregation, the two servers can obtain plaintext values $\Vert\textit{\textbf{C}}^{\prime k}_{t+1}\Vert$ and $\textit{Sum}^{k}_{t}$, where $\Vert\textit{\textbf{C}}^{\prime k}_{t+1}\Vert$ denotes the norm  of  trusted prototype, $\textit{Sum}^{k}_{t}$ denotes  the sum of aggregation weight  of class $k$.
For non-colluding \textit{Aggregator} and \textit{Verifier}, they cannot  get any sensitive information  from these   plaintext information.
In addition, for colluding malicious clients, if $(|\mathcal{S}|-1)$ clients are compromised, they can theoretically infer the local prototype  of the remaining benign from the encrypted global prototype. However, the real scenario does not exist when there are $(|\mathcal{S}|-1)$ malicious clients.
Since benign clients receive only global prototypes distributed by \textit{Aggregator}, they cannot deduce  information about others.
Therefore, neither third-party entities nor malicious clients can derive sensitive information about benign clients.
\end{proof}

\section{Experiments}
\label{sec:Experiments}
 In this section, we evaluate the performance of PPFPL in the presence of data poisoning attacks on Non-IID data.
\subsection{Experimental Settings}
\subsubsection{Datasets and Models}
Similar to previous works \cite{9849010}\cite{mu2024feddmc}\cite{10491118}, we utilize  three public available datasets, namely MNIST \cite{lecun2002gradient}, FMNIST \cite{xiao2017fashion}, and CIFAR10 \cite{krizhevsky2009learning}, to evaluate performance of our PPFPL.
Furthermore, we apply Convolution Neural Network (CNN)  as local model to both MNIST and FMNIST.
For CIFAR10, we employ ResNet18  as the local model,  initialized with pre-trained parameters. These initial parameters have an accuracy of 27.5\% on  CIFAR10's test set.

\subsubsection{Hyper-parameters Settings of FL}
We employ a cross-silo configuration in our FL experiments.  
Specifically, we set up 20 clients, each of which uploads  local prototypes at each communication round. The number of  rounds is set to 100, 150, and 150 for MNIST, FMNIST, and CIFAR10, respectively.
The local learning rate is set to $\eta = 0.01$, the importance weight to $\lambda = 1$, the batch size to 64, and the number of local iterations to $E = 5$. The detection threshold $\chi$ is configured as 0.
Except for special declared, the above default settings are applied.
These  hyper-parameters are consistent across all clients.
For encryption, we instantiate the CKKS in our experiments using TenSEAL \cite{benaissa2021tenseal} with the polynomial modulus degree of 8192 and a coefficient modulus  of  200-bits, and a  scale of 40-bits.
This configuration offers  128-bits security level \cite{10609598}\cite{rahman2024benchmarking}, and all encryption parameters follow the recommended settings provided by TenSEAL.

\subsubsection{Non-IID  Settings}
To simulate the Non-IID data in cross-silo FL, we create  class-space heterogeneity among clients, which is common in   cross-silo scenarios.
Specifically,  large organizations (e.g., hospitals, companies)  hold  datasets with different classes, and their class distributions may differ significantly, or  even be missing some classes altogether.
When these organizations participate in federated training, the union of their data classes defines  the entire FL classification task.
This phenomenon leads to Non-IID data across  organizations.
However, the  Dirichlet distribution \cite{lin2016dirichlet} assumes that each client's data is sampled from all classes, meaning  each client  usually has  samples of every class.
However, in  cross-silo  scenarios,  some clients may have no samples from certain classes,  which  contradicts  the assumptions of  Dirichlet distribution.

To model the data distribution in cross-silo scenarios, similar to previous works \cite{tan2022fedproto}\cite{10398600}, we define \textit{Avg} as the mean number of data classes per client, and \textit{Std} as the  standard deviation of these class counts.
In our experiments, we fix \textit{Avg} to be 3 , 4 or 5, and fix \textit{Std} to be 1 or 2,  to create the  class-space heterogeneity.
Clients are randomly assigned to different classes, with partial class overlap among them.
 To visualize the different  data distributions, we plot heat maps  as shown in Fig. \ref{110811}.
\begin{figure}[t]
  \subfloat[\textit{Avg}=3, \textit{Std}=2]
  {
      \label{102411}  \includegraphics[width=0.32\linewidth]{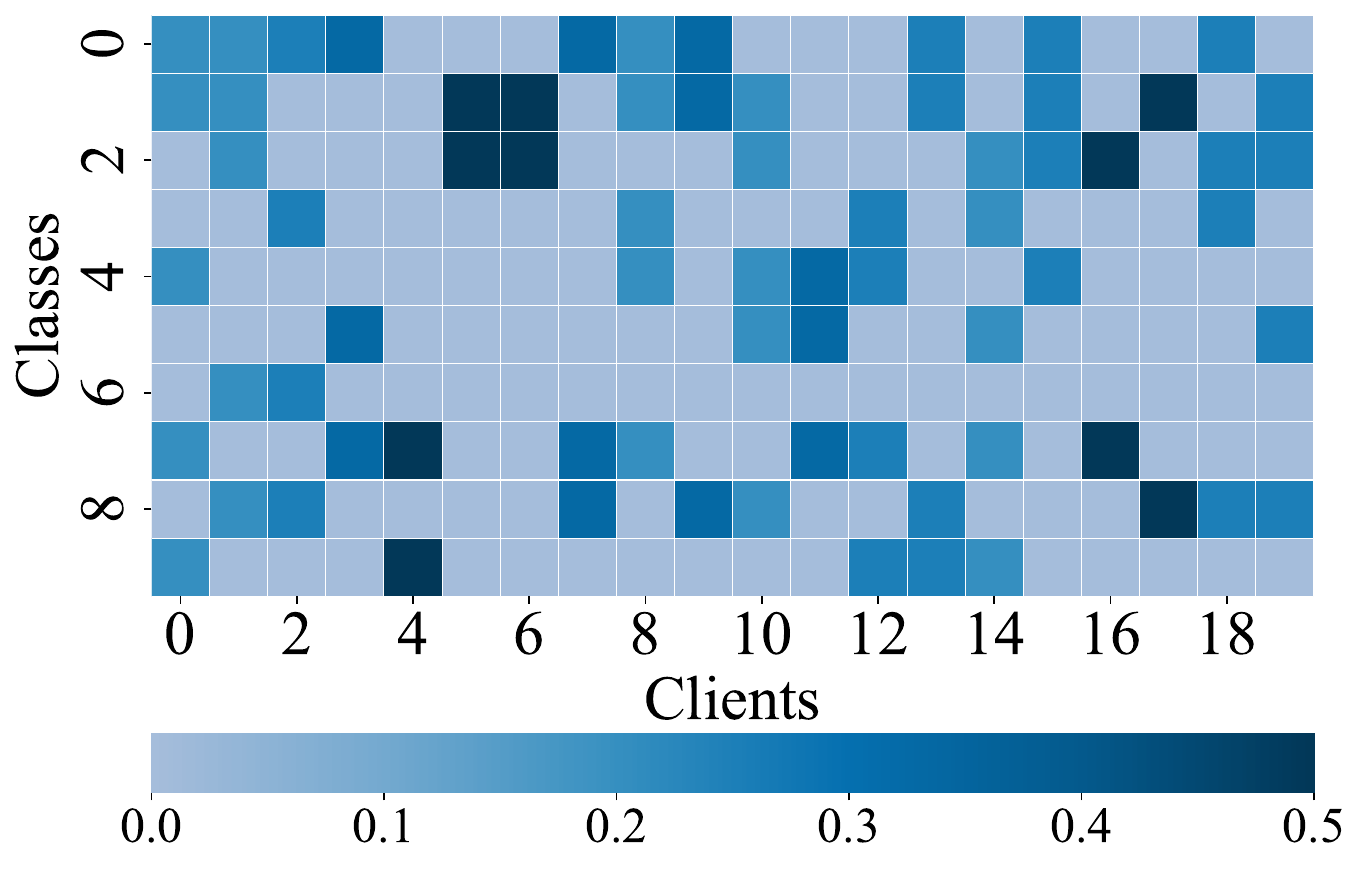}
  }
    \subfloat[\textit{Avg}=4, \textit{Std}=2]
  {
      \label{102415}  \includegraphics[width=0.32\linewidth]{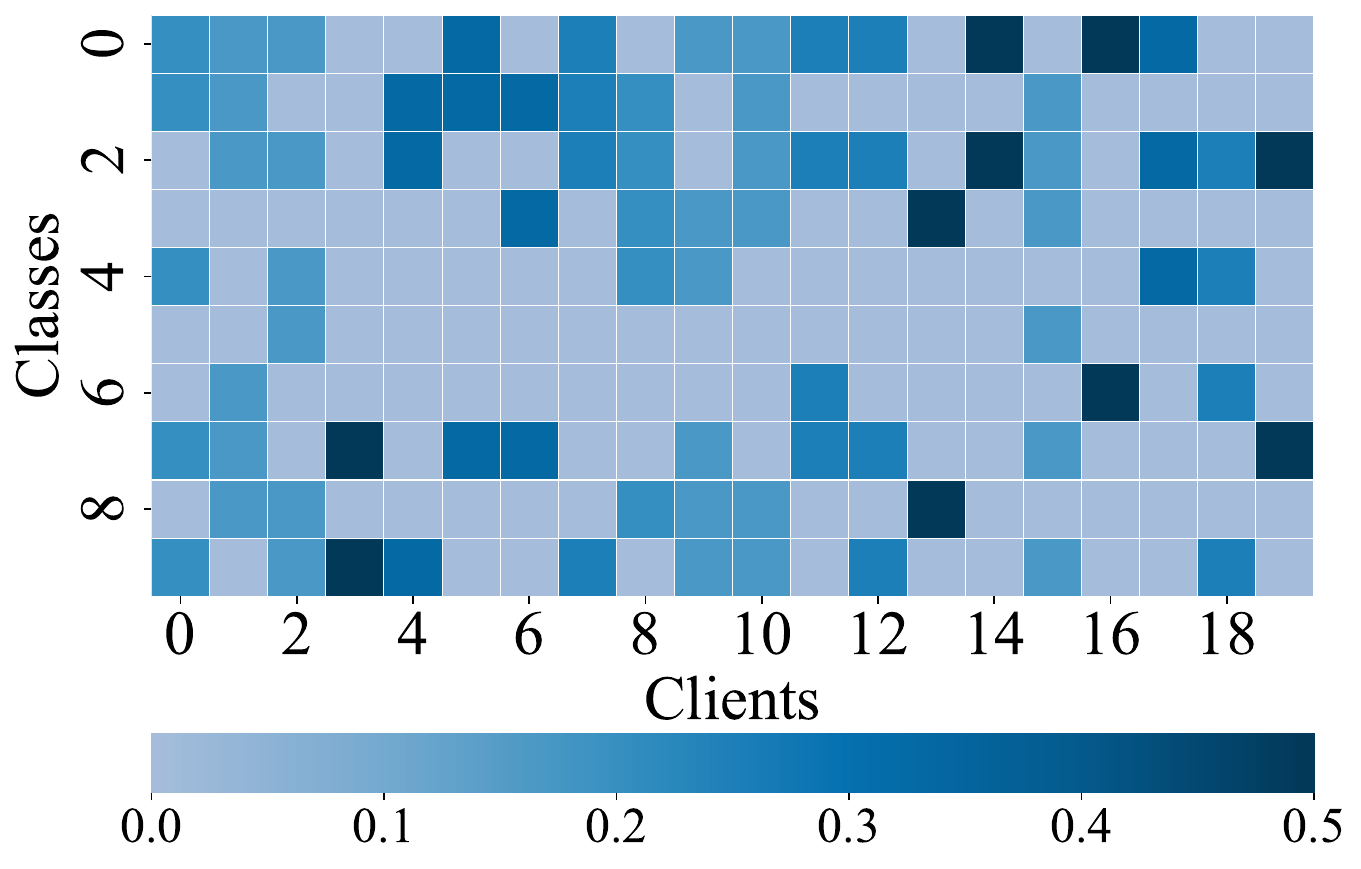}
  }
  \subfloat[\textit{Avg}=5, \textit{Std}=2]
  {
      \label{102414}  \includegraphics[width=0.31\linewidth]{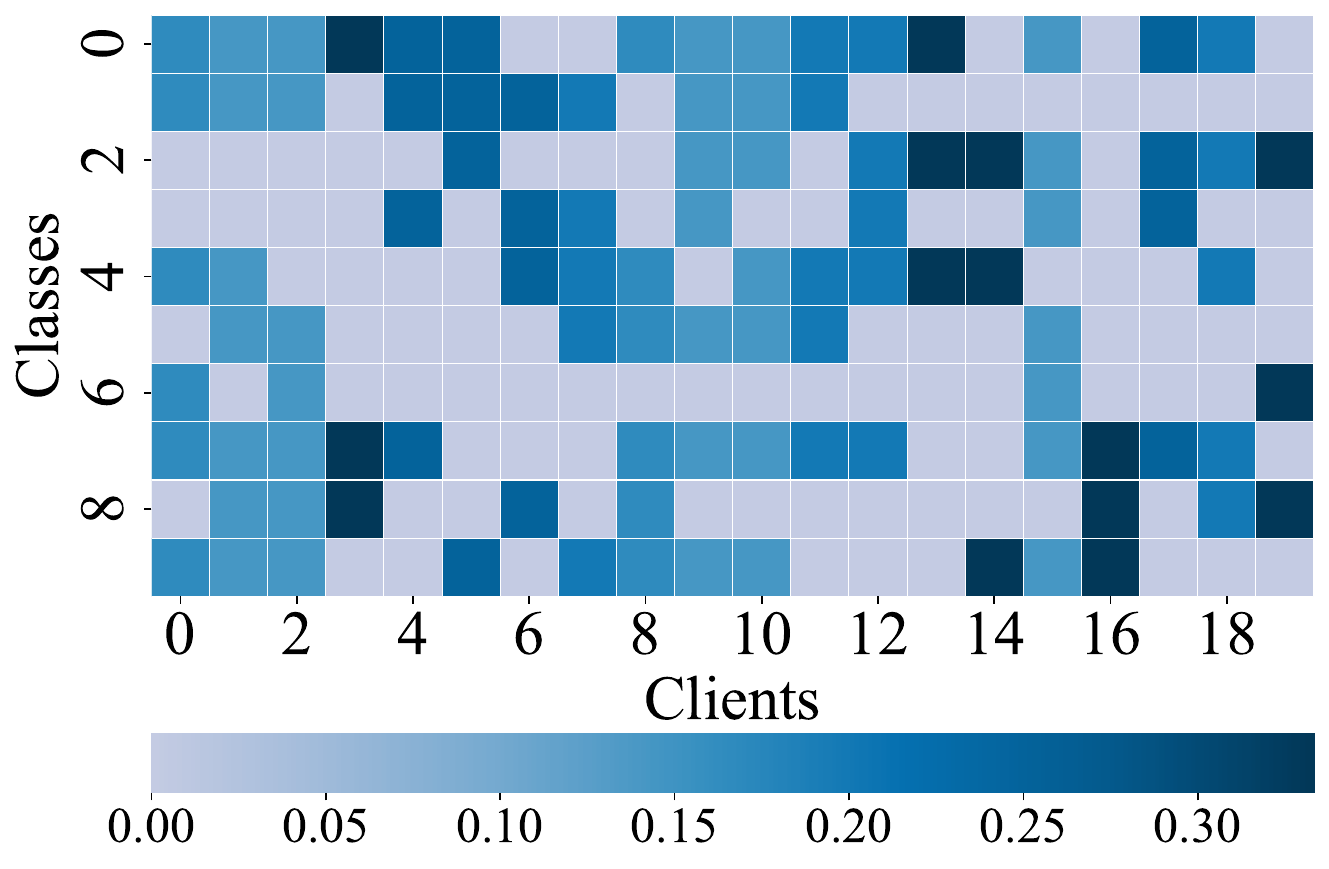}
  }
\caption{Heat maps  under different heterogeneity distributions.}
\label{110811}
\end{figure}

\subsubsection{Setting of Data Poisoning Attacks}
In our experiments, we evaluate two types of data poisoning attacks: feature attacks and label attacks.
For feature attacks, malicious clients randomly  alter  their own training data features in a completely randomized manner without following any   specific rule.
For label attacks, malicious clients modify  the labels  of their  training data  to incorrect  labels.
All training data  of  malicious clients are modified.
Unless otherwise specified, malicious clients keep   their behaviors fixed during federated training, and the attack  remains consistent throughout the training period.
Since the number of malicious clients affects FL performance differently, we set different proportions  of malicious  clients, denotes as \textit{Att}.

\subsubsection{Evaluation Measure}
Our goal is to improve the performance of cross-silo PPFL under   poisoned Non-IID data  while maintaining robustness against data poisoning attacks.
Therefore, we evaluate the FL performance  by measuring the average accuracy of  benign clients. For experiment comparison, we use FedAvg as the baseline, assuming it is not subject to data poisoning attacks. In addition, we  compare  PPFPL with robust schemes (i.e., Krum \cite{blanchard2017machine}, Foolsgold \cite{fung2018mitigating}, and FedDMC \cite{mu2024feddmc}) and privacy-preserving robust schemes (i.e., ShieldFL \cite{9762272}, PBFL \cite{9849010},   AntidoteFL \cite{liu2025antidotefl}, and FLDP-DM \cite{10966462}) under the same experiment conditions.

\subsection{Experimental Results}
\subsubsection{Visualization of Prototypes}

We visualize client-submitted prototypes in PPFPL using t-SNE, as shown in Fig. \ref{11181612}.
Specifically, we display  the distribution of prototypes from clients  under feature and label attacks with  $\textit{Att} =20\%$ on  CIFAR10  within a single communication round.
We observe that the prototypes of malicious clients deviate significantly from those of benign clients, regardless of   features attacks or labels attacks.
In addition, we  notice that the prototypes submitted by malicious clients are concentrated in a ring area.
This is because when  malicious clients perform local model training, the features or labels are disrupted, causing the generated prototypes to spread from the center point to the surrounding areas.
Most importantly, we observe a clear difference  between the prototypes submitted by benign and malicious clients. 
This difference  is prominent because it is independent of the data distribution.
The finding  validates the effectiveness of leveraging prototypes to defend against data poisoning attacks.
To effectively filter out malicious prototypes, it is essential to employ a secure aggregation protocol that ensures Byzantine-robust aggregation results.
The following is the security evaluation of PPFPL.

\begin{figure}
  \centering
  \subfloat[Feature attacks, \textit{Avg}=3,\textit{Std}=2.]
  {
      \label{111816121}  \includegraphics[width=0.44\linewidth]{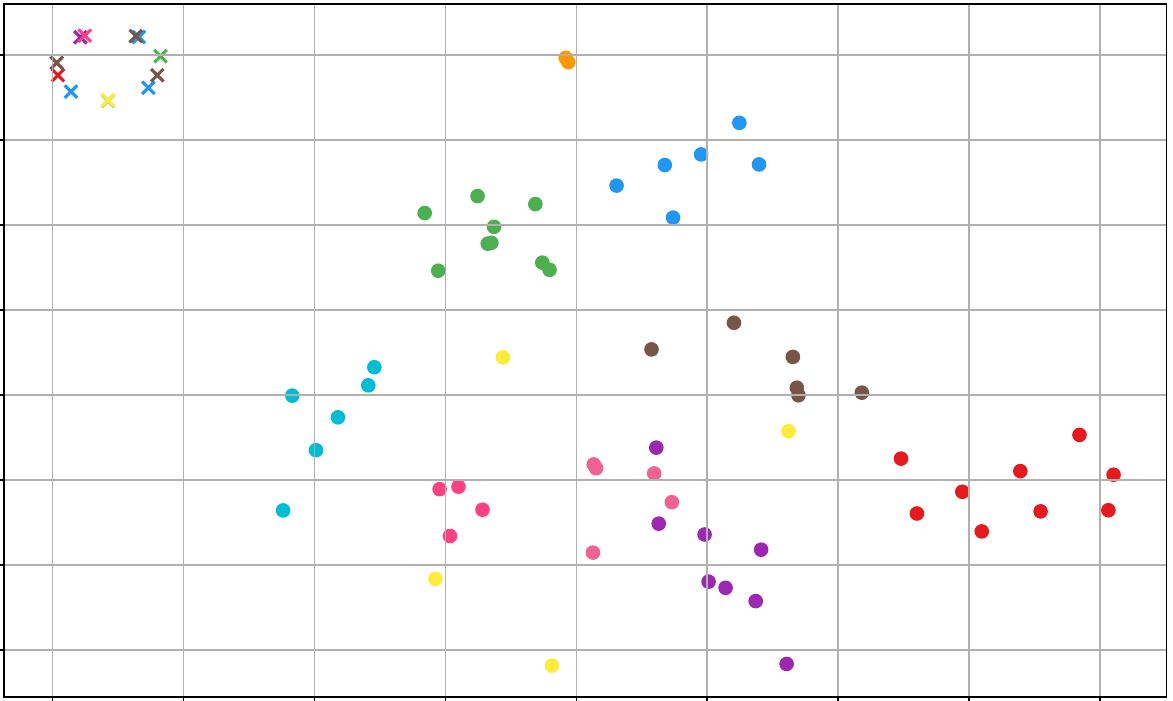}
  }
  \subfloat[Feature attacks, \textit{Avg}=5,\textit{Std}=1.]
  {
      \label{111816122}  \includegraphics[width=0.44\linewidth]{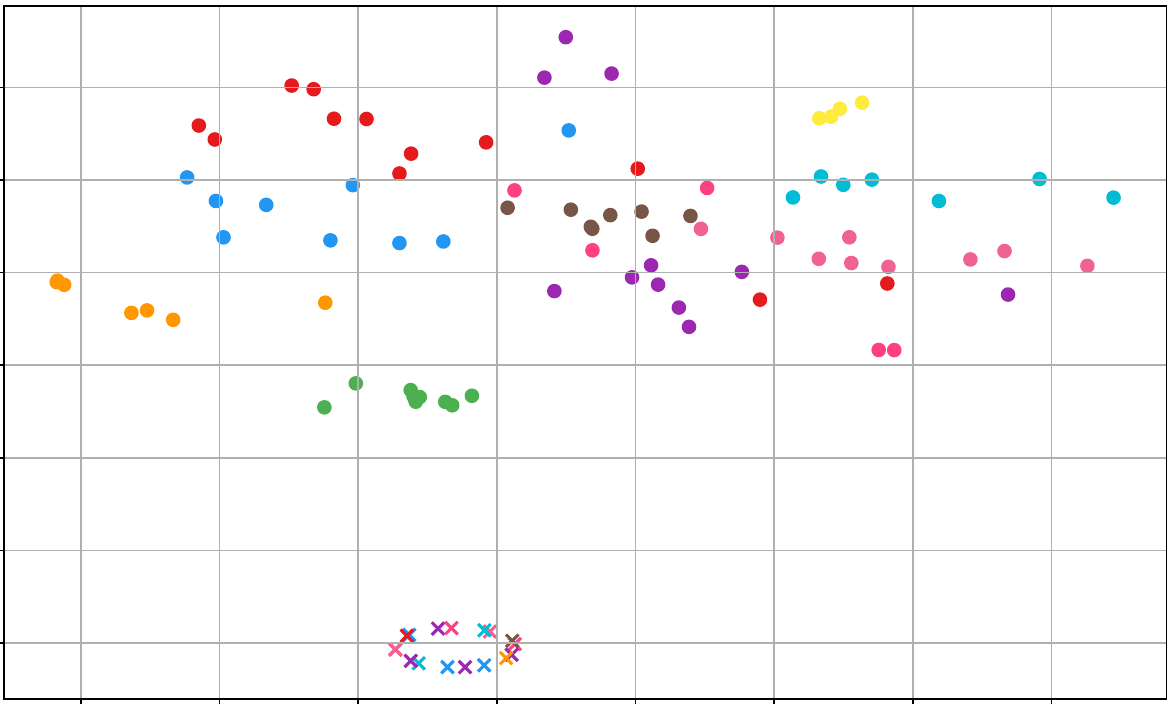}
  }
  
  \subfloat[Label attacks, \textit{Avg}=3$, $\textit{Std}=2.]
  {
      \label{111816123}  \includegraphics[width=0.44\linewidth]{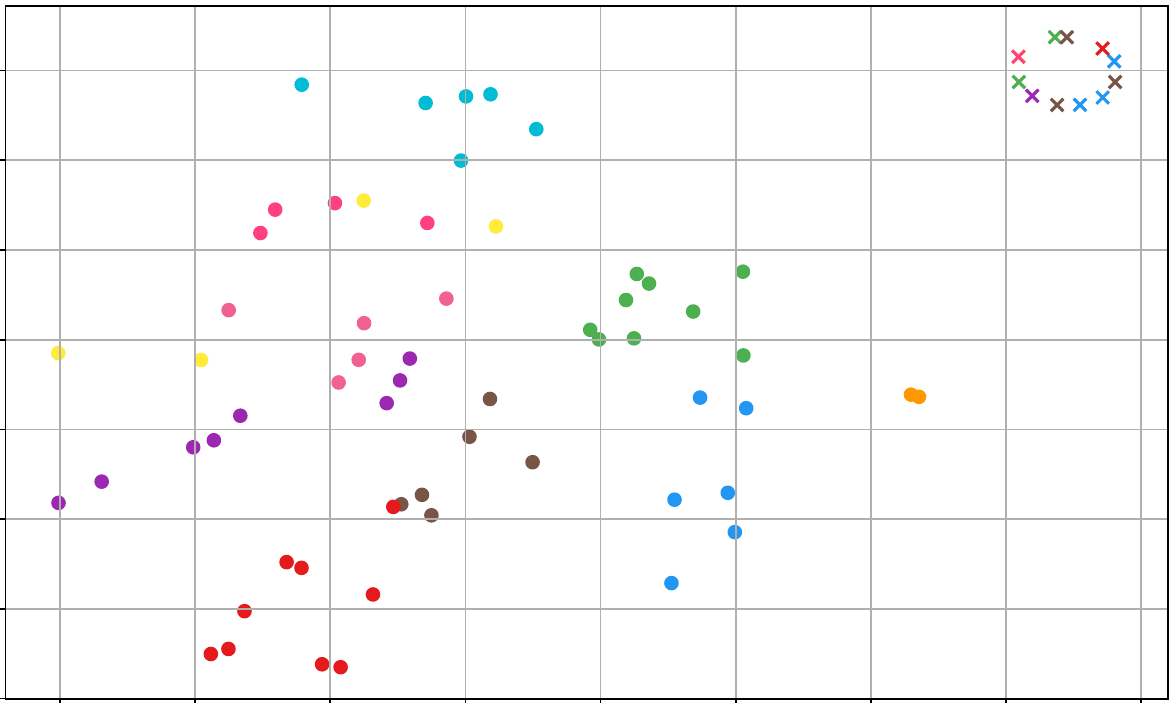}
  }
    \subfloat[Label attacks,  \textit{Avg}=5$, $\textit{Std}=1.]
  {
      \label{111816124}  \includegraphics[width=0.44\linewidth]{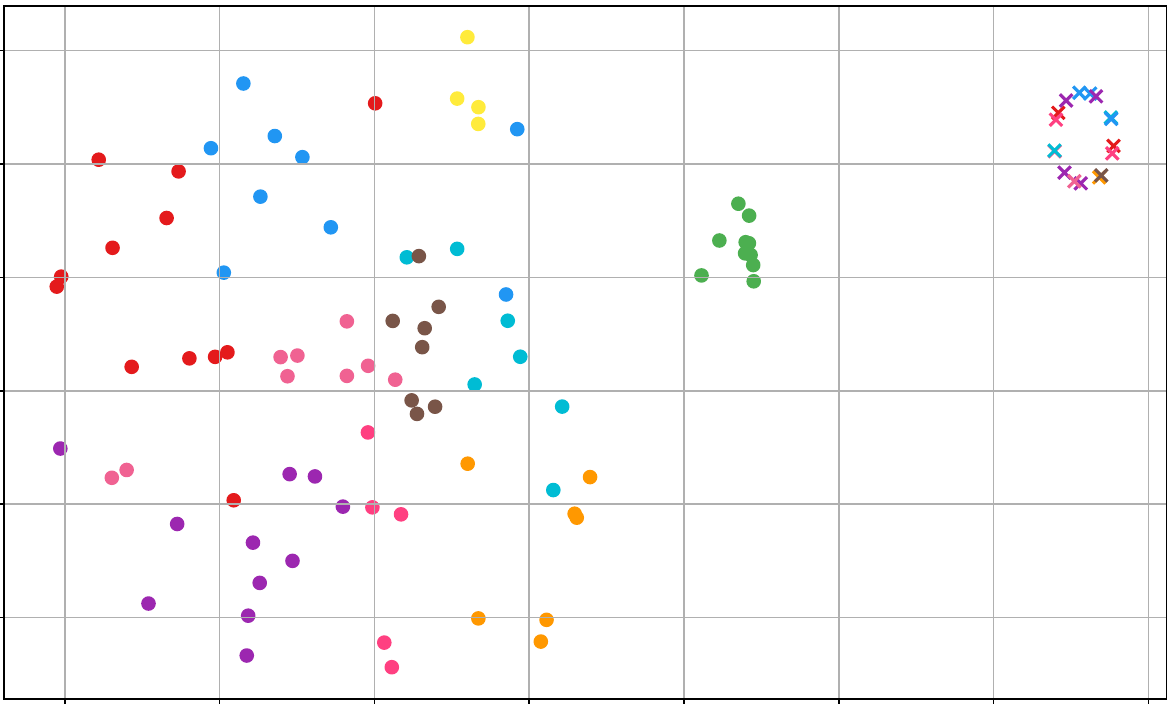}
  }
  \caption{
 The t-SNE visualization of client-submitted prototypes. 
 Different colors indicate distinct classes.
The symbol ``$\cdot$'' denotes prototypes from benign clients, and ``$\times$'' represents those from malicious clients.}
  \label{11181612}
\end{figure}

\subsubsection{Security Evaluation}
\begin{figure}[t]
  \centering
  \subfloat[Feature attacks, FMNIST.]
  {
      \label{102412}  \includegraphics[width=0.44\linewidth]{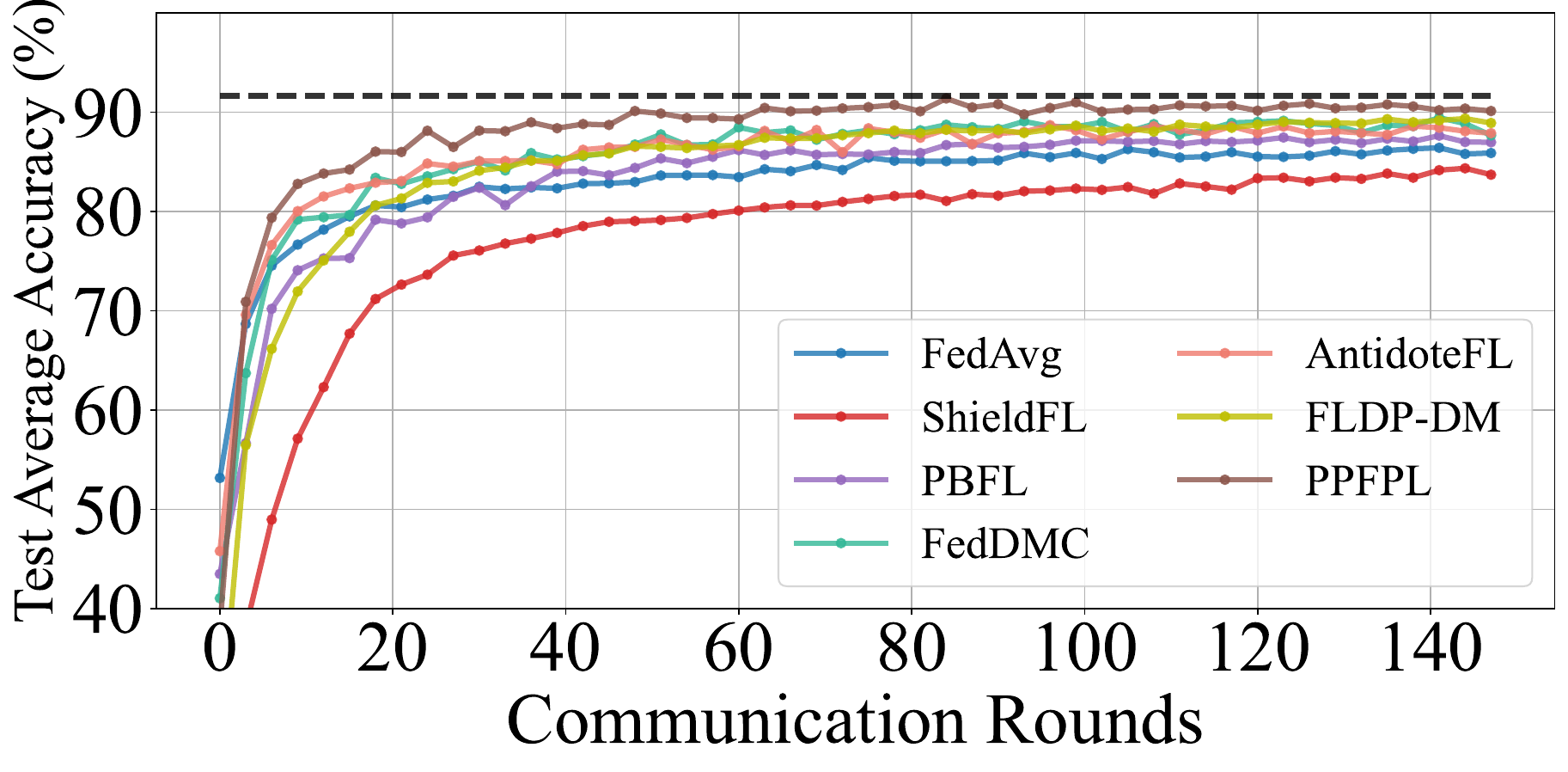}
  }
    \subfloat[Feature attacks,  CIFAR10.]
  {
      \label{102413}  \includegraphics[width=0.44\linewidth]{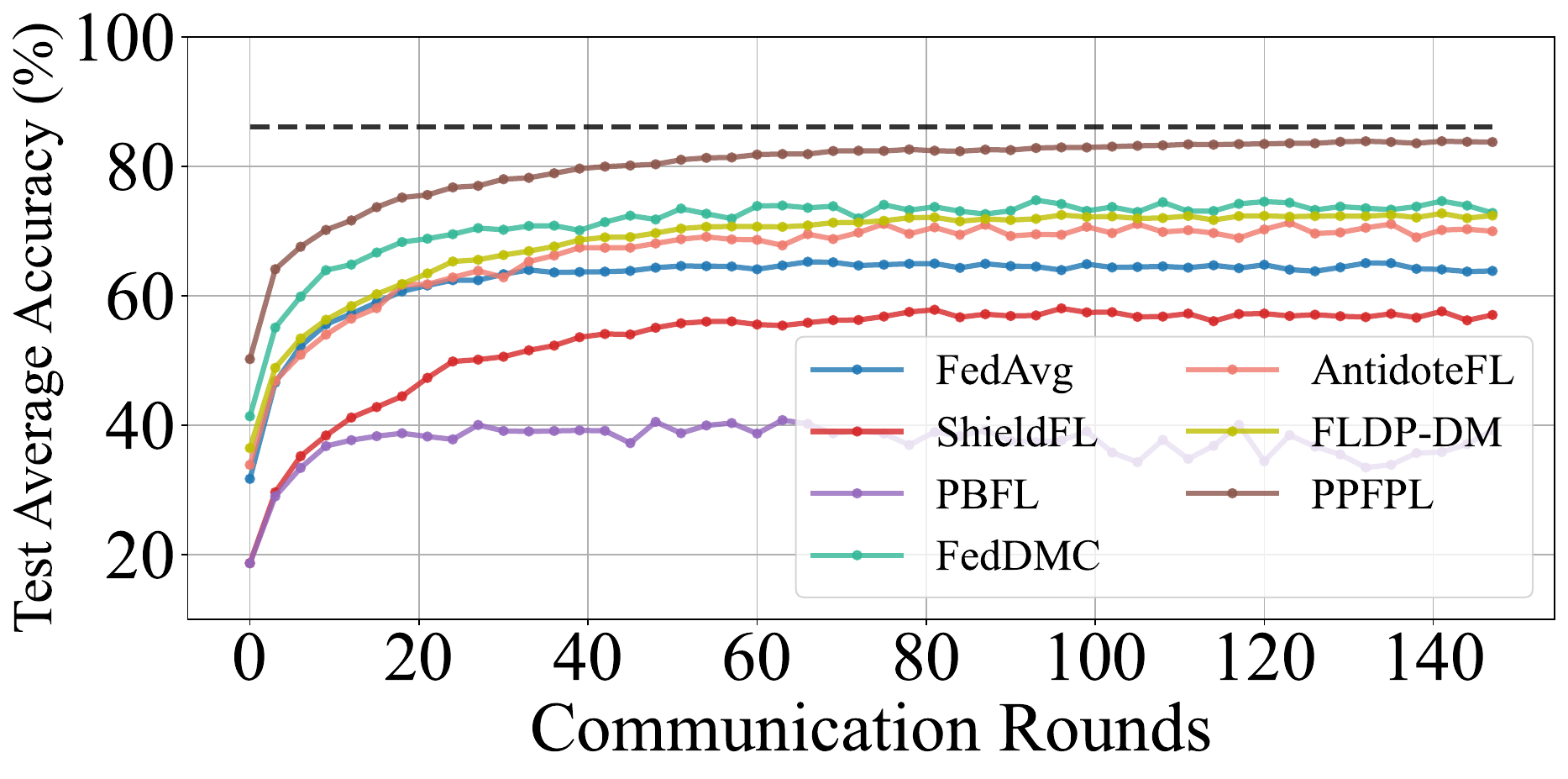}
  }
  
  \subfloat[Label attacks, FMNIST.]
  {
      \label{102415}  \includegraphics[width=0.44\linewidth]{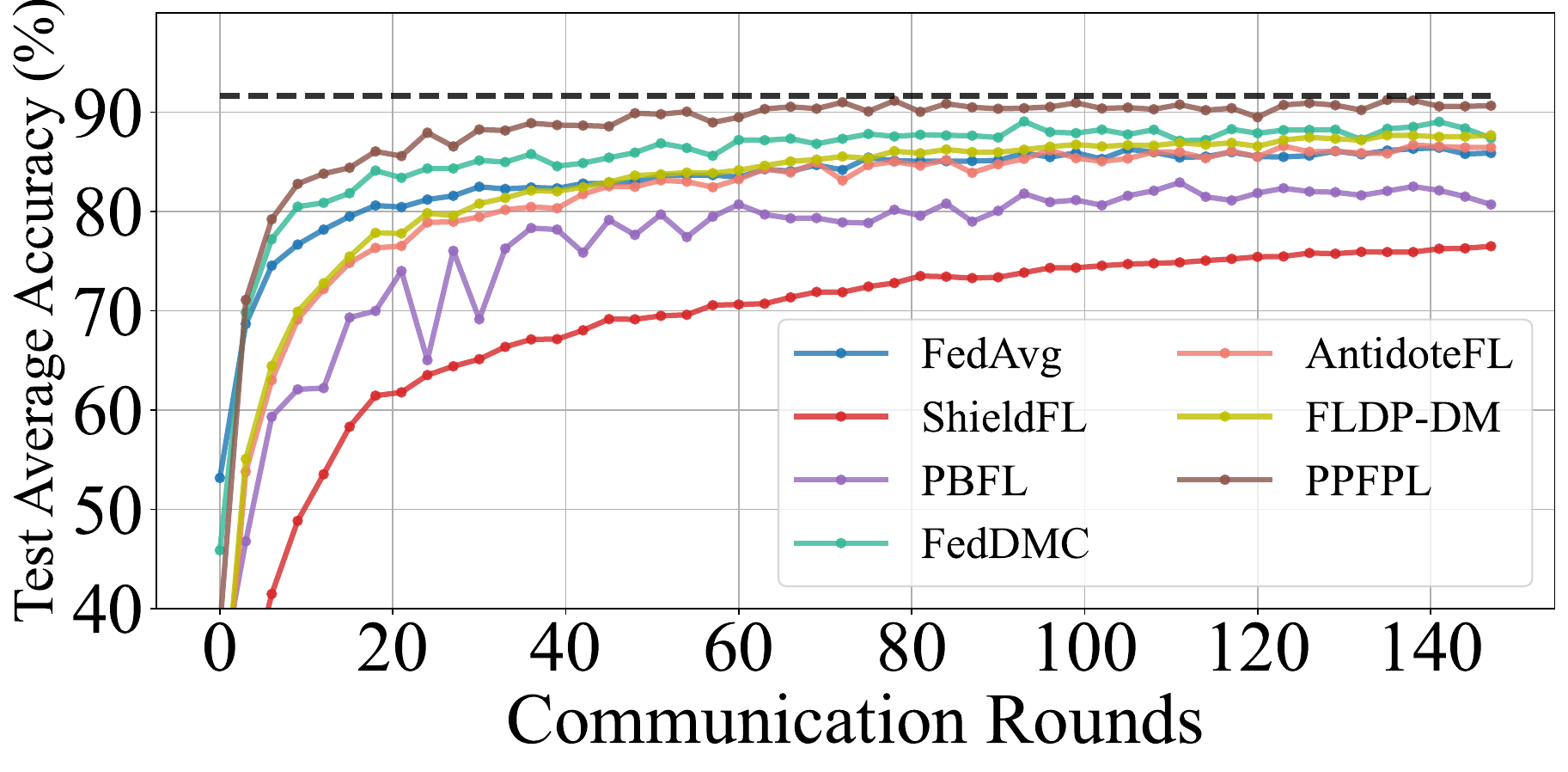}
  }
    \subfloat[Label attacks, CIFAR10.]
  {
      \label{102416}  \includegraphics[width=0.44\linewidth]{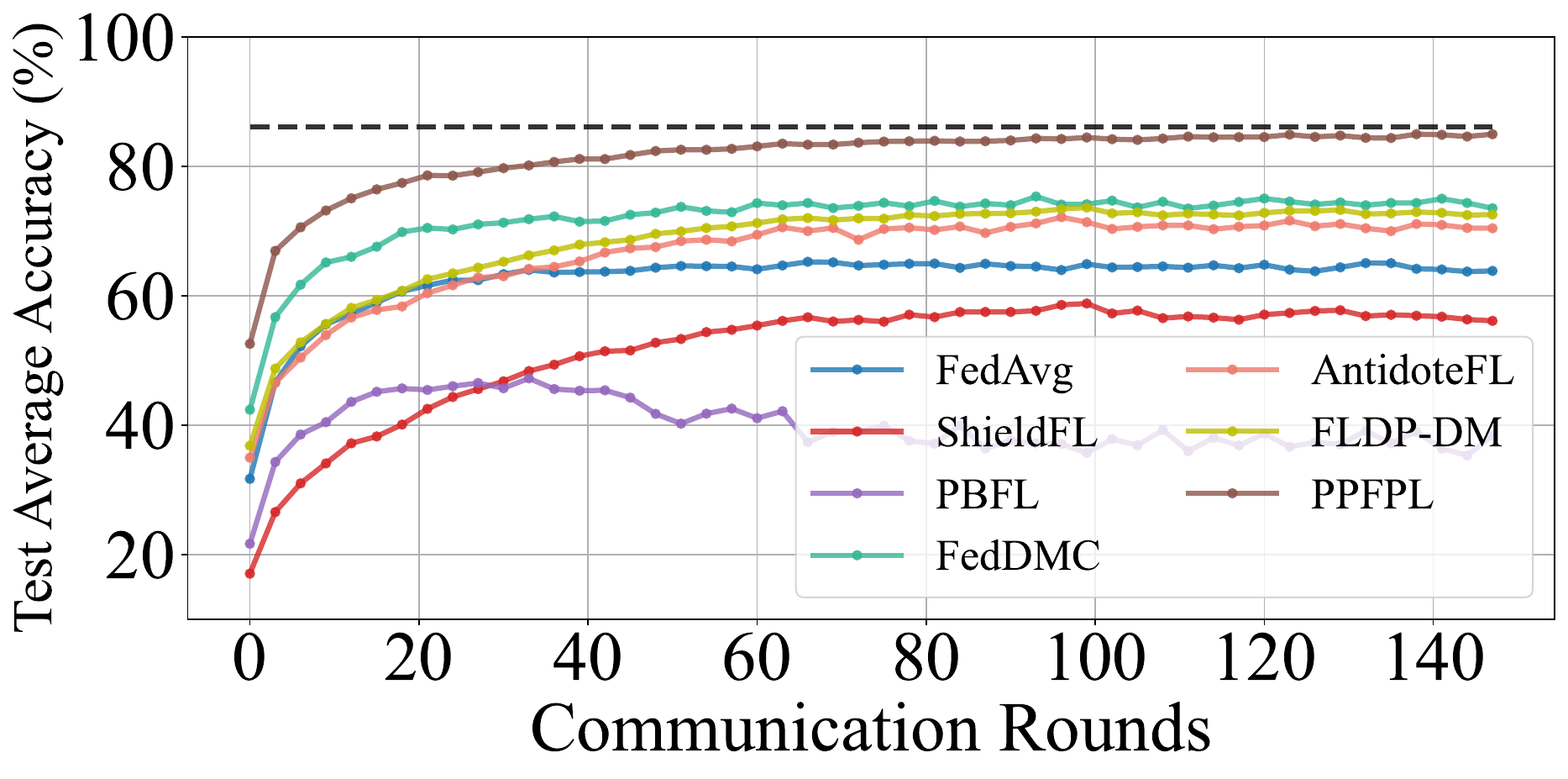}
  }
  \caption{Test average accuracy comparison between PPFPL and existing schemes  in  poisoned Non-IID data.}
  \label{10241}
\end{figure}

To evaluate the security  of our framework, we  evaluate PPFPL's performance against feature attacks and  label attacks on three datasets   under Non-IID setting with \textit{Avg}=3 and \textit{Std}=2.
Specifically, the proportion of malicious clients is set to \textit{Att}=20\%.
 Furthermore, we compare PPFPL with existing defense schemes  (i.e., FedAvg, ShieldFL, PBFL, FedDMC,  AntidoteFL, and FLDP-DM), where the global model learned by FedAvg is not subject to data poisoning attacks.

The experiment results are shown in Fig. \ref{10241}, where  the black dashed line represents the highest accuracy of PPFPL without attacks.
We  observe that  ShieldFL and PBFL suffer significant accuracy degradation under attacks compared to FedAvg without attacks on the CIFAR10 dataset.
This indicates that as   datasets become more complex, the negative impact of data poisoning attacks on model accuracy becomes more obvious. This is mainly because   complex datasets increase the difficulty of defense,  and  poisoned Non-IID data further degrades the defense performance.
Notably, FedDMC and AntidoteFL achieve higher performance compared to FedAvg because they adjust the aggregation weights of  model updates, which mitigates the bias caused by Non-IID data and improves the performance of federated learning.
Moreover, PPFPL (under attacks) outperforms FedAvg (no attacks), with its accuracy approaching the highest accuracy of PPFPL without attacks  when it nears convergence.
This demonstrates  that PPFPL guarantees the high accuracy of  the learned  model  in the presence of poisoned Non-IID data, which is benefited that the prototype is not affected by tampered data distribution, while the secure aggregation protocol resists malicious prototypes submitted by clients.
Thus, our PPFPL overcomes the difficulty confronted by  these defense schemes, and  improves the FL performance  in poisoned Non-IID data.

Additionally, we evaluate the change of loss during training for PPFPL and the compared schemes, as shown in Fig. \ref{10252355}.
We observe that PBFL's loss fluctuates significantly throughout the training process, while PPFPL demonstrates relatively smooth convergence.
This indicates that  data poisoning attacks severely disrupt PBFL's convergence, while the convergence of PPFPL is not affected by  data poisoning attacks,  thereby preserving model accuracy.
This is consistent with our Corollary 3. Specifically, PPFPL still converges in the presence of data poisoning attacks as long as $\lambda$, $E$, and $\eta$ satisfy a specific relationship among them.
Moreover, PPFPL's loss at convergence is lower than other defense schemes under CIFAR10, which can satisfy the design goal of security.
Notably,  our experiments show that  feature attacks and label attacks have  similar  impacts on the performance of PPFPL and other schemes.
This indicates that the impact on model performance is similar under poisoned Non-IID data, regardless of the type of data poisoning attacks.
\begin{figure}[t]
  \centering
  \subfloat[Feature attacks, CIFAR10.]
  {
      \label{102523555}  \includegraphics[width=0.44\linewidth]{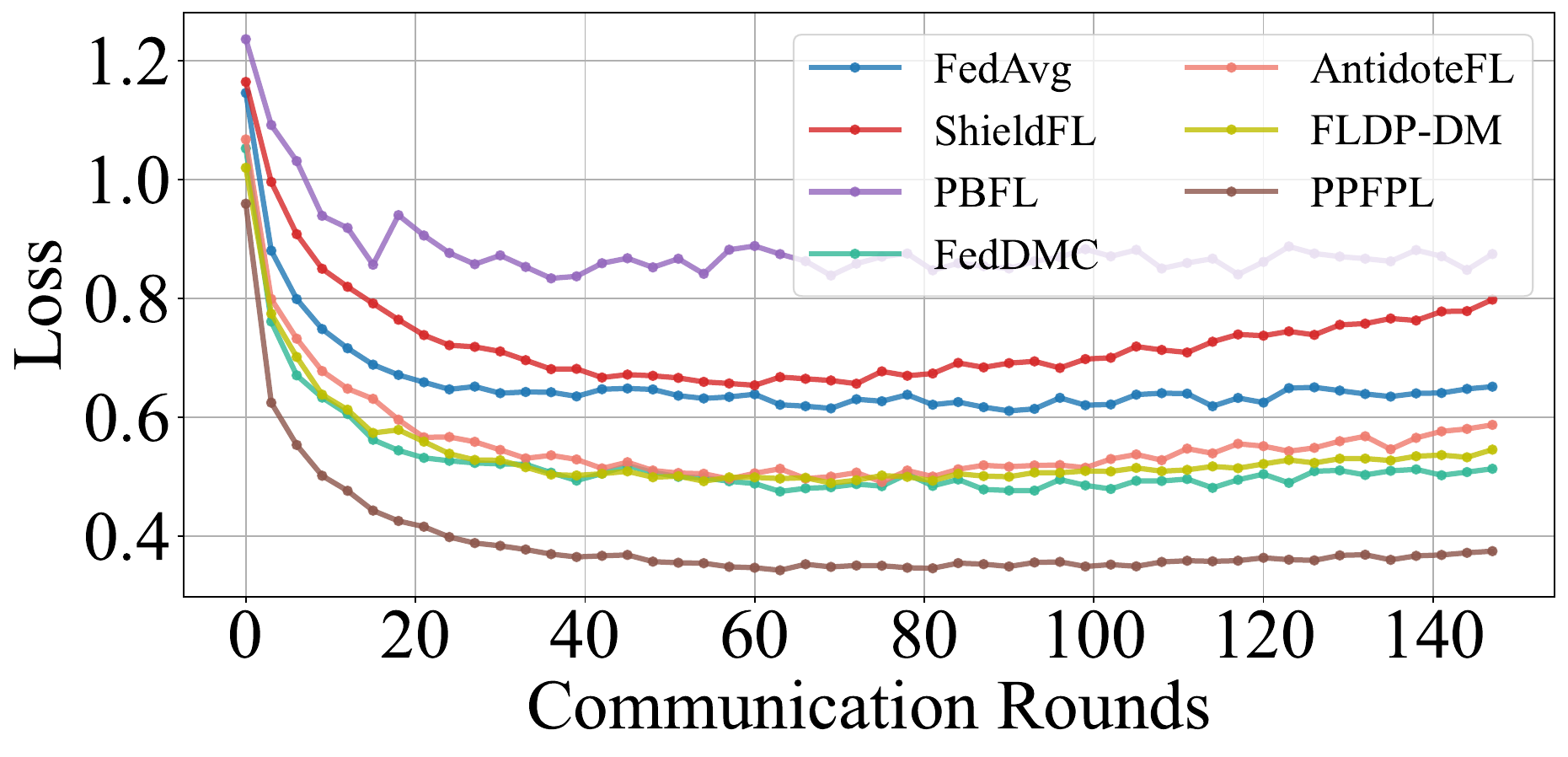}
  }
  \subfloat[Label attacks, CIFAR10.]
  {
      \label{102523556}  \includegraphics[width=0.44\linewidth]{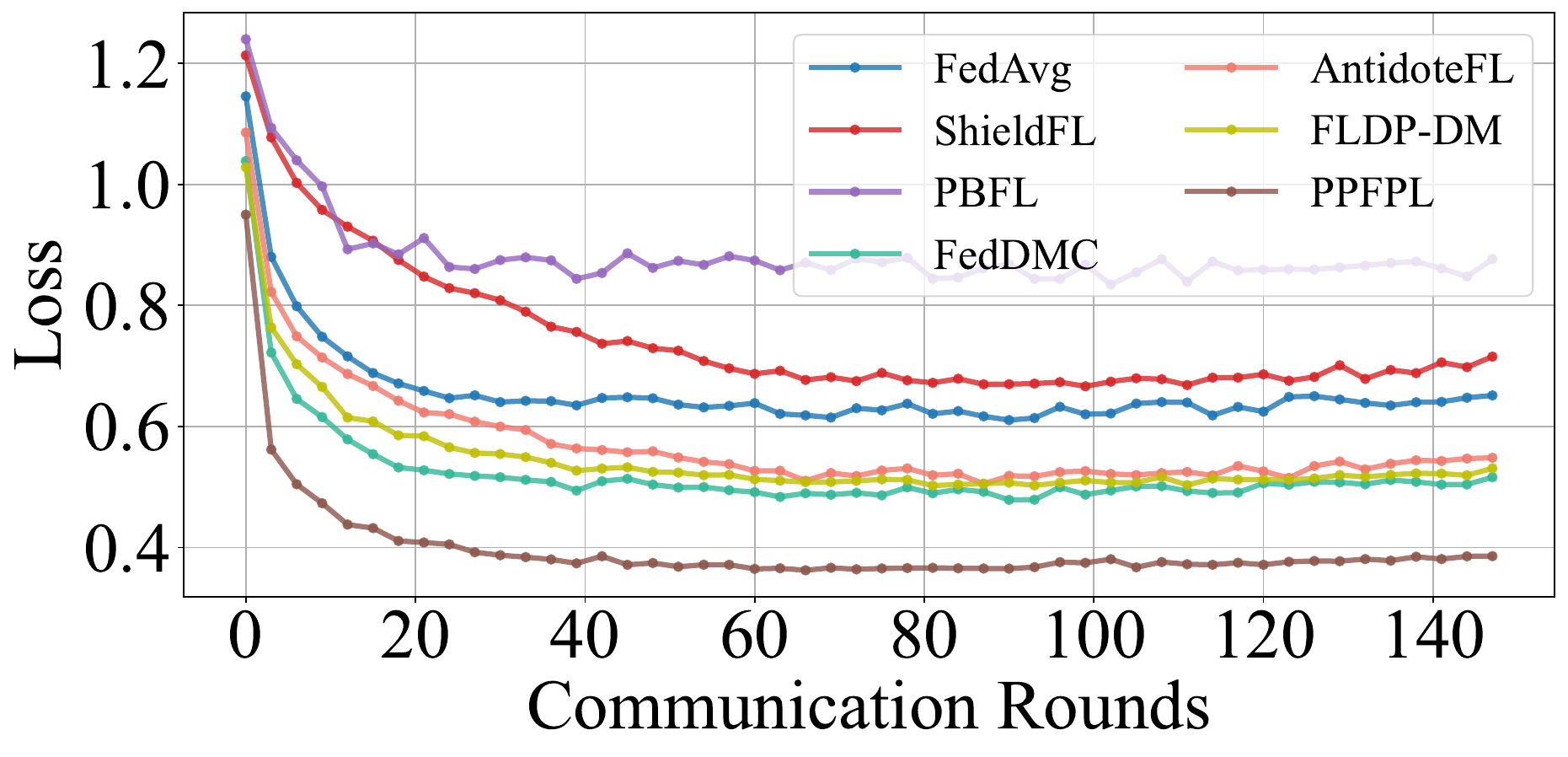}
  }
  \caption{Loss comparison between PPFPL and existing schemes  in poisoned Non-IID data.}
  \label{10252355}
\end{figure}

\subsubsection{Different Data Distributions for PPFPL}
To further test the security of PPFPL under different data distributions, we evaluate the accuracy of  PPFPL against feature attacks under different Non-IID  conditions  in MNIST, FMNIST and CIFAR10.
In addition, we compare its performance with  Krum, Foolsgold, ShieldFL, PBFL, FedDMC,   AntidoteFL, and FLDP-DM.
The specific experimental results are reported in TABLE \ref{112102}.
Notably, since test average accuracy has fluctuated after each communication round,  we  select the average of five highest test average accuracies  across all communication rounds.
From TABLE  \ref{112102}, we can observe that the performance of PPFPL is not easily affected by changes in data distribution, while other schemes are very susceptible to data distribution.
This is attributed to that the client-submitted prototype does not change due to the change in data distribution.
In addition, we observe a slight degradation in the performance of PPFPL with $Att =30\%$ attacks. 
This is because the tampered Non-IID data  reduces the contribution of effective samples to the model, resulting in a slightly lower performance, which is a reasonable phenomenon.
Overall, our PPFPL improves performance in cross-silo PPFL under poisoned Non-IID data while resisting data poisoning attacks, making it suitable for real-world deployments.

\begin{table}[t]
\centering
  \caption{Test Average Accuracy (\%) on MNIST, FMNIST and CIFAR10 with feature attacks.}
\tabcolsep=0.25cm
\scalebox{0.65}{
\begin{tabular}{c|c|c|c|ccccc}
\hline\hline
\multirow{2}{*}{\textbf{Dataset}} & \multirow{2}{*}{\textbf{Method}} & \multirow{2}{*}{\textit{Att} \%} & \multirow{2}{*}{\textit{Std}} & \multicolumn{5}{c}{\textbf{Test Average Accuracy (\%)}}                             \\ \cline{5-9}
                      &                         &                     &                     & \multicolumn{1}{c}{\textit{Avg = }3   } & \multicolumn{1}{c}{\textit{Avg} = 4 } &\textit{ Avg }= 5          &  \textit{ Avg }= 6     &  \textit{ Avg }= 7               \\ \hline
                      \multirow{32}{*}{\textbf{MNIST}}     & \multirow{4}{*}{Krum \cite{blanchard2017machine}}       & \multirow{2}{*}{20}    &   1                   & \multicolumn{1}{c}{95.96} & \multicolumn{1}{c}{96.95} &97.01     &97.04  &\textbf{97.12}            \\
                   &                         &                      &    2                  & \multicolumn{1}{c}{95.67} & \multicolumn{1}{c}{94.28} &  94.88       &95.51  &96.20    \\\cline{3-9}
                      &                         & \multirow{2}{*}{30}    &  1                    & \multicolumn{1}{c}{95.74} & \multicolumn{1}{c}{96.61} &96.82         & \textbf{96.78}  &\textbf{96.92}   \\
                   &                         &                      &    2                  & \multicolumn{1}{c}{95.67} & \multicolumn{1}{c}{94.28} &  94.88    &94.77  &95.80      \\ \cline{2-9}
& \multirow{4}{*}{Foolsgold \cite{fung2018mitigating}}       & \multirow{2}{*}{20}    &   1                   & \multicolumn{1}{c}{96.69} & \multicolumn{1}{c}{94.03} &95.13      &  95.76       & 95.39 \\
                   &                         &                      &    2                  & \multicolumn{1}{c}{96.19} & \multicolumn{1}{c}{94.82} &  96.73  &   95.27       & 95.75    \\\cline{3-9}
                      &                         & \multirow{2}{*}{30}    &  1                    & \multicolumn{1}{c}{96.47} & \multicolumn{1}{c}{93.76} &94.98       &  95.21       & 95.00           \\
                   &                         &                      &    2                  & \multicolumn{1}{c}{96.02} & \multicolumn{1}{c}{94.74} &  96.58     &   95.01       & 95.49    \\\cline{2-9}
 & \multirow{4}{*}{PBFL \cite{9849010}}       & \multirow{2}{*}{20}    &   1                   & \multicolumn{1}{c}{97.78} & \multicolumn{1}{c}{\textbf{97.57}} &96.34    &    \textbf{97.10}   &  96.98   \\
                   &                         &                      &    2                  & \multicolumn{1}{c}{97.36} & \multicolumn{1}{c}{96.17} &  96.26    &94.64   &  95.41     \\\cline{3-9}
                      &                         & \multirow{2}{*}{30}    &  1  & \multicolumn{1}{c}{95.78} & \multicolumn{1}{c}{94.15} & 95.95           &    96.50   &  96.75    \\
                      &                         &                      &       2               & \multicolumn{1}{c}{95.40} & \multicolumn{1}{c}{93.62} &93.48   &94.70   &  94.62    \\\cline{2-9}
 & \multirow{4}{*}{ShieldFL \cite{9762272}}       & \multirow{2}{*}{20}    &   1                   & \multicolumn{1}{c}{96.91} & \multicolumn{1}{c}{94.05} &95.10    &    95.64   & 95.25   \\
                      &                         &                      &                2      & \multicolumn{1}{c}{97.25} & \multicolumn{1}{c}{94.88} & 95.69  & 95.44   & 95.76        \\\cline{3-9}
                      &                         & \multirow{2}{*}{30}    &             1          & \multicolumn{1}{c}{96.77} & \multicolumn{1}{c}{93.79} &94.93     & 95.60      & 95.81   \\
                      &                         &                      &                2      & \multicolumn{1}{c}{97.08} & \multicolumn{1}{c}{94.69} & 95.46     & 95.73      & 94.46    \\\cline{2-9}
 & \multirow{4}{*}{FedDMC \cite{mu2024feddmc}}       & \multirow{2}{*}{20}    &   1                   & \multicolumn{1}{c}{97.62} & \multicolumn{1}{c}{97.46} &97.16    &    96.86   &  96.53   \\
                   &                         &                      &    2                  & \multicolumn{1}{c}{97.53} & \multicolumn{1}{c}{97.32} &  97.04    &96.73   &  96.39     \\\cline{3-9}
                      &                         & \multirow{2}{*}{30}    &  1  & \multicolumn{1}{c}{97.62} & \multicolumn{1}{c}{96.98} & 96.88           &    96.46   &  96.17    \\
                      &                         &                      &       2               & \multicolumn{1}{c}{97.43} & \multicolumn{1}{c}{96.58} &96.34   &96.26   &  94.33    \\\cline{2-9}
 & \multirow{4}{*}{AntidoteFL \cite{liu2025antidotefl}}       & \multirow{2}{*}{20}    &   1                   & \multicolumn{1}{c}{97.58} & \multicolumn{1}{c}{97.40} &97.09    &    96.73   & 96.34   \\
                      &                         &                      &                2      & \multicolumn{1}{c}{97.31} & \multicolumn{1}{c}{97.18} & 96.85  & 95.45   & 96.30        \\\cline{3-9}
                      &                         & \multirow{2}{*}{30}    &             1          & \multicolumn{1}{c}{97.35} & \multicolumn{1}{c}{96.39} &96.42     & 96.06      & 96.20   \\
                      &                         &                      &                2      & \multicolumn{1}{c}{97.08} & \multicolumn{1}{c}{94.69} & 95.46     & 95.73      & 94.46    \\\cline{2-9}
 & \multirow{4}{*}{FLDP-DM \cite{10966462}}       & \multirow{2}{*}{20}    &   1                   & \multicolumn{1}{c}{97.60} & \multicolumn{1}{c}{97.22} &97.15    &    96.78   & 96.37   \\
                      &                         &                      &                2      & \multicolumn{1}{c}{97.48} & \multicolumn{1}{c}{97.16} & 96.93  & 96.40   & 96.08        \\\cline{3-9}
                      &                         & \multirow{2}{*}{30}    &             1          & \multicolumn{1}{c}{97.14} & \multicolumn{1}{c}{96.85} &96.77     & 96.76      & 96.42   \\
                      &                         &                      &                2      & \multicolumn{1}{c}{96.95} & \multicolumn{1}{c}{96.36} & 96.12     & 95.91      & 94.87    \\\cline{2-9}
& \multirow{4}{*}{PPFPL}       & \multirow{2}{*}{20}    &    1                  & \multicolumn{1}{c}{\textbf{97.87}} & \multicolumn{1}{c}{97.53} &\textbf{97.35}    &   97.08    &  96.89      \\
                      &                         &                      &             2         & \multicolumn{1}{c}{\textbf{97.74}} & \multicolumn{1}{c}{\textbf{97.49}} & \textbf{97.43}   & \textbf{96.82}    &  \textbf{96.56}        \\\cline{3-9}
                      &                         & \multirow{2}{*}{30}    &       1               & \multicolumn{1}{c}{\textbf{97.85}} & \multicolumn{1}{c}{\textbf{97.04}} &\textbf{97.02}     & 96.52  & 96.37        \\
                      &                         &                      &               2       & \multicolumn{1}{c}{\textbf{97.52}} & \multicolumn{1}{c}{\textbf{96.92}} & \textbf{96.70}    & \textbf{96.50}& \textbf{96.48}    \\ \hline
\multirow{32}{*}{\textbf{FMNIST}}
 & \multirow{4}{*}{Krum \cite{blanchard2017machine}}       & \multirow{2}{*}{20}    &                     1 & \multicolumn{1}{c}{84.24} & \multicolumn{1}{c}{84.80} & 86.03& 86.52 &     87.35    \\
                      &                         &                      &           2           & \multicolumn{1}{c}{85.02} & \multicolumn{1}{c}{83.64} &  86.82    & 84.75 &     86.08              \\\cline{3-9}
                      &                         & \multirow{2}{*}{30}    &    1                  & \multicolumn{1}{c}{83.72} & \multicolumn{1}{c}{84.38} & 85.49    &  85.12 &     86.79   \\
                      &                         &                      &           2           & \multicolumn{1}{c}{84.62} & \multicolumn{1}{c}{82.73} &  85.73         & 82.86 &     84.09          \\\cline{2-9}
  & \multirow{4}{*}{Foolsgold \cite{fung2018mitigating}}       & \multirow{2}{*}{20}    &                     1 & \multicolumn{1}{c}{84.88} & \multicolumn{1}{c}{84.73} & 84.04  &  83.55  & 82.62 \\
                      &                         &                      &           2           & \multicolumn{1}{c}{82.49} & \multicolumn{1}{c}{83.88} &  85.31     &  81.75  & 82.24           \\\cline{3-9}
                      &                         & \multirow{2}{*}{30}    &    1                  &  \multicolumn{1}{c}{84.06} & \multicolumn{1}{c}{84.01} & 83.50    &  82.70  & 82.04    \\
                      &                         &                      &           2           & \multicolumn{1}{c}{81.86} & \multicolumn{1}{c}{83.43} &  84.98       &  81.19  & 82.68             \\\cline{2-9}
   & \multirow{4}{*}{PBFL \cite{9849010}}       & \multirow{2}{*}{20}    &                     1 & \multicolumn{1}{c}{87.01} & \multicolumn{1}{c}{80.37} & 76.93   & 74.26   & 72.10 \\
                      &                         &                      &           2           & \multicolumn{1}{c}{87.42} & \multicolumn{1}{c}{76.31} &  74.47        & 72.99   & 72.43        \\\cline{3-9}
                      &                         & \multirow{2}{*}{30}    &    1                  & \multicolumn{1}{c}{86.92} & \multicolumn{1}{c}{75.45} & 73.06        & 73.70  & 72.84        \\
                      &                         &                      &                 2     & \multicolumn{1}{c}{79.34} & \multicolumn{1}{c}{72.68} & 72.81        & 70.57   & 70.49            \\\cline{2-9}
 & \multirow{4}{*}{ShieldFL \cite{9762272}}       & \multirow{2}{*}{20}    &                     1 & \multicolumn{1}{c}{87.01} & \multicolumn{1}{c}{78.90} & 78.06  & 70.90 &  70.26\\
                      &                         &                      &           2           & \multicolumn{1}{c}{76.47} & \multicolumn{1}{c}{77.89} &  70.96   &    67.47 &  66.80         \\\cline{3-9}
                      &                         & \multirow{2}{*}{30}    &    1                  & \multicolumn{1}{c}{86.24} & \multicolumn{1}{c}{78.46} & 77.51  &69.33 &  68.81    \\
                      &                         &                      &           2           & \multicolumn{1}{c}{75.81} & \multicolumn{1}{c}{77.35} &  70.57    &    66.42 &  65.46    \\\cline{2-9}
 & \multirow{4}{*}{FedDMC \cite{mu2024feddmc}}       & \multirow{2}{*}{20}    &   1                   & \multicolumn{1}{c}{90.60} & \multicolumn{1}{c}{88.21} &87.03    &    86.40   &  86.11   \\
                   &                         &                      &    2                  & \multicolumn{1}{c}{89.53} & \multicolumn{1}{c}{88.70} &  87.27    &87.20   &  87.13     \\\cline{3-9}
                      &                         & \multirow{2}{*}{30}    &  1  & \multicolumn{1}{c}{89.73} & \multicolumn{1}{c}{88.03} & 86.38           &    86.70   &  86.14    \\
                      &                         &                      &       2               & \multicolumn{1}{c}{89.40} & \multicolumn{1}{c}{87.52} &86.23   &86.40   &  86.12    \\\cline{2-9}
 & \multirow{4}{*}{AntidoteFL \cite{liu2025antidotefl}}       & \multirow{2}{*}{20}    &   1                   & \multicolumn{1}{c}{89.13} & \multicolumn{1}{c}{87.20} &86.58    &    85.36   & 85.25   \\
                      &                         &                      &                2      & \multicolumn{1}{c}{88.72} & \multicolumn{1}{c}{86.13} & 85.34  & 85.16   & 86.48        \\\cline{3-9}
                      &                         & \multirow{2}{*}{30}    &             1          & \multicolumn{1}{c}{87.25} & \multicolumn{1}{c}{86.24} &85.13     & 84.10      & 84.25   \\
                      &                         &                      &                2      & \multicolumn{1}{c}{87.10} & \multicolumn{1}{c}{86.61} & 85.10     & 84.39      & 84.90    \\\cline{2-9}
 & \multirow{4}{*}{FLDP-DM \cite{10966462}}       & \multirow{2}{*}{20}    &   1                   & \multicolumn{1}{c}{90.48} & \multicolumn{1}{c}{88.06} &86.53    &    86.17   & 85.86   \\
                      &                         &                      &                2      & \multicolumn{1}{c}{89.32} & \multicolumn{1}{c}{87.94} & 86.95  & 85.09   & 84.77        \\\cline{3-9}
                      &                         & \multirow{2}{*}{30}    &             1          & \multicolumn{1}{c}{88.64} & \multicolumn{1}{c}{87.35} &85.90     & 85.47      & 85.30   \\
                      &                         &                      &                2      & \multicolumn{1}{c}{88.79} & \multicolumn{1}{c}{86.92} & 85.71     & 85.06      & 84.55    \\\cline{2-9}
& \multirow{4}{*}{PPFPL}       & \multirow{2}{*}{20}    &     1                & \multicolumn{1}{c}{\textbf{91.38}} & \multicolumn{1}{c}{\textbf{90.18}} &    \textbf{89.27} &    \textbf{88.56}&    \textbf{88.13}    \\
                      &                         &                      &                 2     & \multicolumn{1}{c}{\textbf{90.48}} & \multicolumn{1}{c}{\textbf{89.63}} &\textbf{88.45}    &    \textbf{88.79}&    \textbf{88.20}    \\\cline{3-9}
                      &                         & \multirow{2}{*}{30}    &      1                & \multicolumn{1}{c}{\textbf{90.97}} & \multicolumn{1}{c}{\textbf{90.08}} &\textbf{88.93}    & \textbf{88.33}&    \textbf{88.14}   \\
                      &                         &                      &                 2     & \multicolumn{1}{c}{\textbf{90.40}} & \multicolumn{1}{c}{\textbf{88.62}} &\textbf{87.74}   & \textbf{88.41}&    \textbf{87.97}  \\ \hline
\multirow{32}{*}{\textbf{CIFAR10}}
& \multirow{4}{*}{Krum \cite{blanchard2017machine}}       & \multirow{2}{*}{20}    &   1                   & \multicolumn{1}{c}{62.91} & \multicolumn{1}{c}{61.24} &68.19  &68.87 & 69.22       \\
                      &                         &                      &                2      & \multicolumn{1}{c}{57.90} & \multicolumn{1}{c}{57.94} & 68.33   &68.65 &   68.79       \\\cline{3-9}
                      &                         & \multirow{2}{*}{30}    &             1         & \multicolumn{1}{c}{60.10} & \multicolumn{1}{c}{60.73} &66.36  &  68.70 & 69.01          \\
                      &                         &                      &                  2    & \multicolumn{1}{c}{56.67} & \multicolumn{1}{c}{57.29} & 67.03    &68.49 &   68.51      \\\cline{2-9}
& \multirow{4}{*}{Foolsgold \cite{fung2018mitigating}}       & \multirow{2}{*}{20}    &   1                   & \multicolumn{1}{c}{60.91} & \multicolumn{1}{c}{59.24} &66.19    &66.45  &   66.52 \\
                      &                         &                      &                2      & \multicolumn{1}{c}{56.90} & \multicolumn{1}{c}{55.94} & 66.33     & 66.62  &   66.70      \\\cline{3-9}
                      &                         & \multirow{2}{*}{30}    &             1        & \multicolumn{1}{c}{60.37} & \multicolumn{1}{c}{58.68} &65.64  &66.26  &   66.44    \\
                      &                         &                      &                2      & \multicolumn{1}{c}{56.17} & \multicolumn{1}{c}{55.09} & 65.85    &   66.40  & 66.41  \\\cline{2-9}
& \multirow{4}{*}{PBFL \cite{9849010}}       & \multirow{2}{*}{20}    &                     1 & \multicolumn{1}{c}{38.06} & \multicolumn{1}{c}{39.18} &  41.81 &  42.42    &   42.75   \\
                      &                         &                      &                   2   & \multicolumn{1}{c}{39.46} & \multicolumn{1}{c}{40.54} &  41.30   & 42.12    &   42.76     \\\cline{3-9}
                      &                         & \multirow{2}{*}{30}    &              1        & \multicolumn{1}{c}{37.18} & \multicolumn{1}{c}{39.02} & 41.39   &   42.32    &   42.41      \\
                      &                         &                      &                 2     & \multicolumn{1}{c}{37.80} & \multicolumn{1}{c}{37.52} &41.08    & 41.96    &   42.67    \\\cline{2-9}
 & \multirow{4}{*}{ShieldFL \cite{9762272}}       & \multirow{2}{*}{20}    &   1                   & \multicolumn{1}{c}{61.62} & \multicolumn{1}{c}{61.94} &68.31  & 69.53   & 68.66  \\
                      &                         &                      &                2      & \multicolumn{1}{c}{57.84} & \multicolumn{1}{c}{57.26} & 66.14     & 67.43   & 68.62    \\\cline{3-9}
                      &                         & \multirow{2}{*}{30}    &             1         & \multicolumn{1}{c}{60.94} & \multicolumn{1}{c}{61.25} &67.50   & 67.01   & 68.24      \\
                      &                         &                      &                2      & \multicolumn{1}{c}{57.10} & \multicolumn{1}{c}{56.76} & 65.47    & 66.42   & 67.69       \\\cline{2-9}
 & \multirow{4}{*}{FedDMC \cite{mu2024feddmc}}       & \multirow{2}{*}{20}    &   1                   & \multicolumn{1}{c}{75.28} & \multicolumn{1}{c}{74.16} &74.37    &    73.86   &  \textbf{73.50}   \\
                   &                         &                      &    2                  & \multicolumn{1}{c}{73.52} & \multicolumn{1}{c}{72.40} &  72.37    &72.06   &  71.58     \\\cline{3-9}
                      &                         & \multirow{2}{*}{30}    &  1  & \multicolumn{1}{c}{74.60} & \multicolumn{1}{c}{74.28} & 73.97           &    73.62   &   \textbf{73.39}    \\
                      &                         &                      &       2               & \multicolumn{1}{c}{72.67} & \multicolumn{1}{c}{72.48} &72.11   &71.85   &  70.78    \\\cline{2-9}
 & \multirow{4}{*}{AntidoteFL \cite{liu2025antidotefl}}       & \multirow{2}{*}{20}    &   1                   & \multicolumn{1}{c}{72.66} & \multicolumn{1}{c}{72.27} &71.19    &    72.19   & 70.65   \\
                      &                         &                      &                2      & \multicolumn{1}{c}{70.86} & \multicolumn{1}{c}{70.40} & 70.19  & 69.84   & 69.75        \\\cline{3-9}
                      &                         & \multirow{2}{*}{30}    &             1          & \multicolumn{1}{c}{71.49} & \multicolumn{1}{c}{71.61} &70.43     & 71.24      & 69.20   \\
                      &                         &                      &                2      & \multicolumn{1}{c}{69.82} & \multicolumn{1}{c}{68.70} & 68.42     & 68.83      & 68.17    \\\cline{2-9}
 & \multirow{4}{*}{FLDP-DM \cite{10966462}}       & \multirow{2}{*}{20}    &   1                   & \multicolumn{1}{c}{74.89} & \multicolumn{1}{c}{72.48} &72.07    &    70.63   & 68.30   \\
                      &                         &                      &                2      & \multicolumn{1}{c}{72.30} & \multicolumn{1}{c}{71.55} & 70.50  & 69.38   & 69.60        \\\cline{3-9}
                      &                         & \multirow{2}{*}{30}    &             1          & \multicolumn{1}{c}{73.86} & \multicolumn{1}{c}{71.79} &70.49     & 70.43      & 68.02   \\
                      &                         &                      &                2      & \multicolumn{1}{c}{72.05} & \multicolumn{1}{c}{71.33} & 69.54     & 69.26      & 67.40    \\\cline{2-9}
& \multirow{4}{*}{PPFPL}       & \multirow{2}{*}{20}    &   1                   & \multicolumn{1}{c}{\textbf{83.41}} & \multicolumn{1}{c}{\textbf{80.95}} &\textbf{77.82}     &\textbf{74.76} & 71.35     \\
                      &                         &                      &                2      & \multicolumn{1}{c}{\textbf{83.44}} & \multicolumn{1}{c}{\textbf{83.24}} & \textbf{77.19}         &\textbf{74.64} & \textbf{71.60}   \\\cline{3-9}
                      &                         & \multirow{2}{*}{30}    &             1         & \multicolumn{1}{c}{\textbf{82.31}} & \multicolumn{1}{c}{\textbf{81.03}} &\textbf{77.18}       &\textbf{74.53} & 71.15     \\
                      &                         &                      &                  2    & \multicolumn{1}{c}{\textbf{83.16}} & \multicolumn{1}{c}{\textbf{82.37}} & \textbf{77.47}     &\textbf{74.62} & \textbf{71.19}     \\ \hline\hline
\end{tabular}}
\label{112102}
\end{table}

\subsubsection{Secure Aggregation Protocol  for PPFPL}

To test the effectiveness of the aggregation protocol, we  evaluate the performance of PPFPL with $\chi$ of -1, 0, 0.2, and 0.5, respectively.
 Here, $\chi = -1$ means that  the protocol  performs  normalization verification and average aggregation.
The data distribution is set to Non-IID (i.e., \textit{Avg}=3, \textit{Std}=2).
In addition,    $\boldsymbol{X}$ in Table \ref{112101} denotes that client-submitted prototypes are unnormalized  and aggregated using simple averaging.
As shown in TABLE \ref{112101}, we can observe from $\boldsymbol{X}$ that the performance of  PPFPL  suffers degradation to some extent compared to other settings.
The performance degradation is because malicious prototype poisons the global prototype.
However, the poisoned global prototype can only affect part of  local model training, and cannot affect the minimization of local classification loss.
Thus, the model accuracy among benign clients in PPFPL remains relatively robust against the influence of malicious clients.
Furthermore, we observe that the presence of  detection threshold improves the performance of PPFPL, which indicates  that the secure aggregation protocol computes Byzantine-robust aggregation results.
The above observation raises the question: whether PPFPL can maintain high performance in high proportion of attacks? We address this question in the subsequent experiments.
\begin{table}[t]
\centering
  \caption{Test Average Accuracy (\%) on MNIST, FMNIST and CIFAR10 with diffierent $\chi$ on feature attacks.}
\begin{threeparttable}
\tabcolsep=0.43cm
\scalebox{0.75}{
\begin{tabular}{c|c|ccccc}
\hline\hline
\textbf{Datasets}   &\textit{Att} &\textbf{$\chi$} = \textbf{-1}& \textbf{$\chi$} = \textbf{0} & \textbf{$\chi$} = \textbf{0.2} & \textbf{$\chi$} = \textbf{0.5} & $\boldsymbol{X}$ \\ \hline
\multirow{4}{*}{\textbf{MNIST} }
&10\%&96.23 &97.86   &97.62     &98.16     &95.73     \\
					&20\%&96.10 &97.20   &97.55     &97.03     &94.35   \\
				&30\%&95.21 &97.10   &97.47     &97.82     &94.20     \\
 				&40\%&94.54 &96.75   &96.51     &96.48     &93.14    \\ \hline
\multirow{4}{*}{\textbf{FMNIST} }    &10\%&88.37 &90.68   &90.23     &90.78     &86.14    \\
&20\% &88.20 &90.28   & 90.36    &90.48     &86.12     \\
    &30\%&88.03  &90.14   &90.06     &90.16     &85.71    \\
   	&40\%&87.62  &90.23   &90.41     &90.04     &85.29   \\ \hline
\multirow{4}{*}{\textbf{CIFAR10} }  &10\% &81.66 &83.59   & 83.10    &83.24    &80.97    \\
&20\% &81.61 &83.57   & 83.67    &83.31     &79.53     \\
   &30\% &81.38 &83.45   & 83.35    &83.21     &78.08     \\
  &40\% &80.74 &83.05   & 83.27    &82.94     &78.00    \\ \hline\hline

\end{tabular}}
\end{threeparttable}
\label{112101}
\end{table}

\begin{figure}[t]
  \centering
  \subfloat[\textit{Att} = 20\%]
  {
      \label{110911}  \includegraphics[width=0.44\linewidth]{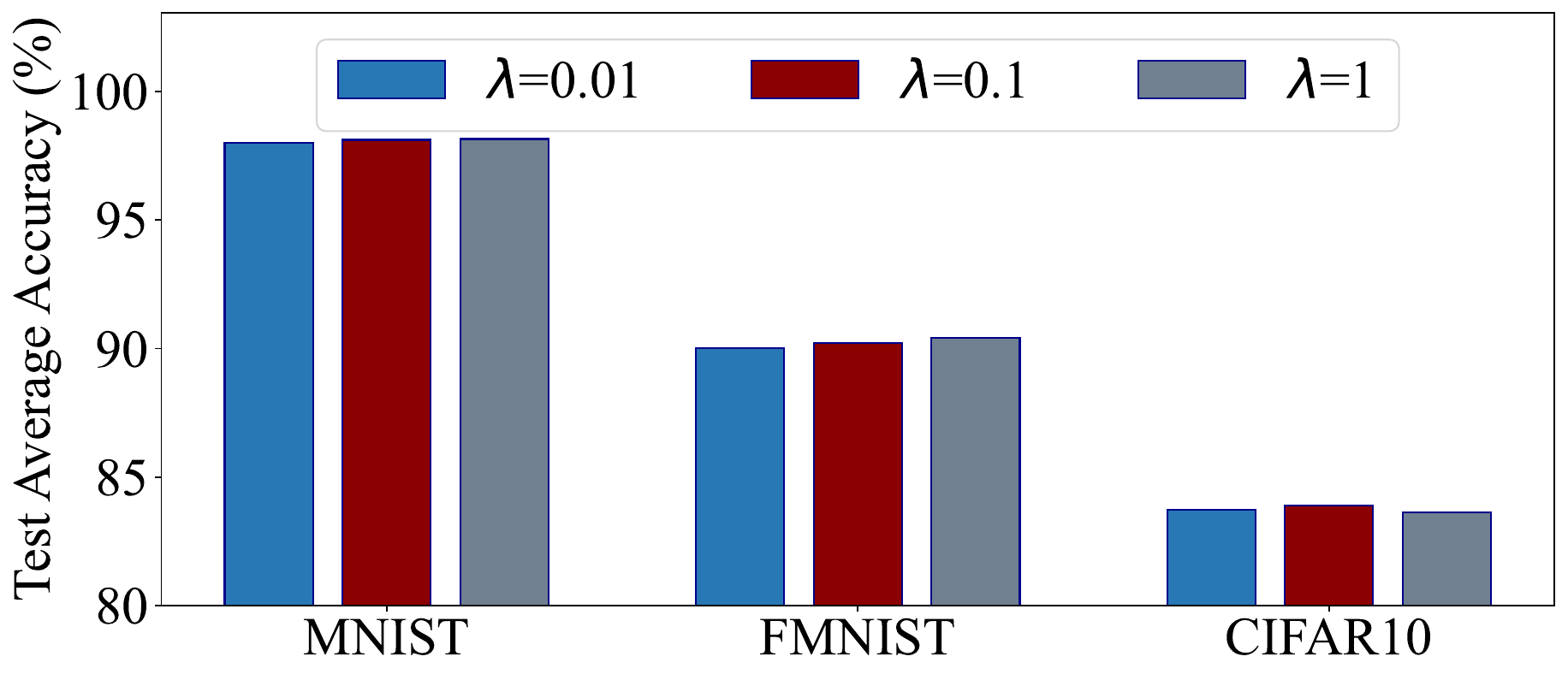}

  }
  \subfloat[\textit{Att} = 40\%]
  {
      \label{110912}  \includegraphics[width=0.44\linewidth]{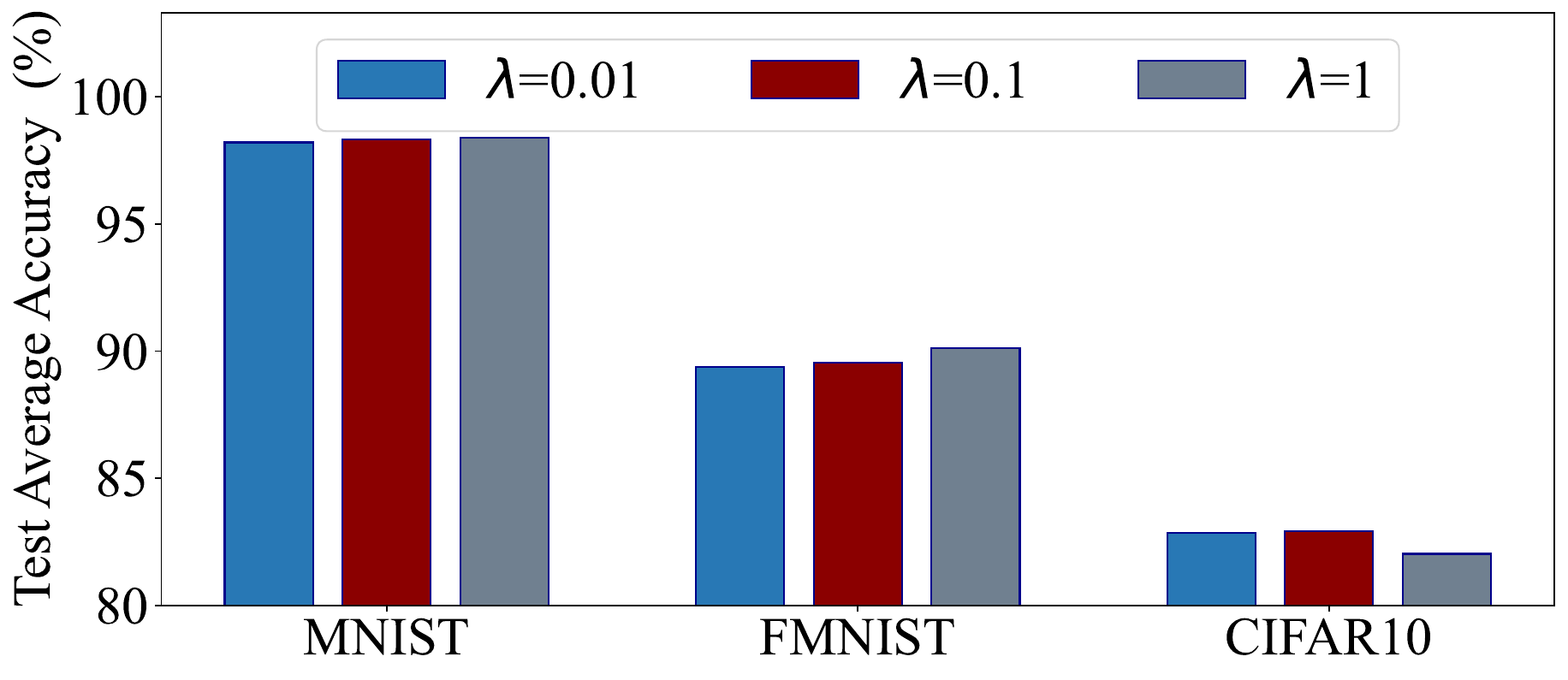}
  }
  
  \subfloat[ \textit{Att} = 60\%]
  {
      \label{110913}  \includegraphics[width=0.44\linewidth]{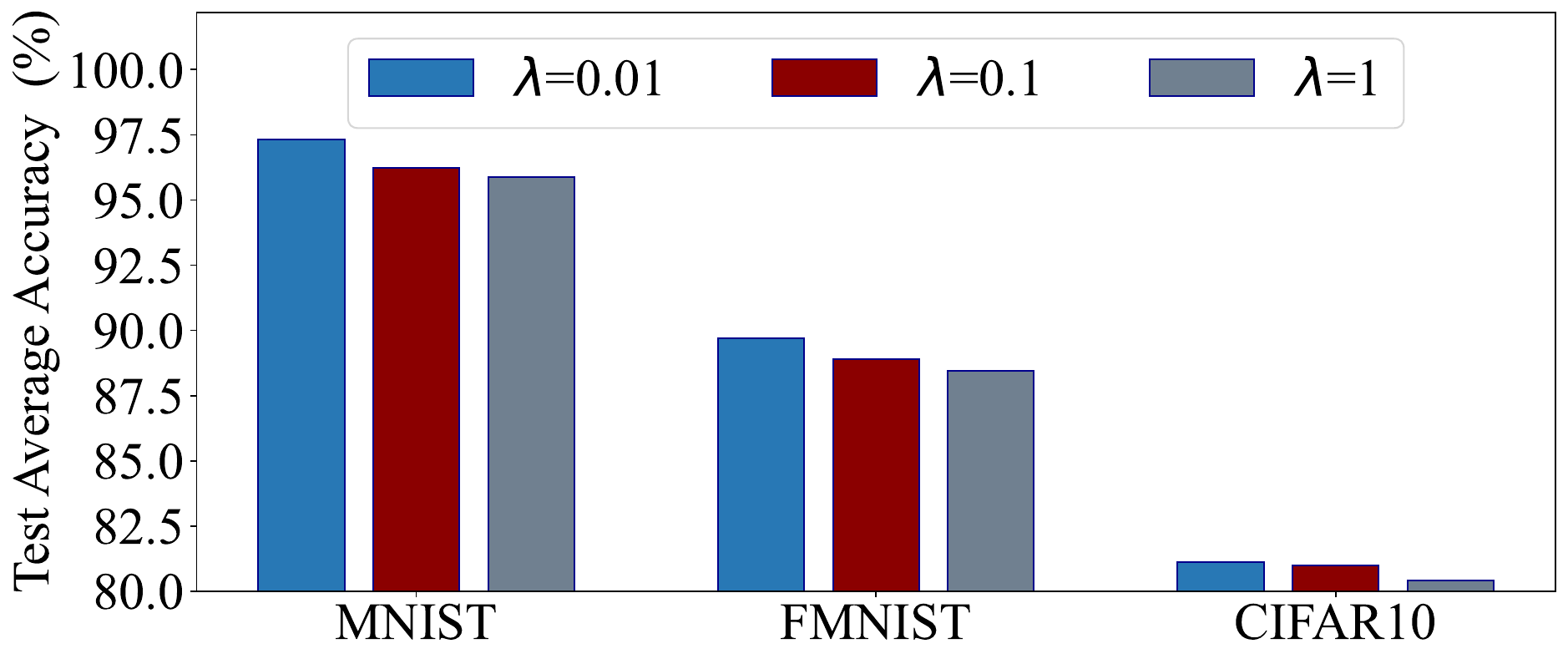}
  }
  \subfloat[ \textit{Att} = 80\%]
  {
      \label{110914}\includegraphics[width=0.44\linewidth]{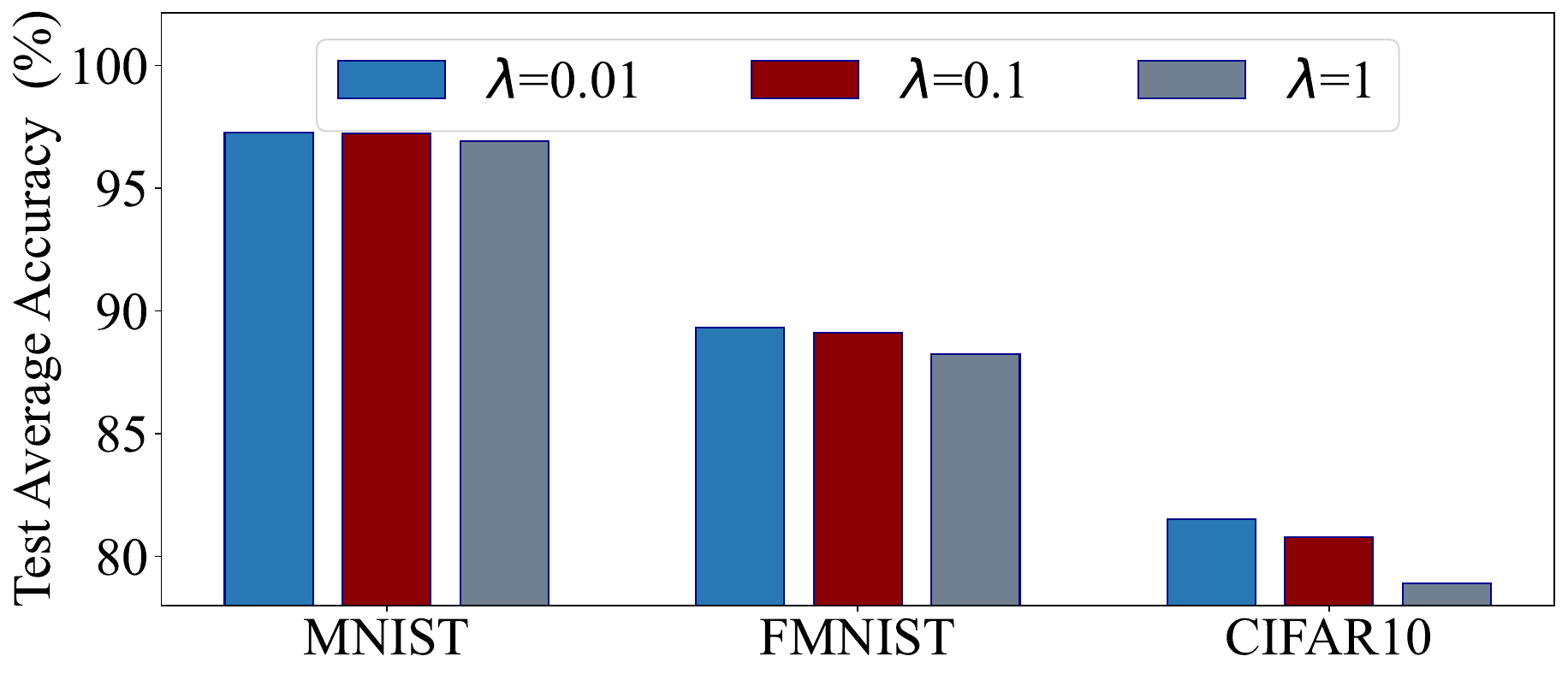}
  }
\caption{Test average accuracy of different $\lambda$ with different  proportions \textit{Att} with feature attacks.}
\label{11091}
\end{figure}


\subsubsection{High Proportion   Attacks for PPFPL}
To   evaluate the performance of PPFPL under high proportion attacks, we test it with  \textit{Att} = 20\%, 40\%, 60\% and 80\%.
In addition, we set   $\lambda$ as 0.01, 0.1 and 1 under Non-IID data distribution (\textit{Avg}=3, \textit{Std}=2).
The $\lambda$ denotes the  importance weight of the auxiliary term  in the loss function,  which can be considered as the degree of influence among clients.
The experimental results are shown in Fig. \ref{11091}.
Surprisingly, We  observe that PPFPL still has high performance when \textit{Att} is 60\% or even 80\%, which indicates that the training of benign clients is not interfered by malicious clients.
This is because the benign clients do not rely solely on information distributed by  two servers in federated prototype learning, but relies heavily on its local  training data.
Furthermore, we observe that when \textit{Att} is 20\% or 40\%, the accuracy  increases slightly with the growth  of $\lambda$  in the three datasets. On the contrary, when \textit{Att} is 60\% or 80\%, the accuracy tends to decrease with the increase of  $\lambda$.
This is because   the  larger $\lambda$ strengthens the collaboration among clients, but makes  PPFPL  more vulnerable to data poisoning attacks from malicious clients.
Conversely, the smaller value of $\lambda$ weakens  collaboration among clients, but  enhances resistance to data poisoning attacks.
Therefore, PPFPL can appropriately adjust the size of $\lambda$ according to actual conditions.

 \subsubsection{Dynamic Data Poisoning Attacks for PPFPL}

In practical cross-silo scenarios, the behavior of malicious clients may vary across different rounds. 
To emulate this scenario, we configure each malicious client to alternate between feature attacks and label attacks in consecutive rounds, termed dynamic poisoning attacks. 
We evaluate the performance of PPFPL under the dynamic attacks.
The experimental results (shown in TABLE \ref{9111951}) demonstrate that the performance of PPFPL only slightly decreases with the increase in the number of malicious clients, mainly due to the reduction in benign samples.
Thus, PPFPL maintains superior performance under dynamic poisoning attacks.
This is because, regardless of how  malicious clients manipulate  their submitted prototypes, our secure aggregation protocol can limit their impact to the directional perturbation.
The protocol filters out prototypes with excessive directional deviation, ensuring that PPFPL is resilient to dynamic poisoning attacks.

\begin{table}[t]
\centering
  \caption{Test Average Accuracy (\%) of PPFPL against dynamic  poisoning attacks.}
\begin{threeparttable}
\tabcolsep=0.42cm
\scalebox{0.7}{
\begin{tabular}{c|c|ccccc}
\hline\hline
\multirow{2}{*}{\textbf{Datasets}}   &\multirow{2}{*}{\makecell{\textit{Data}\\ \textit{Distributions}}} &\multicolumn{5}{c}{Proportion of Malicious Clients
 (\%)}   \\ 
  &&0\% &10\%&20\%  &30\% & 40\%  \\ \hline
\multirow{6}{*}{\textbf{CIFAR10} }
&\textit{Avg}=3, \textit{Avg}=1&83.05&83.20 &82.79  &82.27  & 81.55       \\
					&\textit{Avg}=3, \textit{Avg}=2&83.29&83.13 &82.83  &82.04    &81.26  \\
				&\textit{Avg}=4, \textit{Avg}=1&82.04&81.87 &80.71  &80.93   &80.60 \\
 				&\textit{Avg}=4, \textit{Avg}=2&82.11&82.06 &82.45   &81.75  & 81.21   \\ 
  				&\textit{Avg}=5, \textit{Avg}=1&77.82&77.31 &77.24  &76.92  & 76.49   \\ 
   				&\textit{Avg}=5, \textit{Avg}=2&77.95&78.12 &78.03   &77.80  & 76.22   \\ \hline\hline
\end{tabular}}
\end{threeparttable}
\label{9111951}
\end{table}

 \subsubsection{Stability of PPFPL}
\begin{figure}
  \centering
\begin{scriptsize}
  \subfloat[Feature attacks]
  {
      \label{22523161}  \includegraphics[width=0.44\linewidth]{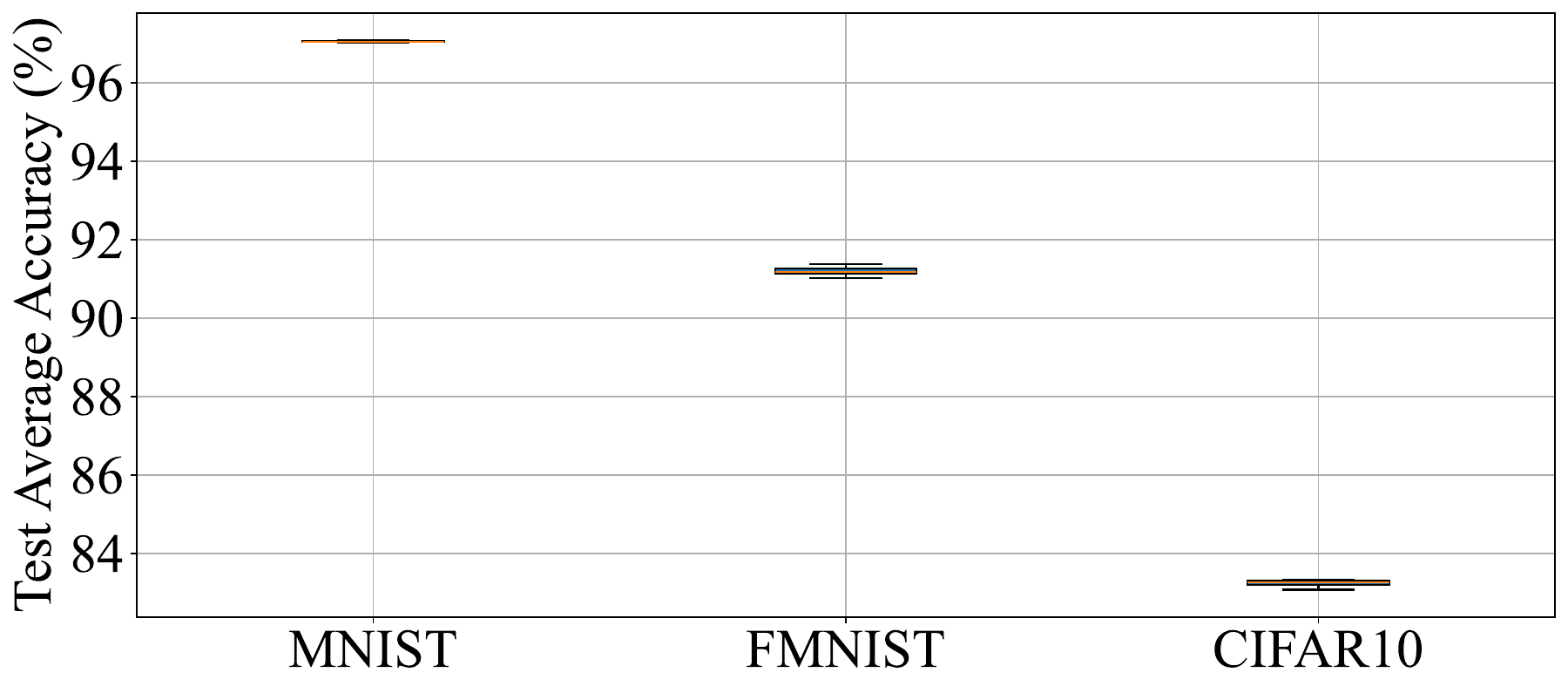}

  }
  \subfloat[Label attacks]
  {
      \label{22523162}  \includegraphics[width=0.44\linewidth]{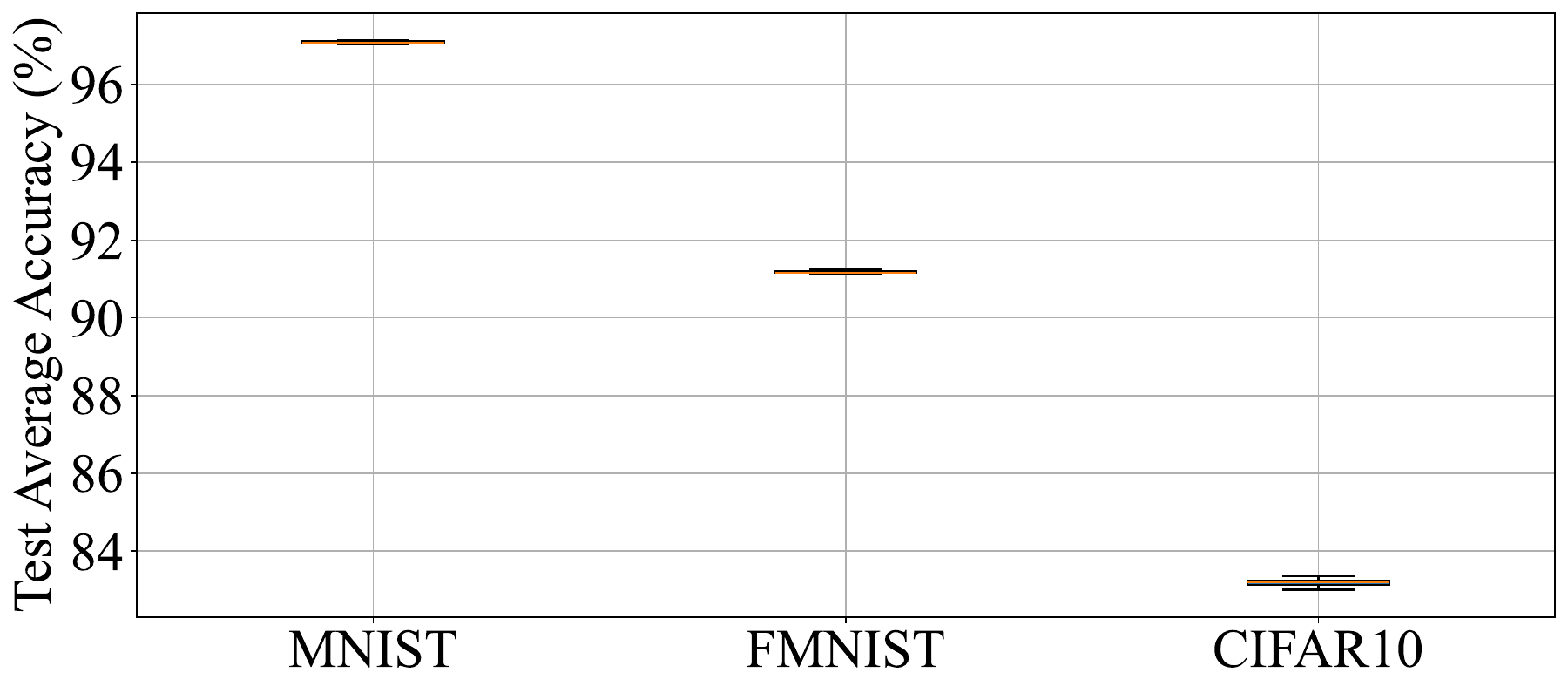}
  }
\caption{Stability of PPFPL under MNIST, FMNIST, and CIFAR10.}
\label{2252316}
\end{scriptsize}
\end{figure}

It is observed from Fig. \ref{10241} and \ref{10252355} that the performance of PPFPL exhibits some fluctuations. Therefore, it is important to evaluate  the stability of PPFPL over multiple randomized experiments. 
To verify the stability of PPFPL, we conduct experiments on MNIST, FMNIST, and CIFAR10 datasets, with 20\% of clients being malicious.
The data distribution follows the Non-IID setting (\textit{Avg}=3, \textit{Std}=2).
Each experiment is repeated 10 times.
The  boxplots in Fig. \ref{2252316} provides a clear visualization of the stability of PPFPL.
The results show that PPFPL's performance fluctuates less and remains stable against  data poisoning attacks.
The stability of PPFPL proves its reliability in real-world scenarios.

\subsubsection{Scalability  of PPFPL}

Federated learning is often deployed on an unknown number of clients.
Scalability directly determines whether PPFPL can maintain stable performance under different client scales.
To evaluate the scalability of PPFPL, we conduct  experiments on  MNIST, FMNIST, and CIFAR10 datasets with 20\% malicious clients,  varying the number of clients as 20,   30, 40,   80, and 100.
The data distribution is Non-IID (i.e., $\textit{Avg}=3,\textit{Std}=1$) setting.
Fig. \ref{312231} shows the scalability of PPFPL against data poisoning attacks.
It is observed that PPFPL maintains stable performance when the number of clients is 40.
However, the performance of PPFPL   degrades significantly when the number increases to 100. 
For example, the accuracy drops by 1.72\% on MNIST and 14.11\% on CIFAR10.
The main reason is that PPFPL follows a personalized training paradigm in which each client relies on its own data for training.
As the number of clients increases, the amount of data allocated to each client decreases.
The small amount of data in each client makes it difficult to obtain a high-quality model through local training.
This shows that the performance of PPFPL is closely related to the number of clients and ultimately depends on on the data available to each client.
Therefore, in large-scale networks, each client needs to have sufficient data to fully exploit the advantages of PPFPL.
Under our experimental conditions, the data volumes of  MNIST, FMNIST, and CIFAR10 datasets can reasonably support up to about 40 clients, within which PPFPL demonstrates good scalability.

\begin{figure}
  \centering
\begin{scriptsize}
  \subfloat[MNIST]
  {
      \label{3122311}  \includegraphics[width=0.32\linewidth]{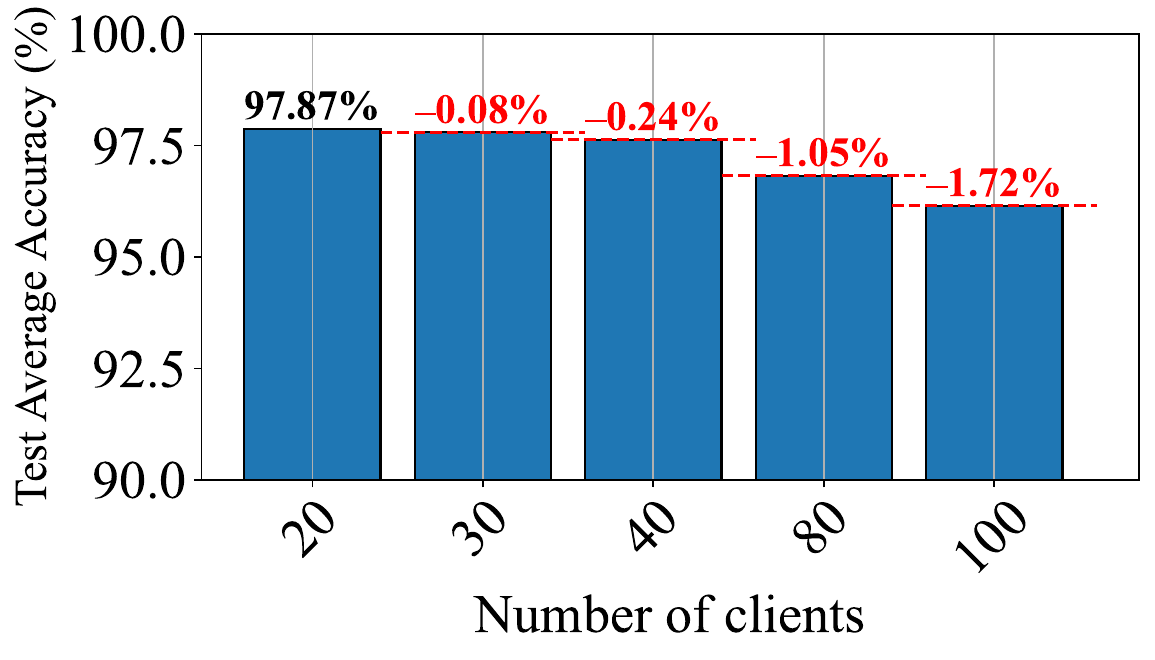}

  }
  \subfloat[FMNIST]
  {
      \label{3122312}  \includegraphics[width=0.32\linewidth]{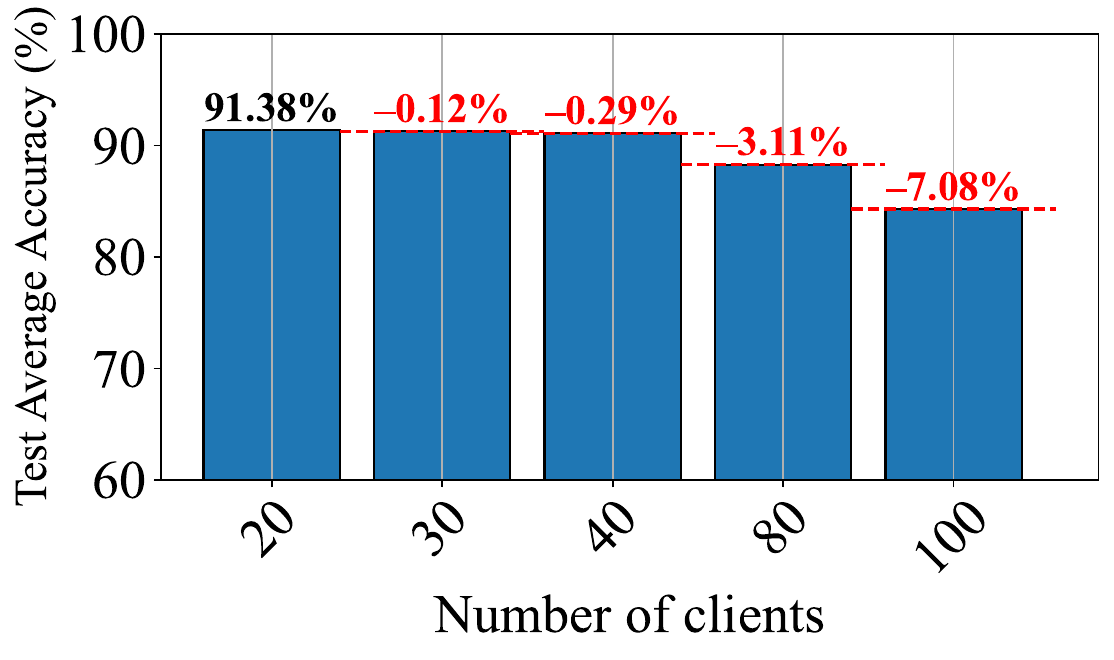}
  }
    \subfloat[CIFAR10]
  {
      \label{3122313}  \includegraphics[width=0.32\linewidth]{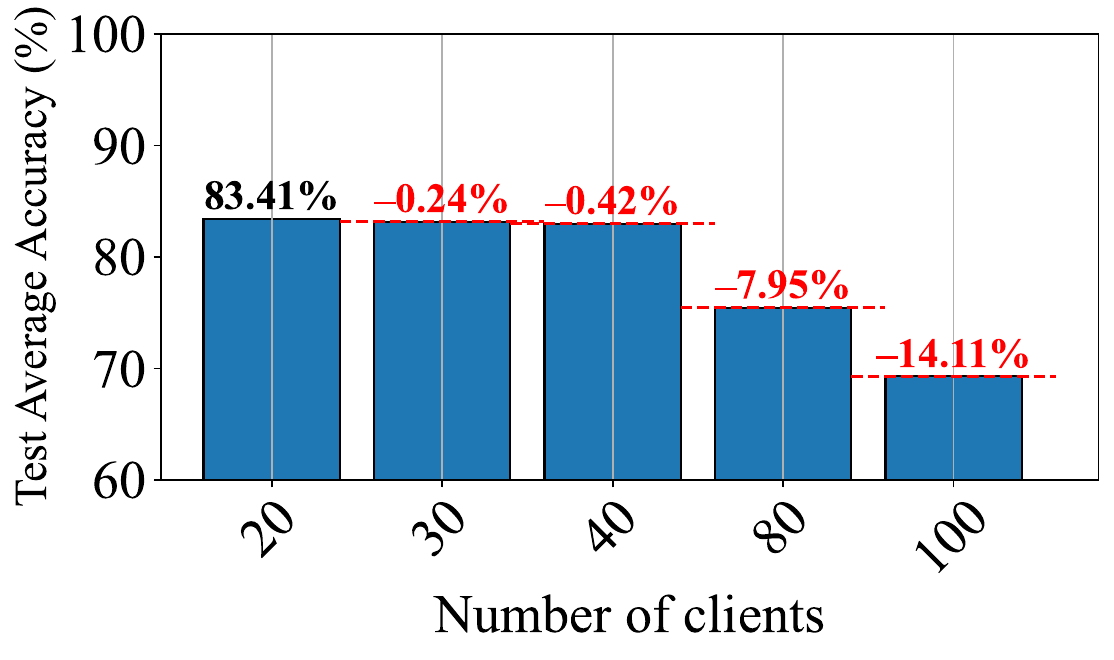}
  }
\caption{Scalability of PPFPL against  data poisoning   attacks over the MNIST, FMNIST, and CIFAR10.}
\label{312231}
\end{scriptsize}
\end{figure}

\subsubsection{Ablation Study}
To comprehensively evaluate the effectiveness of the secure aggregation protocol, we conduct  ablation experiments. Within the PPFPL framework, we remove  the normalization verification module (denoted as ``w/o NV'', where ``w/o'' stands for ``without'') and the secure aggregation protocol (denoted as ``w/o SAP''). 
We conduct experiments on CIFAR10 dataset, where 30\% of malicious clients launched attacks.
The experimental results are shown in TABLE \ref{3159191351}. 
Under both attack settings, PPFPL's performance degrades  regardless of which component is removed, demonstrating the effectiveness of normalization verification and the secure aggregation protocol. 
Notably, in the absence of malicious clients, removing the secure aggregation protocol actually led to optimal model performance. 
This phenomenon is attributed to the increase in the number of benign training samples, which enhances the overall FL performance.

\begin{table}[t]
\caption{Performance impact of different components in PPFPL.}
\tabcolsep= 0.54cm
\centering
\scalebox{0.85}{
\begin{tabular}{ccccc}
\hline\hline 
 \multirow{2}{*}{Schemes} &  \multirow{2}{*}{\makecell{Data \\Distribution}}  &\multicolumn{3}{c}{Test Average Accuracy (\%)}  \\
&&Feature   &Label  &No \\\hline 
 \multirow{2}{*}{PPFPL} &\textit{Avg}=3, \textit{Avg}=1&\textbf{82.31}  &\textbf{82.50}  &83.07 \\
&\textit{Avg}=3, \textit{Avg}=2&\textbf{83.16}  &\textbf{83.42}  &83.29 \\\hline 
 \multirow{2}{*}{w/o NV} &\textit{Avg}=3, \textit{Avg}=1&80.26  &80.29  &83.20 \\
&\textit{Avg}=3, \textit{Avg}=2&80.74  &80.61  &83.34\\\hline 
 \multirow{2}{*}{w/o SAP} &\textit{Avg}=3, \textit{Avg}=1&77.69  &77.24  &\textbf{83.53}\\
&\textit{Avg}=3, \textit{Avg}=2&78.03  &78.53 &\textbf{83.64} \\\hline 
\hline
\end{tabular}}
\justifying
Note:  ``Feature'' indicates feature attacks, ``Label'' indicates label attacks, and ``No''   indicates the absence of malicious clients. In addition, ``NV'' indicates the normalization verification module, ``SAP'' indicates the secure aggregation  protocol. Best rusults are in \textbf{blod}.
\label{3159191351}
\end{table}





\subsubsection{Efficiency Evaluation}

\begin{figure}[t]
  \centering
  \subfloat[]
  {
      \label{103116591}  \includegraphics[width=0.42\linewidth]{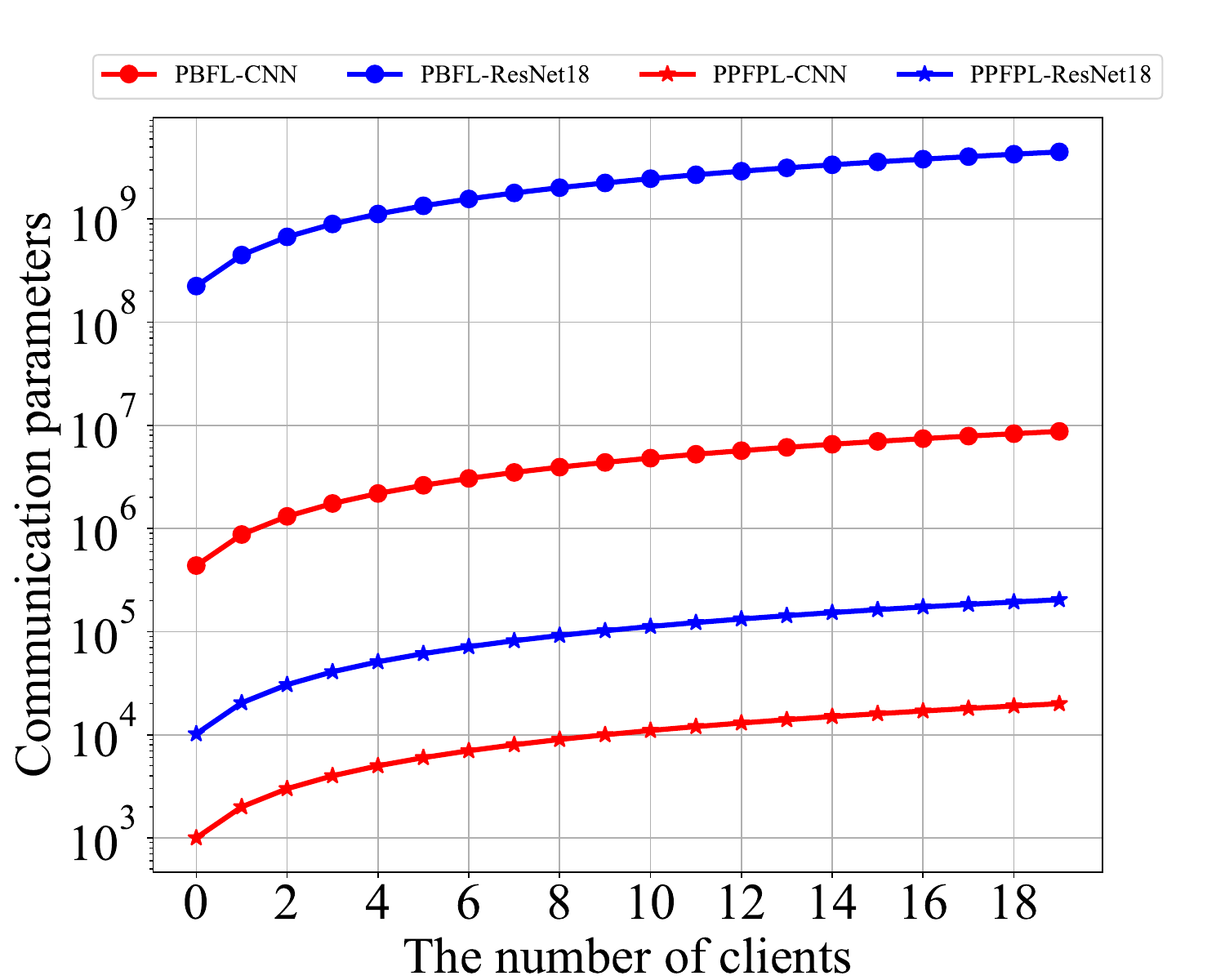}

  }
  \subfloat[]
  {
      \label{103116592}  \includegraphics[width=0.48\linewidth]{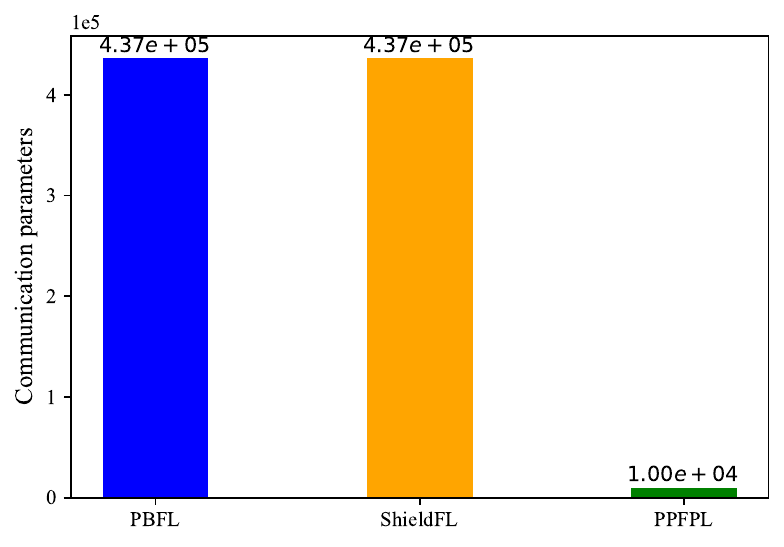}
  }
\caption{Communication parameters. (a) The number of parameters submitted by  clients under different local models. (b) The number of parameters submitted by  client using CNN model in one round under different scheme.}
\label{10311659}
\end{figure}

To evaluate the efficiency of PPFPL, we measure  the number of parameters submitted by clients, as reported in Fig. \ref{10311659}.
We   observe  that  both PBFL and ShieldFL require clients to submit a fixed number of parameters determined by the model architecture, resulting in identical parameter volumes  across all clients.
In contrast, the number of parameters submitted by clients in PPFPL is significantly lower  under the same model architecture.
This is because  the number of prototype parameters in PPFPL is determined by the output of the feature extractor rather than the model architecture itself.
Consequently, PPFPL effectively reduces the parameter volume transmitted to the servers, thereby decreasing both computational and communication overhead.

\begin{table}
\centering
\caption{The cost time of ciphertext operation using CNN model under PBFL, ShieldFL, and PPFPL.}
\tabcolsep=0.8cm
\scalebox{0.9}{
\begin{tabular}{lccc}
\hline
Operation & PBFL  & ShieldFL  & PPFPL \\
\hline
Encrypt  & 1.72s &1.42s &0.62s \\
Decrypt  &1.35s &1.19s &0.40s\\
\hline
\end{tabular}}
\label{2131403}
\end{table}

Furthermore, ciphertext operations are the primary  factor affecting the efficiency of our framework.
We evaluate  the time cost of ciphertext operations for PPFPL, PBFL, and ShieldFL. 
Notably, PBFL employs the CKKS, while ShieldFL is built upon two-door HE.
We measure  the time spent on encryption and decryption per round for each client as our evaluation metric.
As shown in Table \ref{2131403}, PPFPL's performance in both encryption and decryption is lower than that of PBFL and ShieldFL.
This advantage stems from the fact that PPFPL requires fewer prototype parameters submitted by each client, significantly reducing the computational cost of the ciphertext operations.

\subsection{Complexity Analysis}
\begin{table}
\centering
\caption{Computation  overload  and communication overload of client.}
\scalebox{0.9}{
\begin{tabular}{lcc}
\hline
Scheme & Computation  Overload  & Communication Overload   \\
\hline
PEFL \cite{9524709} & $\mathcal{O}(T_{tr})+\mathcal{O}(gT_{ch})$ &$\mathcal{O}(gP_{h})$ \\
ShieldFL \cite{9762272} &$\mathcal{O}(T_{tr})+\mathcal{O}(gT_{mul})+\mathcal{O}(gT_{ch})$  &$\mathcal{O}(gP_{h})$\\
PBFL \cite{9849010}  &$\mathcal{O}(T_{tr})+\mathcal{O}(gT_{ch})$ &$\mathcal{O}(gP_{h})$\\
PPFPL & $\mathcal{O}(T_{tr})+\mathcal{O}(T_{pro})+ \mathcal{O}(pT_{ch})$ & $\mathcal{O}(nP_{h})$ \\
\hline
\end{tabular}}
\label{3161102}
\end{table}

We analyze the computational   and communication overload for each client in PPFPL  and compared it with similar schemes  \cite{9849010}\cite{9762272}\cite{9524709}, as shown in TABLE \ref{3161102}.
The computational overload for  clients in PPFPL consists of   three components:  local model training,
prototype generation, and encryption.
Formally, the computational overhead  can be expressed as  $\mathcal{O}(T_{tr})+\mathcal{O}(T_{pro})+ \mathcal{O}(pT_{ch})$,  where  $p$ is the number of prototype parameters,   $T_{ch}$ denotes the  encryption time overhead, $T_{tr}$ denotes the time overhead for  local model training, and $T_{pro}$ denotes  the time overhead of prototype generation.
Notably, the overhead of prototype generation  is significantly  lower than that of local model training, i.e., $T_{tr} \gg T_{pro}$.
Other comparable  schemes \cite{9849010}\cite{9762272}\cite{9524709} adopt homomorphic encryption on gradients, resulting in an encryption overhead $\mathcal{O}(gT_{ch})$, where $g$ denotes  the number of gradient parameters.
Since the relation $g \textgreater  p$ holds,  the  overload  relation satisfies  $\mathcal{O}(gT_{ch})\textgreater \mathcal{O}(pT_{ch})$.
Hence,  the  relation of ciphertext operations is $\mathcal{O}(gT_{mul})+\mathcal{O}(gT_{ch}) \textgreater \mathcal{O}(gT_{ch}) \textgreater  \mathcal{O}(pT_{ch}) $.
In addition,  the client communication overhead   in PPFPL is $ \mathcal{O}(nP_{h})$,  where
$P_{h}$ denotes the communication complexity per parameter.
Compared with the schemes in  \cite{9849010}\cite{9762272}\cite{9524709}, the inequality $\mathcal{O}(gP_{h})\textgreater \mathcal{O}(pP_{h})$ holds, demonstrating the efficiency advantage of PPFPL.

\section{Conclusion}
\label{sec:concludesq}
This paper proposes PPFPL, a privacy-preserving federated prototype learning framework that effectively enhances federated learning performance in poisoned Non-IID data environments.
Through comprehensively theoretical analysis and experimental validation, we  demonstrate the significant advantages of PPFPL compared existing FL defense schemes.
The framework provides robust security guarantees for distributed computing scenarios, with important practical applications in privacy-sensitive domains such as finance and healthcare.

However, our  current framework has certain limitations. 
First,   frequent interactions in the two-server architecture increase system failures caused by communication latency. 
Second, existing experimental datasets  do not adequately capture  the complexity of real-world cross-silo scenarios, making it difficult to effectively evaluate prototypes generated by benign clients under varying data quality conditions.
This limitation may result in unfair evaluation outcomes in practical settings.
 Therefore, our future work will focus on two directions: (i) designing more efficient two-server interaction protocols while ensuring privacy and security.
(ii) constructing benchmark datasets that capture the characteristics of cross-silo scenarios to  evaluate  FL system performance.

\bibliographystyle{IEEEtran}
\bibliography{samplebase}

\end{document}